\def\BState{\State\hskip-\ALG@thistlm}
\newcommand{\norm}[1]{\| #1 \|}                 
\newtheorem{theo}{Theorem}
\newtheorem{prop}{Proposition}
\newtheorem{cor}{Corollary}
\newtheorem{lem}{Lemma}
\DeclareMathOperator*{\argmin}{arg\,min}
\DeclareMathOperator*{\SINR}{{\mathrm{SINR}}}
\DeclareMathOperator*{\SNR}{\mathrm{SNR}}   
\DeclareMathOperator*{\MQAM}{\sf{MQAM}}
\DeclareMathOperator*{\Dpwl}{\mathcal{D}_{\text pwl}}
\newcommand{\eqdef}{\triangleq}
\newcommand{\zz}{z_{p_{k+1} p_k}(i,p)} 
\newcommand{\zzopt}{z_{p_{k+1} p_k}^{*}(i,p)}   
\newcommand{\zzoptrelaxed}{z_{p_{k+1} p_k}^{**}(i,p)} 
\newcommand{\Dopt}{D^{*}}
\newcommand{\Doptrelaxed}{D^{**}}
\newcommand{\derya}[1]{\textit{\textcolor{blue}{Derya: #1}}}
\title{Transmission Delay Minimization via Joint Power Control and Caching in Wireless HetNets\vspace{0cm}}
\author{Derya~Malak, F. Volkan Mutlu, Jinkun Zhang, and Edmund M. Yeh
\thanks{Malak is with the Electrical, Computer, and Systems Engineering, Rensselaer Polytechnic Institute, Troy, NY 12180 USA (email: malakd@rpi.edu).} 
\thanks{Mutlu, Zhang, and Yeh are with Electrical and Computer Engineering Department, Northeastern University, Boston, MA 02115 USA (e-mail: fvmutlu@ece.neu.edu, zhang.jinku@northeastern.edu, eyeh@ece.neu.edu).} 
\thanks{Manuscript last revised: {\today}.}
}
\begin{document}
\maketitle
\begin{abstract}
A fundamental challenge in wireless heterogeneous networks (HetNets) is
to effectively utilize the limited transmission and storage resources in
the presence of increasing deployment density and backhaul capacity
constraints.   To alleviate bottlenecks and reduce resource consumption,
we design optimal caching and power control algorithms for multi-hop
wireless HetNets.  We formulate a joint optimization framework to
minimize the average transmission delay as a function of the caching
variables and the signal-to-interference-plus-noise ratios (SINR) which
are determined by the transmission powers, while explicitly accounting
for backhaul connection costs and the power constraints.

Using convex relaxation and rounding, we obtain a reduced-complexity
formulation (RCF) of the joint optimization problem, which can provide a
constant factor approximation to the globally optimal solution.  We then
solve RCF in two ways: 1) alternating optimization of the power and
caching variables by leveraging biconvexity, and 2) joint optimization
of power control and caching.  We characterize the necessary (KKT)
conditions for an optimal solution to RCF, and use strict
quasi-convexity to show that the KKT points 
are Pareto optimal for RCF. We then devise a subgradient projection algorithm to jointly update the caching and power variables, and show that under appropriate conditions, the algorithm converges at a linear rate to the 
%KKT points 
local minima of RCF, under general SINR conditions.  We support our analytical findings with results from extensive numerical experiments. 
\end{abstract}

\begin{IEEEkeywords}
Wireless cache, power, joint power-caching optimization, biconvex, alternating optimization, Pareto optimality.
\end{IEEEkeywords}

\section{Introduction}
\label{intro}
The energy and cost efficiencies of wireless heterogeneous networks (HetNets) incorporating macro cells (MCs) and small cells (SCs) are critical for meeting the performance requirements of 5G wireless networks \cite{Andrews2014}. Design of these HetNets entails the fundamental challenge of optimally utilizing both the bandwidth and storage resources of the network to reduce the download or transmission delay and the energy costs. With the increasing deployment density in wireless networks, the backhaul capacity becomes the bottleneck. It is well known that caching can alleviate this bottleneck by replacing the backhaul capacity with storage capacity at SCs \cite{Shanmugam2013}, i.e., moving content closer to the wireless edge. Caching improves the transmission delay performance by bringing the popular data items in SCs that are faster or computationally cheaper to access than MCs. %However, how to optimally design caching policies and the interplay between caching and transmission decisions, to minimize the transmission delay costs in a wireless HetNet topology remains an open challenge. This task is exacerbated by the power control decisions to be taken to control the interference in wireless HetNets and optimize resource usage.
To optimize resource usage in wireless HetNets, designing caching and power control policies and the interplay between caching and transmission decisions remains an open challenge. Enabling this will help control the interference and minimize the transmission delay costs in wireless HetNet topologies.

%%%%%%%%%%%%%%%%%%%%%%%%%%%%%%%%%%%%%%%%%%%%%%%%%%%%%%%%%%%%%%%%%%%%%%
\subsection{Current State of the Art and Motivation}
\label{relatedwork}

Research to date on cost optimization in the context of caching has focused on different perspectives. On one hand, there have been attempts to devise replacement algorithms that aim to optimize the caching gain, which is the reduction in the expected total file downloading delay achieved by caching at intermediate nodes. Simple, elegant, adaptive, and distributed approaches determining how to populate caches in a variety of networking applications abound. These include Che's analytical approximation to compute the probability of an item being in a Least Recently Used (LRU) cache %(via decoupling the dynamics of different contents) 
\cite{Che2002}, in the context of web caches \cite{breslau1999web}, and extension of Che's decoupling approach to provide a unified analysis of caching for different replacement policies in \cite{Martina2014}. A simple and ubiquitous algorithm for populating caches in peer-to-peer networking is path replication, i.e., once a request for an item reaches a cache, every downstream node receiving the response caches the item \cite{CohShe2002}. %Once the decision to store an item is made, the cache creates space for new item. 
Various cache eviction policies devised for a single cache primarily concern the optimization of the cache hit rate that describes the frequency of finding the searched item in the cache, or the latency that describes how long it takes for the cache to return a desired item  %like Least Recently Used (LRU), Least Frequently Used (LFU), First In First Out (FIFO), and Random Replacement (RR) 
\cite{dan1990approximate,Che2002,Martina2014}.

For networks of caches, time-to-live (TTL) caching is a better alternative \cite{berger2014exact}, \cite{Martina2014}, where items stay in a cache for predetermined times and are evicted when the timers expire. 
%The main metric they consider is the performance on successful queries, which is measured by the resulting expected (or average) search size. They also consider performance on insoluble queries, which is captured by the maximum search size allowed by the system\cite{CohShe2002}
An age-based-threshold policy where cache stores all contents requested more times than a threshold %which is determined by the content age and the frequency of requests 
\cite{leconte2016placing} captures temporal popularity changes via the Poisson shot noise model (SNM), and maximizes the hit ratio \cite{traverso2013temporal}. Hence, SNM is compatible with the TTL caching \cite{leonardi2015least}. %that allows us to account for temporal locality of content popularity. %, and capture geographical locality (most types of content will have strong regional bias, which can be captured with the SNM assumption). 
Traditional cache eviction policies \cite{Che2002,Martina2014,fricker2012versatile}, e.g., LRU, LFU, FIFO, RR, provide gain by making content available locally and compromise between hit rate and latency, %For instance, for stationary requests, LFU achieves the highest hit ratio \cite{fricker2012versatile}. 
and can be arbitrarily suboptimal in terms of the expected caching gain \cite{IoannidisYeh2016}. However, as devised in the landmark paper \cite{MaddahAli2013Journal}, novel coded caching approaches can provide a global gain that derives from jointly optimizing placement and delivery. Furthermore, geographic caching approaches that capture the spatial diversity of content, as in \cite{Blaszczyszyn2014}, \cite{Golrezaei2014TWC}, \cite{ji2015fundamental}, \cite{Malak2016twc}, help optimize the placement.

There is an extensive literature on physical layer aspects of caching in wireless networks design. For example, the gain offered by local caching and broadcasting is characterized in the landmark paper \cite{MaddahAli2013Journal}. 
Works also include the analysis of the scaling of the per-user throughput %(which is proportional to the inverse of the number of transmitted coded symbols per information bit) 
and collaboration distance \cite{Ji2014}, the wireless caching capacity region which is the closure of the set of all achievable caching traffic \cite{niesen2012caching}, as well as single-hop and device-to-device \cite{Ji2014},\cite{MaddahAli2013Journal},\cite{ji2015throughput},\cite{zhang2016efficient},\cite{Malak2016_D2DCaching}, and multi-hop caching networks \cite{jeon2015caching}, \cite{IoannidisYeh2017}.

\begin{comment} 
Unlike the host-centric paradigms \cite{breslau1999web}, \cite{CohShe2002}, (CDN)\cite{BorGupWal2010}, information centric networking (ICN) is a network layer solution that is based on the end-to-end principle \cite{jacobson2009networking,BerSilVas2016} where the focal point is data. ICN includes caching features at every node which has a major impact on retrieval of content. In the ICN paradigm, the network performs mapping between the request and the location of the content across the network. Data becomes independent from location, application, storage, and means of transportation, enabling in-network caching and replication. The expected benefits are improved efficiency, better scalability with respect to information/bandwidth demand and better robustness in challenging communication scenarios. The cache has rapidly changing cache states, higher request arrival rates and smaller cache sizes. The end user experience depends on how fast the requested content can be found and delivered \cite{bilal2017cache}. ICN architectures to optimally use both bandwidth and storage for efficient content distribution have been considered in \cite{YehHoCuiBuLiuLeo2014}. %Stability of optimal caching in ICN has been investigated in \cite{LiXieWenZha2013}.
\end{comment}

Recently, information centric networking (ICN) architectures have put emphasis on the traffic engineering and caching problems \cite{IoannidisYeh2016},\cite{dehghan2019utility} to effectively use both bandwidth and storage for efficient content distribution  \cite{YehHoCuiBuLiuLeo2014}, and optimize the network performance  \cite{mahdian2017mindelay}. %However, emerging ICN architectures lack a rigorous understanding of joint behavior of resource utilization and caching algorithms. 
Alternatively, there have been works focusing on jointly optimizing the caching gain and resource usage, e.g., a decentralized SC caching optimization, i.e., femtocaching, to minimize the download delay \cite{Shanmugam2013}, distributed optimization of caching gain given routing \cite{IoannidisYeh2016}, %efficient and cooperative caching algorithms to maximize the traffic volume served from cache and minimizing the bandwidth cost \cite{BorGupWal2010}, 
minimizing the total cost incurred in storing and accessing objects by building the Steiner trees \cite{BaeRajSwa2008}, jointly optimizing caching and routing to provide latency guarantees \cite{li2018dr}, and minimizing delay by taking into account congestion \cite{dehghan2015complexity}, and elastic and inelastic traffic \cite{abedini2013content}. 
Existing strategies have also focused on separately optimizing the caching gain or the throughput \cite{ChenPapKoun2016}, and optimizing spatial throughput via scheduling %a subset of users 
\cite{KeeBlaMuh2016}. From a resource management perspective, it is not sufficient to exclusively optimize caching or throughput, or delay. %However, \cite{KeeBlaMuh2016}, \cite{ChenPapKoun2016} have not incorporated the caching aspects. 

There exist several pertinent power control algorithms to optimize the resource usage in wireless networks \cite{xiao2003utility}, \cite{gupta2019optimization}, \cite{hanly1995algorithm}, \cite{hanly1999power}, \cite{hanly1996capacity}, or maximize throughput under latency considerations \cite{LiaDimTas2017}. However, delay optimization in wireless links is challenging %indeed a very hard task 
because of interference and congestion. There exist power-aware routing algorithms for packet forwarding to balance the traffic between high-quality links and less reliable links, such as \cite{DviCar2009}, \cite{ChaTas2000}, joint optimization of power control, routing, and congestion control \cite{XiYeh2008}, and joint optimization of radio and computational resources under latency and power constraints \cite{ChiHeXinCheWanLuStaAbd2006,BarSarLor2013,SarScuBar2015}, as well as delay-optimal computation task scheduling at the mobile edge \cite{LiuMaoZhaLet2016}, and the minimum delay routing algorithm \cite{Gallager1977}. In addition, fog optimization-based effective resource allocation schemes for wireless networks have been devised in \cite{OueCalStrBar2015} to achieve high power efficiency while keeping a very high Quality of Experience under latency constraints, and in \cite{yemini2019fog} to maximize the sum rate of cellular networks.
%several optimization methods to solve the NP-hard joint channel access and power allocation problem within each virtual cell \cite{yemini2019fog}
However, none of these approaches or research on ICN architectures has jointly designed traffic engineering and cache placement strategies to optimize network performance in view of traffic demands.

Several papers have studied complexity and optimization issues of cost minimization as an offline, centralized caching problem under restricted topologies \cite{Shanmugam2013},\cite{dehghan2015complexity},\cite{BorGupWal2010},\cite{fleischer2006tight},\cite{CohShe2002},\cite{BaeRajSwa2008}. Despite the advent of different caching solutions, to the best of our knowledge, none of the above protocols focuses on the joint optimization of caching and power allocation or provides algorithmic performance guarantees in terms of the achievable costs via caching. %What makes the problem challenging is these are coupled. 
Although most of these strategies suggest that intermediate caching can alleviate the average download delays, it is hard to quantify how this delay is affected by the resource allocation strategy in a HetNet setting. In this paper, we focus on jointly optimizing the network level performance in terms of transmission delay and caching, which can be increasingly skewed away from a strategy that places the items without accounting for the transmission delay\footnote{In this paper, we primarily consider the transmission delay assuming a lightly loaded system which we detail in Sect. \ref{systemmodel}.}.

\subsection{Methodology and Contributions}
\label{contributions}
In this paper, we study jointly optimal caching and power control for arbitrary multi-hop wireless HetNet topologies with nodes that have caching capabilities. Note that as the networks are becoming increasingly heterogeneous, MCs and SCs can co-exist in 5G, and all networks beyond it \cite{Andrews2014}. Dense SC deployment is the key for 5G networks to enhance the capacity, rendering a cost-efficient backhaul solution a key challenge.  

For a given caching HetNet topology with multi-hop transmissions\footnote{Routing is fixed and each request is a pair that is jointly determined by the item requested and the fixed multi-hop path traversed to serve this request.}, a set of finite cache storage capacities, a demand distribution on the content items known a priori, and a subset of nodes designated to store specific items, we devise algorithms for jointly optimal caching and power control to minimize the average transmission delay cost, i.e., the average download delay, per request. 
While end-to-end delay in systems is due to several key sources, including transmission delay, propagation delay, processing delay and queuing delay, we are primarily interested in a lightly loaded regime for which congestion-dependent latency costs can be neglected, and in which the link lengths are much smaller than the propagation speed of the signal, %over the wireless medium, %nodes only need to perform simple packet classification for forwarding and they do not employ complex encryption or adaptation algorithms for modifying content items, 
and each node can sustain a high service rate relative to the average rate at which items are arriving to be serviced. Hence the transmission delay is the major delay component. %ignoring other sources of delay.  
To accurately determine the transmission delay, we explicitly account for the transmission power, backhaul costs, and wireless interference.

Finding the optimum placement of files is proven to be NP-complete \cite{Shanmugam2013}. Hence, jointly optimal power control and caching to minimize the transmission delay is also NP-complete. We emphasize that our joint optimization framework is significantly different from the traditional approach which maximizes the caching gain only. This approach has been widely studied in the literature, such as in \cite{Shanmugam2013,IoannidisYeh2016,IoanYeh2018tnet,IoannidisYeh2017} and their follow-up works, where the link costs are fixed. This assumption is only true when the links are granted orthogonal frequencies and do not interfere, and the transmission powers are fixed, which is not the case in HetNets. Furthermore, when link costs are deterministic, caching gain always improves with increasing link costs. This requires high transmission powers and violates the purpose of cost minimization. In other words, savings via intermediate caching do not inform us about the actual achievable delay-cost via caching. This justifies our proposed framework in Sect. \ref{delayminimization}, where we consider the minimum achievable cost via caching by taking into account the joint behavior of link costs under resource constraints. 

Our main technical contributions include the following:
\begin{itemize}
    \item {\bf A reduced-complexity formulation (RCF) to the joint optimization problem.} We provide a constant factor approximation to the minimum average transmission delay-cost $D^o(X,S)$ of serving a request via jointly optimizing binary caching variables $X$ and real valued transmission powers $S$. 
    Using convex relaxation techniques, %of the integral caching constraints, 
    we obtain an RCF of the joint optimization problem, with cost function $D(Y,S)$ which is not jointly convex, where $Y$ denote the relaxed caching variables. %We denote the optimal solution of the RCF $D(Y,S)$ by \MD. 
    We then round $Y$ to obtain an integral solution within a constant factor from the optimal solution to $D^o(X,S)$. 
    
    \item {\bf Sufficient conditions for biconvexity of $D(Y,S)$.} We provide a sufficient condition for the convexity of RCF in the logarithm of powers which yields a biconvex RCF objective. %We note that 
    This %sufficient 
    condition pertains %corresponds 
    to the high SINR regime and does not hold for general SINR values. We jointly optimize RCF under the biconvexity condition to provide an alternating optimization solution to minimizing $D(Y,S)$. 
    
    \item {\bf Joint optimization framework.} We jointly optimize RCF under the general setting which is not jointly convex. We obtain the following results: {\bf a)} $D(Y,S)$ is strictly quasi-convex, {\bf b)} necessary conditions for optimality of $D(Y,S)$, {\bf c)} generalized necessary conditions for optimality of $D(Y,S)$ assuming strict convexity of $\mathcal{D}_S$, and {\bf d)} Pareto optimality of the solution to $D(Y,S)$. %and {\bf e)} Pareto optimality of the rounded solution under the original cost function $D^0(X',S')$. 

    \item {\bf Subgradient projection algorithm.} We provide a subgradient projection algorithm  
    which %attains the necessary conditions for optimality of $D(Y,S)$, along with the convergence guarantee and rate. 
    is guaranteed to converge to a local minimum of the RCF. Due to the non-differentiability and non-convexity of the relaxed problem, %at a general SINR scenario, 
    we propose a subgradient projection algorithm with a modified Polyak's step size. We also give a simple method to calculate the projection and show that the algorithm converges at a linear rate.
    
\end{itemize}

%Organization paragraph
Organization of the rest of the paper is follows. 
In Sect. \ref{systemmodel} we detail the wireless HetNet topology where each node has caching capability and adjustable transmission powers. We establish a transmission delay model of serving a request using multiple hops where the transmission delay is a nonlinear function of the signal-to-interference-plus-noise ratios (SINR). 
In Sect. \ref{delayminimization} we detail the joint optimization of delay in power and caching variables. This section contains the main technical contributions which are the necessary and sufficient conditions for joint optimality, and algorithms to attain the optimal points with provable theoretical guarantees. 
In Sect. \ref{numericalresults}, we numerically verify our analytical findings.  In Sect. \ref{conc}, we conclude the paper by pointing out the use cases including mobile edge and fog computing.

%%%
%%%%%%%%%%%%%%%%%%%%%%%%%%%%%%%%%%%%%%%%%%%%%
\section{Wireless Caching Model}
\label{systemmodel} 
We consider a multi-hop wireless HetNet topology consisting of different types of nodes, e.g., small cells (SCs), macro cells (MCs), and users. The network serves content requests routed over different paths. To alleviate the impact of limited backhaul capacity, availability, and long-distance reach it is desired that the network serves the requests via the SCs and multi-hop transmissions. While each MC or SC might have a fiber connection to the backhaul network in 5G, multi-hop relaying\footnote{Since the transceiver is the major source of power consumption in a node and long distance transmission requires high power, in some cases multi-hop routing can be more energy efficient than single-hop routing \cite{fedor2007problem,pevsovic2010single}.} is essential due to radio range limitations. However, increasing the number of hops arbitrarily may lead to an additional energy consumption incurred by relays. As a result, long-hop routing %(sending over a smaller number of longer hops vs over many short hops) 
is a competitive strategy for many networks \cite{haenggi2005routing}. Furthermore, from a cost-effective perspective, each MC should allocate its resources to a smaller number of users, which balances the traffic between SCs and MCs \cite{Andrews2014}. 
We represent the network as a directed graph $\mathcal{G}(V, E)$ where $V$ is the collection of nodes such that a node $v\in V$ is either an MC, an SC or a user. We assume that all nodes $V$ transmit on the same frequency\footnote{If subsets of nodes are allocated different frequencies, as in OFDMA-based networks, then we can determine the resulting subset of interfering nodes \cite{dhillon2012modeling}. This also reduces the interference and improves the coverage performance.}, i.e., all transmissions interfere with each other. In $\mathcal{G}$, $E$ is the set of edges, where given $v,\ u \in V$, the edge $(v,u) \in E$ denotes the transmission link from $v$ to $u$. 
%There is an edge between the two consecutive hops of the data transmission, i.e., between $u$ and $v$ if  $(u, v) \in E$. In the multi-hop setting, edges represent the connections from the MCs to the SCs, and from the SCs to the users where the requests originate from. %the model is indeed weighted and similar to random connection model.
In Fig. \ref{Backhaul}, we illustrate the network and possible multi-hop paths where the users request different items. We provide the notation for the proposed multi-hop wireless network model in Table \ref{table:tab1}.

\begin{figure}[t!]
\centering
\includegraphics[width=\columnwidth]{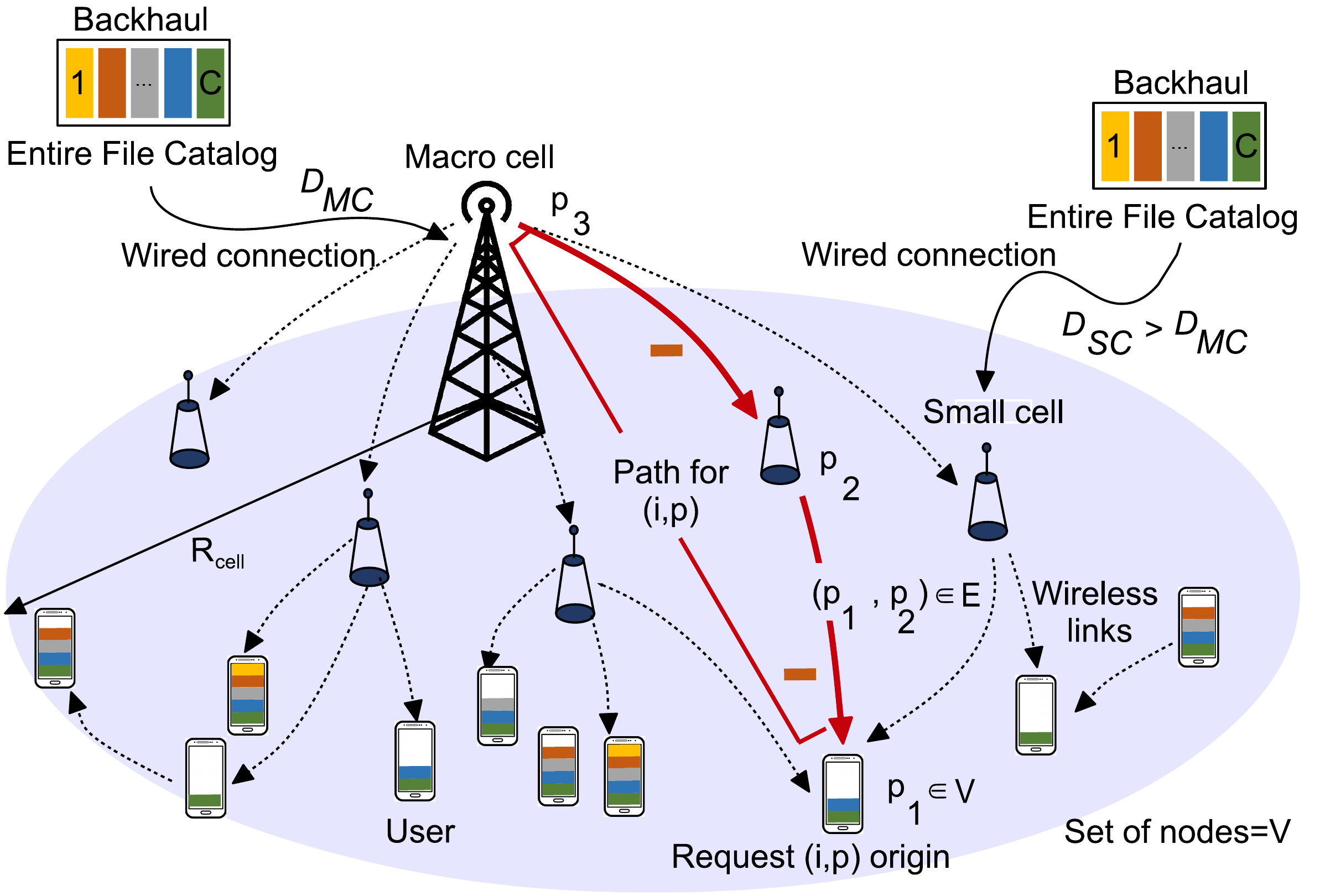}
\caption{A caching network scenario with possible connections between the users, SCs or MCs, and to the backhaul, where the backhaul cost $D_{SC}$ of SC is typically higher than the backhaul cost $D_{MC}$ of MC connections \cite{LiaDimTas2017}. A path $p=\{p_1,\,p_2,\,p_3\}$ for request $(i,p)$ is indicated where $p_1$ is a user where the request $(i,p)$ is originated, $p_2$ is a SC, and $p_3$ is the MC.}
\label{Backhaul}
\end{figure} 

The caching model is as follows. The entire set of content items, i.e., the catalog, is denoted by $\mathcal{C}$. Each item in $\mathcal{C}$ is of equal size. Each node is associated with a cache that can store a finite number of content items. The cache capacity at node $v\in V$ is $c_v$. The variables $x_{vi}\in\{0,1\}$ indicate whether $v\in V$ stores item $i\in \mathcal{C}$. Due to this finite capacity constraint, $\sum\nolimits_{i\in\mathcal{C}} {x_{vi}}\leq c_v$, $\forall v\in V$. Each item $i\in\mathcal{C}$ is associated with a fixed set of designated sources $\mathcal{S}_i\subseteq V$, i.e., nodes that always store $i$: $x_{vi}=1$, $\forall v\in\mathcal{S}_i$.  The designated sources could be user nodes, SCs or MCs. Items that are not available from the SCs need to be transmitted by the MCs with low-rate backhaul but high storage capacity.

Users issue requests for content items. The set of all requests is denoted by $\mathcal{R}$. A request $r\in\mathcal{R}$ is a pair $(i, p)$ that is jointly determined by the item $i\in \mathcal{C}$ being requested, and the fixed path $p$ traversed (request is forwarded from the user toward a designated source over a fixed path) to serve this request. The routing strategy of a user with respect to request $(i,p)\in\mathcal{R}$ is predetermined, e.g., the shortest path in terms of the number of hops to the nearest designated source. We assume that (i) %$(i,p)$ refers to a string of requests, where each single request is fulfilled separately,
the collection of requests for the same content item $i$, i.e., $\{p:\ (i,p)\in\mathcal{R} \}$,  are served separately instead of being aggregated, 
(ii) the response of request $(i,p)$ travels the same path $p$, in the reverse direction,
(iii) different frequency bands are used for the uplink and downlink%\footnote{The forward and reverse wireless links have different bandwidth utilization, received power levels, and quality-of-service (QoS) requirements \cite{Andrews2014}.}
,
(iv) transmission delays are solely due to response messages carrying desired items %, due to the fact that the size of requests is much smaller than the size of the corresponding response messages carrying the content items;
assuming that request forwarding and cache downloads are instantaneous.%, i.e., occur at a smaller timescale compared to the request arrival process. 

Request rates are known a priori, where choices of requested items are independent. The arrivals of requests are Poisson where the arrival rate of $r=(i,p)$ is $\lambda_{(i,p)}$. A path $p$ on $\mathcal{G}$ of length $|p| = K$ is a sequence $\{p_1,p_2,\hdots,p_K\}$ of nodes $p_k \in V$ such that edge $(p_k, p_{k+1}) \in E$, for $k \in\{1,\hdots, |p|-1\}$. %Given $p\in \mathcal{P}_{(i,s)}$ and a $v\in p$, the parameter
Let $k_p(v)=\{k\in\{1,\hdots,|p|\}:\,p_k=v\}$ denote the position of $v$ in $p$. For each request $(i,p)$, $p_1$ is the requesting user and $p_{|p|}$ is the designated source of item $i$, and we assume that $p$ is a simple path, i.e., $p$ contains no loops. %In $p$, the first hop $(p_1,p_2)$ is between user $p_1$ and node $p_2$ which is either a user, a SC, or a MC, and the last hop is $(p_{|p|-1},p_{|p|})$ where $p_{|p|}$ is a designated source.

%The delay experienced over the uncached path consists of an initial access delay with average $d^b_i$ and a queuing (waiting plus service) delay with average $1/(\mu - \lambda q)$, assuming $\lambda q$ is the request rate on the queue \cite{dehghan2015complexity}.
End-to-end delay includes several key components, such as transmission delay, propagation delay, processing delay, and queueing delay. 
%In this paper, we assume the link lengths are much smaller than the propagation speed over the wireless medium,  nodes only need to perform simple packet classification for forwarding and they do not employ complex encryption or adaptation algorithms for modifying content items, and each node can sustain a high service rate relative to the average rate at which items are arriving. %to be serviced. 
In this paper, we primarily focus on lightly loaded systems, where transmission delay is the dominant component and the other delay components are negligible. 
%
%Based on the measurements of latency on the backbone and %wireless interface, wireless connections, links might have different transmission delays. For example, if there is no congestion in backbone, even though the transmission delay is not completely eliminated, the gain achievable via intermediate caching might be negligible. On the other hand, if the backbone is congested, then overall cost can be reduced significantly by intermediate caching. In addition to that, since the MCs have higher transmit powers compared to the SCs and the users, they can overwhelm the SCs by causing high interference to the links from SCs to users and decrease their throughput. Furthermore, we assume that request forwarding delays are negligible compared to the response messages transmitted in the downlink direction, and different frequency bands are used for the uplink and downlink.
%
%In the proposed multi-hop wireless HetNet setting, 
We assume there is one queue for each link $(u,v)\in E$ that serves in a first-in-first-out (FIFO) manner all requests traversing $(u,v)$. %where the head of the queue is processed first. 
%while the queues of different outgoing links from $v$ can operate simultaneously. Note that each queue operates in unicast mode, even if the same item is being sent on different queues. 
%Each node can simultaneously send the same item on different links to serve multiple requests. In other words, each node aggregates the queues for the user requests on the same communication link, however, the transmissions are unicast even if the same item is being requested by multiple users. 

To determine the transmission delay of link $(v,u)\in E$ corresponding to request $(i,p)$, we first derive the signal-to-interference-plus-noise ratio (SINR) on link $(v,u)$, which we denote by ${\SINR}_{vu}(S)$, where $S=[s_{vu}]\in {\mathbb{R}}^{|E|}$ represents the set of transmission powers at all links $(v,u) \in E$. 
To decode the requests $(i,p)$ traversing link $(u,v)$, 
we calculate the SINR on link $(v,u)$, where we treat all other transmissions from nodes $j\in V\backslash v$ %to their intended receivers
, as well as the transmissions from $v$ to $w\neq u$ as noise. %Given the set of transmission powers $S=[s_{vu}]_{(v,u)\in E}$ at all links, %where nodes aggregate the requests according to the FIFO management scheme, 
Therefore, the SINR on link $(v,u)$ is given as
\begin{align}
\label{SINR_general_expression}
%{\SINR}_{vu}(S)&=\frac{\sum\limits_{(i,p)\in\mathcal{R}:(u,v)\in p} G_{vu}s_{vu}(i, p)}{N_u+  \sum\limits_{j\in I_u}\sum\limits_{(i',p'):(u,j)\in p'}G_{ju}s_{ju}(i',p')},
%{\SINR}_{vu}(S)&=\frac{ G_{vu}s_{vu}}{N_u+  \sum\limits_{j\in I_u}G_{ju}s_{ju}},
{\SINR}_{vu}(S)&=\frac{ G_{vu}s_{vu}}{N_u+  \sum\limits_{j\in V\backslash v}G_{ju}\sum\limits_{w}s_{jw}+G_{vu}\sum\limits_{w\neq u}s_{vw}},
\end{align}
where $N_u$ is the receiver noise power at node $u$, and $s_{vu}$ is the transmit power from $v\in V$ to $u$. The total transmit power of node $v$ is $\sum\nolimits_{u:(v,u)\in E} s_{vu}$. 
%Given $(u,v)\in E$, requests $(i,p_1),\,(i,p_2),\hdots\,$ such that $p_1,\,p_2,\,\hdots\,\in p$, consider the cases of (a) no aggregation of interest packet (IP) where each request is counted separately, for which the total transmit power of node $v\in V$ for link $(v,u)$ is $\sum\nolimits_{(i,p)\in\mathcal{R}} {s_{vu}(i,p)}\geq 0$, and (b) aggregation of interest packet (IP) where each content is counted separately, for which  the power of node $v\in V$ is $\sum\nolimits_{i} {s_{vu}(i,p)}$. In the proposed scenario, we assume no aggregation.
%The set $I_u = \{v \in V\backslash u: (v,u) \in E\}$ %should we exclude $v$ from $I_u$?
%represents the collection of transmitters $v \in V$ such that edge $(v, u)$ is in $E$. The indices of the items requested by $u$ is $(i,p)\in\mathcal{R}:(u,v)\in p$, where the notation $(u,v)\in p$ indicates that link $(u,v)$ lies on path $p$. 
The parameter $G_{vu}$ is the channel power gain that includes only path loss, where we use the standard power loss propagation model, i.e., $G_{vu}=r_{vu}^{-n}$ given distance $r_{vu}$ between $v$ and $u$, and the path loss exponent $n>2$ \cite{rappaport1996wireless}. 
%The channel power gain might also incorporate the random channel effects such as Rayleigh fading 
%Each user experiences Rayleigh fading with unit mean. Then for a given user the channel fading power follows an exponential distribution with mean $1$ (identical distribution for all users). and shadowing which is left as future work. 
The signal for request $(i,p)$ over link $(v,u)$ is decoded regarding all other signals as noise, for all $(i,p) \in {\mathcal R}$ and $(v,u) \in E$. Thus, in our model the transmission delays are coupled, in contrast to \cite{Shanmugam2013}, \cite{IoannidisYeh2017}, because the decoding model captures the interference due to simultaneous wireless transmissions. 
%The second term in the denominator of (\ref{SINR_general_expression}) is interference power at the destination $u$, which is the sum of the received powers from all transmitters other than the target transmitter and is treated as noise,
%\begin{align}
%\sum\limits_{j\in I_u}\sum\limits_{(i',p'):(u,j)\in p'}G_{ju}s_{ju}(i',p'),    
%\end{align} where $G_{ju}$ is the channel power gain from the interferer ${j\in V}\backslash \{v\}$ to %the user $u$. 
Because the SINR analysis in (\ref{SINR_general_expression}) is for a single frequency band, the set of active nodes with nonzero transmission powers causes interference to the unintended receiver node. Employing OFDMA-based schemes allows frequency multiplexing by moving the interfering nodes to orthogonal resources and eliminates the out-of-band interference, and improves the SINR quality. However, we leave this extension to future work.

To model the wireless transmission delay on link $(v,u)\in E$, we use the following composite relation\footnote{Practical adaptive modulation and coding schemes operate at lower $\SINR$ values %on $(v,u)$ in (\ref{SINR_general_expression}) 
\cite[Ch. 4.2, Ch. 9.3]{Goldsmith2005}. %For $\MQAM$, we have $R=\log(M)$ bits/message. Hence, in the Shannon regime the signal-to-noise ratio (SNR) is $\SNR_{\Shannon}=2^R-1=M-1$. The probability of symbol error for $\MQAM$ is $P_e\cong Q\Big(\sqrt{\frac{3}{M-1}\SNR_{\MQAM}}\Big)$. 
For example, for $\MQAM$ %$\SNR_{\MQAM}=\SNR_{\Shannon}\Gamma$, where 
the gap from the Shannon $\SNR$ as function of the symbol error probability $P_e$ is $\Gamma=\frac{1}{3}(Q^{-1}(P_e))^2$. %and is independent of the rate. %, or $M$. %The channel capacity of a given channel is the highest information rate that can be achieved with arbitrarily small error probability, i.e., $P_e\to 0$. 
%$\Gamma=\SNR_{\MQAM}-\SNR_{\Shannon}$ in dB
%$\SNR_{\rm req}(P_e, Mod/Cod type)$
}:
\begin{align}
\label{wireless_delay}
f({\SINR}_{vu}(S)) = \frac{1}{\log_2(1+{\SINR}_{vu}(S))},
\end{align}
%this is the inverse of channel capacity. Why is this a good model for delay? You should also specify that you are ignoring other aspects of delay (congestion at that node, processing time, etc.). 
which is the delay in number of channel uses per bit corresponding to the data rate of link $(v,u)$. This model captures interference, and thus provides a more sophisticated way of modeling delay in a lightly loaded network than simple hop count. 
When the SINR is high, (\ref{wireless_delay}) yields a low transmission delay and vice versa. %, hence capturing the coverage characteristics of the wireless networks.
From (\ref{SINR_general_expression})-(\ref{wireless_delay}), it is clear that $f({\SINR}_{vu}(S))$ is convex and decreasing in ${\SINR}_{vu}(S)$ but non-convex in $S$.

Our goal is to jointly optimize the transmission power allocations along with the caching decisions to minimize the average transmission delay of requested items over the multi-hop network. We next formulate this problem.

%the cache placement with the transmission power allocations so as to minimize the transmission delay over multi-hops. Optimizing transmission powers at each node we can reduce the average transmission delay using (\ref{wireless_delay}). Along with that optimizing content placement at each node, we can eliminate the need for excessive processing at each node for cache updates and transmissions (forwarding) requirements among the nodes. Next in Sect. \ref{delayminimization}, we formulate the joint optimization problem.  

\begin{table*}[t!]\small%\footnotesize
\setlength{\extrarowheight}{2pt}
\begin{center}
\begin{tabular}{l | l }
{\bf Definition} & {\bf Symbol}\\ 
\hline
%Set of nodes & $V$\\
Cache capacity of $v\in V$; Catalog size; Binary caching variables & $c_v$ ; $|\mathcal{C}|$; $X=[x_{vi}]_{v\in V,\, i\in\mathcal{C}}$\\
%Request distribution & $p_r\sim \text{Zipf}(\gamma_r)$ \derya{notation conflict}\\
%Set of paths request $(i,p)\in\mathcal{R}$ can follow & $\mathcal{P}_{(i,p)}$\\
Path of length $|p| = K$ corresponding to request $r=(i,p)$ %$p\in \mathcal{P}_{(i,p)}$ of length 
& $p=\{p_1,\hdots, p_K\}$, $p_k\in V$\\
Arrival rate of request $(i,p)\in\mathcal{R}$ %according to independent Poisson processes 
; Requests of different types (item, path)
& $\lambda_{(i,p)}>0$; $(i,p)\in\mathcal{R}$\\ 
Distance from $v$ to $u$ & $r_{vu}$\\ 
Path loss exponent & $n>2$\\
Designated sources for $i\in\mathcal{C}$ & $\mathcal{S}_i\subseteq V$\\
Total transmit power of node $v\in V$ & $\hat{s}_v$\\
%Power transmitted from transmitter $v\in V$ to deliver the request $(i,p)$ of $u$ & $s_{vu}(i,p)$\\
%Channel power gain from $v\in V$ to $s\in V$ & $G_{v,(i,s)}$\\
%Channel power gain from the interferers $j\in I_u$ to $u\in V$ & $G_{ju}$\\
%Interference power of request $(i',p')\in\mathcal{R}$ at source $v\in V$ & $G_{ju}s_{ju}(i',p')$\\
%Rayleigh fading random variable & $g_{v,(i,s)}\sim \exp(\mu)$\\
%SINR threshold & $\gamma$\\
Noise power at receiver $u\in V$ & $N_u$\\
SINR function on link $(v,u)\in E$; Delay function on link $(v,u)\in E$ & ${\SINR}_{vu}(S)$; $f({\SINR}_{vu}(S))\geq 0$\\
%Discrete transmit power levels & $P_{vi_u}\in\{P_1,\hdots, P_M\}$, $v\in V$\\
Global minimum objective and solution of the original problem (\ref{OriginalDelayOptimization}) & $D^*$; $\left(Y^*,S^*\right)$\\
Global minimum objective and solution of the RCF problem (\ref{mindelay_nonlinearproblem}) & $D^{**}$; $\left(Y^{**},S^{**}\right)$\\
Local minimum objective and solution of (\ref{mindelay_nonlinearproblem}) generated by Algorithm \ref{Projected_Subgradient_Algo} & $D^*_{sub}$; $\left(\boldsymbol{y}^*_{sub},S^*_{sub}\right)$\\
\hline
\end{tabular}
\end{center}
\caption{Notation.}%\derya{feel free to suggest notation here.}
\label{table:tab1}
\end{table*}

%%%

%%%%%%%%%%%%%%%%%%%%%%%%%%%%%%%%%%%%%%%%%%%%%
\section{Joint Power Control and Caching Optimization for Transmission Delay Minimization}
\label{delayminimization}

In this section, we formulate the delay minimization problem that jointly considers power control and caching allocations. Due to its NP-hard nature, in Sect. \ref{caching_optimization} we first develop a RCF based on convex relaxation and its optimal solution, which yields an integral solution (via rounding) whose cost is within a constant factor from that of the optimal solution to the original problem. Next in Sect. \ref{power_optimization} we provide a sufficient condition for the convexity of RCF in the logarithm of powers which yields a biconvex objective. This sufficient condition corresponds to the high SINR regime. 
Later in Sect. \ref{joint_optimization} we jointly optimize RCF, first under the assumption of biconvexity so as to provide an alternating optimization formulation, and second under the general setting which is not jointly convex, we provide various results on the RCF objective. We demonstrate {\bf a)} strict quasi-convexity of $D(Y,S)$, {\bf b)} necessary conditions for optimality of $D(Y,S)$, {\bf c)} generalized necessary conditions for optimality of $D(Y,S)$ under strict convexity of $\mathcal{D}_S$, and {\bf d)} Pareto optimality of the solution to $D(Y,S)$. %, and {\bf e)} Pareto optimality of the original cost function $D^0(X',S')$ with the integer constraints. 
Finally in Sect. \ref{algorithms_optimization} we provide a subgradient projection algorithm that attains the necessary conditions, along with a linear convergence rate guarantee.

%%%%
\subsection{Caching Optimization for RCF}
\label{caching_optimization}
A goal in caching systems is to minimize the expected total file downloading delay, i.e., the expected delivery time of content items averaged over the demands and the cache placement. Since end-to-end delay in our setup is mainly due to the transmission delay, by letting matrix $X = [x_{vi}]  \in \{0,1\}^{|V |\times |C|}$ denote the global caching strategy, we can express the cost function for serving a request $(i,p)$ in terms of the  transmission delay as
\begin{align}
\label{original_cost}
    D_{(i,p)}^o(X,S)=\sum\limits_{k=1}^{|p|-1}f({\SINR}_{p_{k+1}p_k}(S))\prod\limits_{l=1}^k (1-x_{p_l i})
\end{align}
where $D_{(i,p)}^o(X,S)$ includes the transmission delay of an edge $(p_{k+1}, p_k)$ in the path $p=\{p_1, \hdots p_k\}$ if none of the nodes $p_1, \hdots p_k$ caches $i$. If the request is well-routed, no edge (or cache) appears twice in (\ref{original_cost}). The last node of $p$ is the designated source, hence a request is always served. Let $D^o$ be the aggregate expected cost in terms of the average number of channel uses per bit, which equals
\begin{align}
\label{original_aggregate_cost}
D^o(X,S)=\sum\limits_{(i,p)\in\mathcal{R}}{\lambda_{(i,p)}{D_{(i,p)}^o(X,S)}}.
\end{align}
%Caching at intermediate hops in a path helps reduce the transmission delay. 
The gain of intermediate caching is equivalent to the achievable reduction in the overall transmission delay. An upper bound on the expected cost is obtained when all requests are served by the designated sources at the end of each path, i.e.,
\begin{align}
\label{Dub}
   D^{\rm{ub}}(S)=\sum\limits_{(i,p)\in\mathcal{R}}{\lambda_{(i,p)}{\sum\limits_{k=1}^{|p|-1}f({\SINR}_{p_{k+1}p_k}(S))}}. 
\end{align}

Our primary objective is to solve the problem 
\begin{align}
\label{OriginalDelayOptimization}
\min \{D^o(X,S): X\in \mathcal{D}_X,\, S\in\mathcal{D}_S\}, 
\end{align}
where $\mathcal{D}_X$ is the feasible set of %matrices 
$X \in \mathbb{R}^{|V |\times |C|}$ satisfying the capacity, integrality, and source constraints:
\begin{align}
\label{integrality_constraints}
\mathcal{D}_X=\Big\{\sum\limits_{i\in \mathcal{C}}{x_{vi}} &\leq c_v,\,\, \forall v \in V,\,\,
x_{vi} \in \{0, 1\},\,\, v \in V,\,\, i \in \mathcal{C};\nonumber\\
x_{vi} &= 1,\,\, \forall i \in \mathcal{C},\,\, v \in \mathcal{S}_i\Big\}.
\end{align}

%transmit powers satisfy $w\in $:
The set of constraints $\mathcal{D}_S$ is for the power or resource budget. The feasible set of $S$ is specified by the individual power or resource budget for each node, namely  $\mathcal{D}_S$ is the feasible set of all $S=[s_{vu}]_{v\in V, u\in V\backslash v}\in\mathbb{R}^{|V|\times (|V|-1)}$ satisfying  
\begin{align}
\label{SourcePowerConstraints}
\mathcal{D}_S=\Big\{\sum_{u\in O_v} s_{vu} \leq \hat{s}_v,\,\,
s_{vu} \geq 0, \,\, \forall v \in V\Big\},
\end{align}
where $O_v = \{u\in V :(v,u)\in E\}$. %Note that we can modify (\ref{SourcePowerConstraints}) by having a per-node power constraint $\hat{s}_v$, $v\in V$, which yields the more strict constraint $\sum\nolimits_{u\in O_v} s_{vu} \leq \hat{s}_v,\,\, \forall v\in V$. 
%\volkan{in (\ref{SourcePowerNonnegative}) we actually need this to be $>0$, instead of $\geq 0$, if we allow 0 some delay values can be infinite, and my experiments with the subgradient algorithm showed that this was the case with some topologies}

Minimization of $D^o(X,S)$ subject to the set of integer constraints $X \in \mathcal{D}_X$ is NP-hard since it is a reduction from the 2-disjoint set cover problem \cite{Shanmugam2013}. Therefore, we aim to devise a centralized algorithm that produces an allocation within a constant approximation of the optimal, without prior knowledge of the network topology, edge weights, or the demand distribution. We next formulate a convex relaxation.

%%%
\paragraph{Convex Relaxation} 
To approximate the non-convex function $D^o(X,S)$, we construct a convex relaxation, following the approach of \cite{IoannidisYeh2016}, \cite{Shanmugam2013}. Suppose that $x_{vi}$, $v \in V$, $i \in \mathcal{C}$, are independent Bernoulli random variables. Let $\nu$ be the corresponding joint probability distribution defined over matrices in $\{0, 1\}^{|V |\times|C|}$, and denote by $\mathbb{P}_{\nu}[\cdot]$ and $\mathbb{E}_{\nu}[\cdot]$ the probability and expectation with respect to $\nu$, respectively.

Relaxing the integrality constraints of $X$ in (\ref{integrality_constraints}), let marginal probabilities 
\begin{align}
    \label{marginalprob}
    y_{vi} = \mathbb{P}[x_{vi} = 1] = \mathbb{E}_{\nu}[x_{vi}]\in[0,1], \quad v\in V,\quad i\in\mathcal{C}.
\end{align}
Denote the feasible set of $Y=[y_{vi}]_{v\in V, i\in \mathcal{C}}\in\mathbb{R}^{|V|\times |\mathcal{C}|}$ by 
\begin{align}
\label{C1relaxedfordelayproblem}
\mathcal{D}_Y=\Big\{\sum\limits_{i\in \mathcal{C}} y_{vi}&=c_v,\,\, v\in V,\,\,
y_{vi}\in[0,1], \,\, v\in V, \,\, i\in\mathcal{C};\nonumber\\ 
y_{vi}&=1, \,\, v\in \mathcal{S}_i, \,\, i\in\mathcal{C}\Big\},
\end{align}
representing the collection of (marginal) probabilities that $v\in V$ stores $i\in\mathcal{C}$ and satisfying the capacity and source constraints.

%We next evaluate the original cost using $Y$. %For $D^o$ given in (\ref{original_aggregate_cost}) we have
Using the definition of $Y$ in (\ref{marginalprob}), and from the fact that $x_{vi}$'s are independent and path $p$ is simple (no loop), %(no node appears twice).
we now observe that 
\begin{align}
\label{multi_linear_relaxed_problem}
%&
D^o(Y,S)=%\nonumber\\
%&=\sum\limits_{(i,p)\in\mathcal{R}}{\lambda_{(i,p)}{\sum\limits_{k=1}^{|p|-1}f({\SINR}_{p_{k+1}p_k}(S))\prod\limits_{l=1}^k (1-y_{p_l i})}}\nonumber\\
%&\overset{(a)}{=}\sum\limits_{(i,p)\in\mathcal{R}}{\lambda_{(i,p)}{\sum\limits_{k=1}^{|p|-1}f({\SINR}_{p_{k+1}p_k}(S))\prod\limits_{l=1}^k (1-\mathbb{E}_{\nu}[x_{p_l i}])}}\nonumber\\
%&\overset{(b)}{=}\mathbb{E}_{\nu}\Big[\sum\limits_{(i,p)\in\mathcal{R}}{\lambda_{(i,p)}{\sum\limits_{k=1}^{|p|-1}f({\SINR}_{p_{k+1}p_k}(S))\prod\limits_{l=1}^k (1-x_{p_l i})}}\Big]
%&\overset{(b)}{=}
\mathbb{E}_{\nu}[D^o(X,S)]\ .
\end{align} %(no node appears twice). 

The extension of $D^o$ to the domain $[0, 1]^{|V |\times|C|}$ is known as the multi-linear relaxation of the optimization problem \cite{Shanmugam2013},  where (\ref{OriginalDelayOptimization}) is relaxed to
\begin{align}
\label{YRelaxedDelayOptimization}
\min \{D^o(Y,S): Y\in \mathcal{D}_Y,\, S\in\mathcal{D}_S\}\ . 
\end{align}

Let $X^*$ and $Y^*$ be the optimal solutions to (\ref{OriginalDelayOptimization}) and (\ref{YRelaxedDelayOptimization}), respectively. Then, because the integrality constraints are relaxed in (\ref{marginalprob}), the cost with relaxed variables $Y^*$ satisfies for any $S\in\mathcal{D}_S$:
\begin{align}
\label{relaxed_comparison}
D^o(Y^*,S) \leq D^o(X^*,S).    
\end{align}

Note that the multi-linear relaxation $D^o(Y,S)$ in (\ref{multi_linear_relaxed_problem}) is non-convex. Therefore, we next approximate it by another cost function $D$ defined as follows:
\begin{align}
\label{DasFunctionofYS}
D(Y,S)=\sum\limits_{(i,p)\in\mathcal{R}}{\lambda_{(i,p)} D_{(i,p)}(Y,S)}, 
\end{align}
where the relaxed delay-cost for request  $(i,p)\in\mathcal{R}$ is
\begin{align}
\label{relaxed_delay_cost_per_request}
D_{(i,p)}(Y,S)={\sum\limits_{k=1}^{|p|-1}f({\SINR}_{p_{k+1}p_k}(S)) g_{p_k i}(Y) },   
\end{align}
where $f$ is given in (\ref{wireless_delay}) and $g_{p_k i}$ is given by
\begin{align}
\label{g_function}
g_{p_k i}(Y)
=1-\min\Big\{1,\sum\limits_{l=1}^k y_{p_l i}\Big\},\,\quad\forall\, y_{p_li}\in [0, 1].%\, l = 1,\hdots, k,
\end{align}
From the Goemans-Williamson inequality \cite{GoeWil1994}, (\ref{DasFunctionofYS}) %(\ref{g_function}) 
gives an upper bound on (\ref{multi_linear_relaxed_problem}). %(\ref{original_cost}) 
Due to the concavity of the $\min$ operator, i.e., $\mathbb{E}_{\nu}\left[g_{p_k i}(Y)\right]\geq g_{p_k i}(\mathbb{E}_{\nu}[Y])$, the function $g_{p_k i}(Y)$ is strictly quasi-convex (see Prop. \ref{quasi_convex}) in $Y$. In (\ref{g_function}), $g_{p_k i}(Y)$ is a piecewise linear function which is not smooth or strictly convex, and its partial derivatives\footnote{A function is piecewise continuously differentiable if each piece is differentiable throughout its subdomain, even if the whole function may not be differentiable at the points between the pieces \cite[Ch. 3]{beck2017first}. %For piecewise functions the notion of a derivative is replaced by that of the subderivative \cite[Ch. 3]{beck2017first}. 
} do not exist everywhere. If the objective function or some of the constraint functions are non-differentiable, we can devise non-differentiable methods to optimize $D(Y,S)$, or subdifferential versions of KKT conditions \cite{ruszczynski2011nonlinear}, \cite[Ch. 6.3]{bertsekas1998nonlinear}. To address such scenarios we will detail an algorithm in Sect. \ref{algorithms_optimization}.

The approximated delay-cost $D(Y,S)$ is convex in the caching variables $Y$ due to the convexity of $g_{p_k i}(Y)$. Note that $D(Y,S)$ is nonconvex in the power variables $S$ because $f$ is nonconvex in $S$. We aim to solve the following reduced-complexity formulation (RCF) of the joint optimization problem:
\begin{align}
\label{mindelay_nonlinearproblem}
\min \{D(Y,S): Y\in\mathcal{D}_Y,\, S\in\mathcal{D}_S\}.
\end{align}
%\begin{equation}
%\label{mindelay_nonlinearproblem}
%\begin{aligned}
%\MD: & \min \limits_{Y,\,S}
%& & D(Y,S)  \\
%& \hspace{0.3cm}\text{s.t.}
%& &Y\in\mathcal{D}_Y\\
%& & & S\in\mathcal{D}_S\\     
%\end{aligned}
%\end{equation}

The objective function in  (\ref{mindelay_nonlinearproblem}) captures the wired backhaul that connects the core (or backbone) network to the MCs or the SCs at the edge of the network. The last node of a path $p$, i.e., $p_{|p|}$ can be a SC, MC or the last hop can be a connection from a MC (MC-to-backbone) or a SC to the backhaul (SC-to-backbone). The backhaul has the entire file catalog, and the connections from the MC or SC to the backhaul are wired. For given $(i,p)\in \mathcal{R}$, the transmission delay incurred by the edge $(p_{|p|-1},p_{|p|})\in p$ is given as
\begin{align}%{wireline_delay}
f({\SINR}_{p_{|p|}p_{|p|-1}}(S))=
\begin{cases}
D_{MC},\,\, \,\mbox{ if}\,\, p_{|p|} \,\mbox{  MC backbone},\\
D_{SC},\,\,\,\mbox{ if}\,\, p_{|p|} \,\mbox{  SC backbone},
\end{cases}\nonumber
\end{align}
%which is plugged in (\ref{relaxed_delay_cost_per_request}) to determine $D_{(i,p)}(Y,S)$, which in turn determines the relaxed delay-cost $D(Y,S)$ in (\ref{DasFunctionofYS}).  
where the wired backhaul transmission delays $D_{MC}$ and $D_{SC}$ are fixed and known a priori, which we assume to be the same for all SCs and MCs based on \cite{mahloo2014cost}. In contradistinction to this, the transmission delays are coupled in the wireless part of the network due to the dependency of ${\SINR}_{vu}(S)$, $(v,u)\neq (p_{|p|},p_{|p|-1})$ in (\ref{SINR_general_expression}) on the power allocation $S$. %Your wireless link cost only takes into account transmission delay modeled as 1/capacity. What is $D_{MC}$ and $D_{SC}$ and why can you just add it to the link cost to get the total delay?
%Due to the limited bandwidth and dynamic channel conditions, the transmission delay incurred in the wireless backhaul can be much higher than the wired backhaul \cite{shi2018delay}.
%papers on backhaul cost: \cite{hu2016multicast},\cite{hu2017joint}

The optimal value of $D(Y,S)$ in (\ref{mindelay_nonlinearproblem}), is guaranteed to be within a constant factor from the optimal values of $D^o(Y,S)$ in (\ref{YRelaxedDelayOptimization}), and of $D^o(X,S)$ in (\ref{OriginalDelayOptimization}). In particular, we have the following theorem.

%\textcolor{red}{approximation guarantee as in Theorem 1 of sigmetrics paper 2016}
\begin{theo}\label{optimalityguarantees}{\bf{Constant factor approximation for fixed $S$ \cite{Shanmugam2013,AgeSvi2004}.}}
For given $S$, let $Y^*$ and $Y^{**}$ be the optimal solutions that minimize $D^o(Y,S)$ and $D(Y,S)$ in (\ref{YRelaxedDelayOptimization}) and  (\ref{mindelay_nonlinearproblem}), respectively. Then,
\begin{align}
\label{optimality_guarantee_result}
D^o(Y^{*},S) \leq D^o(Y^{**},S)
\leq \frac{D^{\rm{ub}}(S)}{e}+\Big(1-\frac{1}{e}\Big)D^o(Y^{*},S).
\end{align}
\end{theo}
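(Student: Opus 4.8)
The plan is to exploit that, for a \emph{fixed} power allocation $S$, every delay weight $f({\SINR}_{p_{k+1}p_k}(S))$ is a nonnegative constant, so that $D^o(\cdot,S)$, $D(\cdot,S)$ and $D^{\rm{ub}}(S)$ are all affine combinations of the corresponding ``miss'' coefficients with the same nonnegative weights $\lambda_{(i,p)}f({\SINR}_{p_{k+1}p_k}(S))\ge 0$. The first inequality $D^o(Y^{*},S)\le D^o(Y^{**},S)$ is then immediate: $Y^{**}$ is feasible, i.e. $Y^{**}\in\mathcal{D}_Y$, while $Y^{*}$ minimizes $D^o(\cdot,S)$ over $\mathcal{D}_Y$. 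The whole content lies in the second inequality, which I would obtain from a termwise sandwich between the multilinear miss coefficient $\prod_{l=1}^{k}(1-y_{p_l i})$ and the piecewise-linear surrogate miss coefficient $g_{p_k i}(Y)=1-\min\{1,\sum_{l=1}^{k}y_{p_l i}\}$.

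Concretely, first I would establish, for all $y_{p_l i}\in[0,1]$, the two bounds
\begin{align}
g_{p_k i}(Y)\;\le\;\prod_{l=1}^{k}(1-y_{p_l i})\;\le\;\frac{1}{e}+\Big(1-\frac{1}{e}\Big)g_{p_k i}(Y).\nonumber
\end{align}
The left inequality follows from the Weierstrass bound $\prod_{l}(1-y_{p_l i})\ge 1-\sum_{l}y_{p_l i}$ together with nonnegativity of the product, so that $\prod_{l}(1-y_{p_l i})\ge 1-\min\{1,\sum_{l}y_{p_l i}\}=g_{p_k i}(Y)$. The right inequality is the Goemans--Williamson bound: using $\prod_{l}(1-y_{p_l i})\le e^{-\sum_{l}y_{p_l i}}$ and the concavity of $z\mapsto 1-e^{-z}$ on $[0,1]$ (which yields $1-e^{-z}\ge(1-1/e)\min\{1,z\}$), one gets $1-\prod_{l}(1-y_{p_l i})\ge(1-1/e)(1-g_{p_k i}(Y))$, and rearranging gives the claim.

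Summing these termwise bounds against the nonnegative weights $\lambda_{(i,p)}f({\SINR}_{p_{k+1}p_k}(S))$, and recalling that $D^{\rm{ub}}(S)$ is exactly this weight total (the all-misses objective), I obtain the global sandwich, valid for every $Y\in\mathcal{D}_Y$:
\begin{align}
D(Y,S)\;\le\;D^o(Y,S)\;\le\;\frac{D^{\rm{ub}}(S)}{e}+\Big(1-\frac{1}{e}\Big)D(Y,S).\nonumber
\end{align}
The final step chains the two optima. Applying the right half of the sandwich at $Y^{**}$, then using optimality of $Y^{**}$ for $D(\cdot,S)$ over $\mathcal{D}_Y$ (so $D(Y^{**},S)\le D(Y^{*},S)$), and finally the left half of the sandwich at $Y^{*}$ (so $D(Y^{*},S)\le D^o(Y^{*},S)$), gives $D^o(Y^{**},S)\le \frac{D^{\rm{ub}}(S)}{e}+(1-1/e)D^o(Y^{*},S)$, where $1-1/e>0$ preserves the inequality directions.

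I expect the main obstacle to be not any single inequality — both termwise bounds are classical — but the bookkeeping that justifies passing from the surrogate optimum $Y^{**}$ back to the true multilinear optimum $Y^{*}$. The subtlety is that $Y^{**}$ optimizes the ``wrong'' objective $D$, so the argument must insert the surrogate's optimality in the middle and apply the two opposite-direction halves of the sandwich at the two \emph{different} points $Y^{**}$ and $Y^{*}$; getting these directions and the sign of $1-1/e$ right is exactly where care is required.
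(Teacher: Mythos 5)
Your proof is correct and rests on the same ingredients as the paper's Appendix~\ref{App:optimalityguarantees} --- a termwise comparison between the multilinear miss coefficient $\prod_{l=1}^k(1-y_{p_l i})$ and the surrogate $g_{p_k i}(Y)$, the Ageev--Sviridenko/Goemans--Williamson bound, and chaining through the optimality of $Y^{**}$ for $D(\cdot,S)$ --- but you arrange the sandwich differently, and your arrangement is the right one. You establish $g_{p_k i}(Y)\le\prod_{l=1}^k(1-y_{p_l i})\le \frac{1}{e}+\left(1-\frac{1}{e}\right)g_{p_k i}(Y)$, hence $D(Y,S)\le D^o(Y,S)\le \frac{D^{\rm ub}(S)}{e}+\left(1-\frac{1}{e}\right)D(Y,S)$ for every feasible $Y$, and then chain the right half at $Y^{**}$, the optimality of $Y^{**}$ for $D$, and the left half at $Y^{*}$. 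The paper instead asserts in (\ref{Dlowerbound}) the reversed pointwise inequality $D(Y,S)\ge D^o(Y,S)$ and uses it as the first link $D^o(Y^{**},S)\le D(Y^{**},S)$ of its final chain; that inequality is false in general (for a single edge at depth $k=2$ with $y_{p_1 i}=y_{p_2 i}=1/2$ the surrogate miss coefficient is $1-\min\{1,1\}=0$ while the multilinear one is $1/4$), and what (\ref{AgeevApprox}) actually delivers is your right-hand bound $\prod_{l}(1-y_{p_l i})\le 1/e+(1-1/e)g_{p_k i}(Y)$, not $g_{p_k i}(Y)\ge\prod_{l}(1-y_{p_l i})$. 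So your write-up is not merely a restyling: by applying the two opposite-direction halves of the correct sandwich at the two different points $Y^{**}$ and $Y^{*}$ --- precisely the bookkeeping you flag as the delicate step --- you supply the repair needed for the appendix's argument to go through, and you obtain the stated bound (\ref{optimality_guarantee_result}) verbatim.
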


\begin{proof} 
See Appendix \ref{App:optimalityguarantees}.
\end{proof}

%%%
\paragraph{Rounding}\label{rounding}
To produce an integral solution to  (\ref{OriginalDelayOptimization}), we round the solution $Y^{ **}$ of (\ref{mindelay_nonlinearproblem}). %The rounding scheme is based on the following property of $D^o$. 
For any given $S\in\mathcal{D}_S$ and given a fractional solution $Y \in \mathcal{D}_Y$, there is always a way to convert it to a $Y' \in \mathcal{D}_Y$ with at least one fewer fractional entry than $Y$, for which $D^o(Y',S) \leq D^o(Y,S )$ \cite{Blaszczyszyn2014}, \cite{IoannidisYeh2016}. \begin{comment}To see this, suppose that $Y$ contains a fractional entry, say $y_{vi} \in (0, 1)$, for some $v \in V,\, i \in \mathcal{C}$. Then, as capacity constraints are integral, there must exist another entry $y_{vi^\prime} \in (0, 1)$, i.e., fractional entries come in pairs. We can observe that $D^o$, restricted to only these two entries, is a concave function. Thus, it is minimized at the extrema of the set of values that %the pair 
$(y_{vi}, y_{vi'} )$ may take, assuming all other entries are constant. This implies that we can construct $Y'$ by transferring equal mass between $y_{vi}$ and $y_{vi'}$, increasing one and decreasing the other, so that at least one of them becomes $0$ or $1$. Pairwise concavity ensures that among the two possible mass transfers one yields a $Y'$ such that $D^o(Y ') \leq D^o(Y )$, while transferring equal mass ensures that $Y' \in \mathcal{D}_Y$. Thus, one can construct an integral solution as follows:
\begin{enumerate}
\item Start from $Y^{**}$, an optimal solution to the problem (\ref{mindelay_nonlinearproblem}).
\item If $Y^{**}$ is fractional find two fractional variables $y_{vi}$, $y_{vi'}$.
\item Round to transform (at least) one of $y_{vi}$, $y_{vi'}$ %the two variables 
to either $0$ or $1$, while decreasing %the caching delay 
$D^o$.
\item Repeat steps 2-3 until there are no fractional variables.
\end{enumerate} 
\end{comment}
Each rounding step reduces the number of fractional variables by at least $1$. Thus, the above algorithm concludes in at most $|V|\times |C|$ steps (assuming fixed power allocations), producing an integral solution $X' \in \mathcal{D}_X$ such that $D^o(X',S)\leq D^o(Y^{**},S)$ because each rounding step can only decrease $D^o$. Hence, from Theorem \ref{optimalityguarantees} %(\ref{optimality_guarantee_result}) 
and (\ref{relaxed_comparison}) we have the following corollary.

\begin{cor}\label{rounding_cor}
{\bf Rounding of caching for fixed $S$.} The integral solution $X'\in \mathcal{D}_X$ as a result of rounding satisfies for any $S\in\mathcal{D}_S$: 
\begin{align}
%D^o(X^*,S) &\leq  D^o(X',S) \leq D^o(Y^{**},S)\nonumber\\ 
%&\leq \frac{D^{\rm{ub}}(S)}{e}+\Big(1-\frac{1}{e}\Big)D^o(Y^{*},S)\nonumber\\ 
%&\leq \frac{D^{\rm{ub}}(S)}{e}+\Big(1-\frac{1}{e}\Big)D^o(X^{*},S). \nonumber
D^o(X^*,S) \leq  D^o(X',S)%\leq D^o(Y^{**},S) \nonumber\\  
\leq \frac{D^{\rm{ub}}(S)}{e}+\Big(1-\frac{1}{e}\Big)D^o(X^{*},S)\ . \nonumber
\end{align}
\end{cor}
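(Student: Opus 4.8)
The plan is to chain together three facts that are already in hand for a fixed but arbitrary $S\in\mathcal{D}_S$: the optimality of $X^*$, the rounding guarantee established in the Rounding paragraph just above, and Theorem \ref{optimalityguarantees} together with the relaxation inequality (\ref{relaxed_comparison}). Since both inequalities in the corollary hold for the same fixed $S$, I would fix $S$ throughout and treat $D^o(\cdot,S)$ as a function of the caching variables alone, so that no power-side estimation is needed.

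For the lower bound $D^o(X^*,S)\le D^o(X',S)$ I would argue directly from optimality: the rounding procedure returns $X'\in\mathcal{D}_X$, i.e., a feasible integral caching matrix, while $X^*$ is by definition the minimizer of $D^o(\cdot,S)$ over $\mathcal{D}_X$. Hence the inequality is immediate, with no further work.

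For the upper bound I would assemble the chain
\[
D^o(X',S)\le D^o(Y^{**},S)\le \frac{D^{\rm{ub}}(S)}{e}+\Big(1-\frac{1}{e}\Big)D^o(Y^*,S)\le \frac{D^{\rm{ub}}(S)}{e}+\Big(1-\frac{1}{e}\Big)D^o(X^*,S).
\]
The first inequality is precisely the rounding guarantee: each pairwise mass transfer keeps the iterate in $\mathcal{D}_Y$ and, by pairwise concavity of $D^o$ in any two fractional entries sharing a cache, cannot increase $D^o$, so the terminal integral $X'$ obtained from $Y^{**}$ satisfies $D^o(X',S)\le D^o(Y^{**},S)$. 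The second is Theorem \ref{optimalityguarantees} applied to the minimizers $Y^*$ of $D^o(\cdot,S)$ and $Y^{**}$ of $D(\cdot,S)$. The third follows from (\ref{relaxed_comparison}), namely $D^o(Y^*,S)\le D^o(X^*,S)$, after multiplying through by the positive constant $1-1/e$ and leaving the fixed additive term $D^{\rm{ub}}(S)/e$ untouched.

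The only point warranting care—and thus the main obstacle, modest as it is—is keeping the bookkeeping of which cost function each inequality refers to. Because $Y^{**}$ is optimal for the approximate cost $D$ and not for the multilinear cost $D^o$, one must resist claiming $D^o(X',S)\le D^o(Y^*,S)$; the correct intermediate bound is $D^o(X',S)\le D^o(Y^{**},S)$, and it is Theorem \ref{optimalityguarantees} that converts this into the stated multiplicative-plus-additive guarantee. Once this is respected, the corollary is a routine concatenation of inequalities and requires no additional calculation.
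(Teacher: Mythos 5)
Your proposal is correct and follows exactly the argument the paper itself uses: the lower bound from optimality of $X^*$ over $\mathcal{D}_X$, and the upper bound from chaining the rounding guarantee $D^o(X',S)\leq D^o(Y^{**},S)$ with Theorem \ref{optimalityguarantees} and the relaxation inequality (\ref{relaxed_comparison}). Your cautionary remark about not conflating the minimizer of $D$ with that of $D^o$ is exactly the right bookkeeping, and nothing further is needed.
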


Note that the rounding step produces a $\big(1-\frac{1}{e}\big)$-approximate solution, along with an offset of $\frac{D^{\rm{ub}}(S)}{e}$ to RCF. The offset in Cor. \ref{rounding_cor} is eliminated if instead of RCF in (\ref{mindelay_nonlinearproblem}) we use a maximum caching gain formulation which concerns the ultimate gain that can be obtained via caching at intermediate nodes, such as in \cite{Shanmugam2013} and \cite{IoannidisYeh2016}. In maximizing the caching gain, the objective function is given by the difference $D^{\rm{ub}}(S)-D(Y,S)$, where $D^{\rm{ub}}(S)$ is given by (\ref{Dub}). In this case, the relationship $D^o(X^*,S) \leq D^o(Y^{**},S) \leq \frac{D^{\rm{ub}}(S)}{e}+\big(1-\frac{1}{e}\big)D^o(X^{*},S)$ is equivalent to $D^{\rm{ub}}(S)-D^o(X^*,S) \geq D^{\rm{ub}}(S)-D^o(Y^{**},S) \geq D^{\rm{ub}}(S)-\frac{D^{\rm{ub}}(S)}{e}-\big(1-\frac{1}{e}\big)D^o(X^{*},S)=\big(1-\frac{1}{e}\big)(D^{\rm{ub}}(S)-D^o(X^{*},S))$, giving a $\big(1-\frac{1}{e}\big)$-approximate solution for the maximum caching gain formulation without an offset. However, in this formulation the difference $D^{\rm{ub}}(S)-D(Y,S)$ increases in $S$, requiring high powers. Hence, despite its offset, RCF formulation in (\ref{mindelay_nonlinearproblem}) is preferable as it can jointly optimize power.

%%%%%
\paragraph{$D^o$ and $D$ are not jointly convex in $Y$ and $S$} 
The transmission delays are coupled due to the interference from simultaneous transmissions. From (\ref{wireless_delay}), $f$ is not convex in $S$. Furthermore, (\ref{multi_linear_relaxed_problem}) is not convex in $Y$ for given $S$ and not convex in $S$ for given $Y$, hence not jointly convex in $(Y,S)$. 
Note that $D(Y,S)$ is jointly convex at low interference or low power because the logarithm function in (\ref{wireless_delay}) changes linearly (and its reciprocal is convex) in power when $\SINR$ is low in all paths, which is true in the power-limited regime. 

The joint convexity of $D$ requires the Hessian matrix $H$ of $D(Y,S)$ with respect to $(Y,S)$ to be positive semi-definite (PSD). 
Since (\ref{g_function}) is not differentiable, the Hessian matrix for $D(Y,S)$ with respect to $Y$, i.e., $\nabla_Y^2 D$, is not defined. However, from \cite[Theorem 2.1]{scheimberg1992descent}, the second order derivatives for maximum functions are defined in each interval and the subhessians of (\ref{DasFunctionofYS}) or (\ref{g_function}) with respect to $Y$, i.e., $\{d^2_{Y} D\}$, exist and we can define a subhessian matrix $d^2_{Y} D$. However, since (\ref{g_function}) is piecewise linear, $d^2_{Y} D$ is a zero matrix. Combining this with the Schur's complement condition for $H$ to be PSD in \cite{Boyd2009}, $D(Y,S)$ is jointly convex only if the off-diagonals of $H$ are singular. However, in our setting, the partial derivatives $\nabla_S D$ with respect to $S$ are nonzero, and the subhessian matrix formed by their subgradients with respect to $Y$ is non-singular. Therefore, $D(Y,S)$ is not jointly convex.
%
\begin{comment}
RCF in (\ref{mindelay_nonlinearproblem}) involves choosing both $Y \in \mathcal{D}_Y$ and the powers $S \in \mathcal{D}_S$%, subject to %peak power constraints, (\ref{integrality_constraints}) and (\ref{SourcePowerConstraints})
, to minimize the cost. Hence, RCF is non-convex in $S$ for any fixed $Y$. 
%%%%%%
The original delay-cost function $D^o$ in (\ref{original_cost}) includes the products of $Y$, hence $D^o$ is non-convex in $Y$ for fixed $S$, and not jointly convex in $Y$ and $S$. 
\end{comment}
While $D(Y,S)$ is not jointly convex in $(Y,S)$ in general, it can be biconvex in the logarithms of the power variables under a condition, which we provide next in Sect. \ref{power_optimization} in Prop. \ref{sufficient_condition_for_convexity}.

%%%%
\subsection{Power Optimization for RCF}
\label{power_optimization}

We next provide a sufficient condition for $f({\SINR}_{p_{k+1}p_k})$ to be convex in $\log$ power variables $P\eqdef (\log(s_{vu}))_{(v,u)\in E}$ in which $P_{vu}=\log(s_{vu})$ denotes power measured on link $(v,u)$ corresponding to request $(i,p)$ in dB. 

\begin{prop}\label{sufficient_condition_for_convexity}
{\bf Convexity in $\log$ power variables.} A sufficient condition for the composite function $f({\SINR}_{p_{k+1}p_k})$ to be convex in $P\eqdef (\log(s_{vu}))_{(v,u)\in E}$ is given as follows.
\begin{align}
\label{sufficient_condition_convex_log_powers}
    \frac{2f^{'}(x)^2}{f(x)}\cdot x-f^{'}(x)\leq f^{''}(x)\cdot x, \quad \forall x\geq 0.
\end{align}
\end{prop}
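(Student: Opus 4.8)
The plan is to pass to the logarithmic power coordinates $P_{vu}=\log(s_{vu})$ and exploit the fact that, although $f(\SINR_{p_{k+1}p_k})$ is non-convex in $S$, the SINR is log-concave in $P$, so that convexity of the composite collapses to a one-dimensional condition on $f$. Concretely, I would first establish that $t(P)\eqdef\log\SINR_{p_{k+1}p_k}(S)$ is concave in the full vector $P=(\log s_{vu})_{(v,u)\in E}$; then I would reinterpret the stated inequality (\ref{sufficient_condition_convex_log_powers}) as the statement that the link rate $1/f$ is a concave function of $t$; and finally I would chain these through the standard composition rules for convex functions.

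For the log-concavity step I would write $\SINR_{vu}(S)=\Num/\Den$ with $\Num=G_{vu}e^{P_{vu}}$ and $\Den=N_u+\sum_{j\neq v}G_{ju}\sum_w e^{P_{jw}}+G_{vu}\sum_{w\neq u}e^{P_{vw}}$. Then $\log\Num=\log G_{vu}+P_{vu}$ is affine in $P$, while $\log\Den$ is the logarithm of a positive constant plus a nonnegatively weighted sum of exponentials of affine functions of the coordinates of $P$; by the joint convexity of the log-sum-exp function this is convex in $P$. Hence $t(P)=\log\Num-\log\Den$ is concave in $P$. The additive noise $N_u$ is simply one more exponential with fixed exponent $\log N_u$, and the exclusions $j\neq v$, $w\neq u$ play no role in the argument.

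Next I would set $h(t)\eqdef f(e^t)$ and $\phi(t)\eqdef 1/h(t)=1/f(e^t)$. A short computation gives $h'(t)=f'(x)\,x$ and $h''(t)=f''(x)\,x^2+f'(x)\,x$ with $x=e^t$, and since $\phi''=\bigl(2(h')^2-hh''\bigr)/h^3$ with $h=f>0$, concavity $\phi''\le 0$ is equivalent to $h''\ge 2(h')^2/h$, i.e. (dividing by $x>0$) exactly to (\ref{sufficient_condition_convex_log_powers}). Thus the hypothesis says precisely that the rate $\phi(t)=1/f(e^t)$ (which for our $f$ equals $\log_2(1+e^t)$) is concave in $t$; moreover $\phi$ is increasing, because $f$ is positive and decreasing. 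Composing the concave nondecreasing $\phi$ with the concave $t(P)$ shows that $R(P)\eqdef\phi(t(P))=1/f(\SINR_{p_{k+1}p_k}(S))$ is concave in $P$. Finally, $f(\SINR_{p_{k+1}p_k}(S))=1/R(P)$ with $R(P)>0$, and since $u\mapsto 1/u$ is convex and decreasing on $(0,\infty)$, composing it with the concave $R$ yields convexity of $f(\SINR_{p_{k+1}p_k})$ in $P$, which is the claim.

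The main obstacle is identifying the correct reformulation of (\ref{sufficient_condition_convex_log_powers}). It is \emph{not} the condition that $h(t)=f(e^t)$ itself be convex — that would give the weaker requirement $f''(x)\,x+f'(x)\ge 0$ and would not reproduce the stated inequality — but rather the concavity of the reciprocal $1/f(e^t)$, i.e. of the link rate. Recognizing this is exactly what permits chaining the two composition rules (concavity of the rate in $P$, then convexity of its reciprocal) and recovering the stated bound precisely. The remaining technical care lies in asserting the log-sum-exp step jointly over the entire power vector, including the additive noise term; this is standard from geometric programming but must be stated as joint concavity in $P$. As a consistency check, since $\log_2(1+e^t)\to t/\ln 2$ as $t\to\infty$, the rate becomes asymptotically affine and the condition holds with near-equality, matching the claim that (\ref{sufficient_condition_convex_log_powers}) pertains to the high-SINR regime.
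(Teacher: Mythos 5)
Your proof is correct and follows essentially the same route as the paper's: the paper sets $C=1/f$ and invokes the result of \cite{huang2006distributed} that $C''(x)\cdot x+C'(x)\leq 0$ implies concavity of $C({\SINR})$ in the log-power vector, then rewrites that condition in terms of $f$ to obtain (\ref{sufficient_condition_convex_log_powers}); you simply unpack the cited lemma explicitly (log-concavity of the SINR via log-sum-exp, concavity of $1/f(e^t)$ in $t$, and the two composition steps). Your observation that the condition encodes concavity of the rate $1/f(e^t)$ rather than convexity of $f(e^t)$ is exactly the paper's condition $C''(x)\cdot x+C'(x)\leq 0$ with $C=1/f$, so the two arguments coincide.
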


\begin{proof}
The result follows from extending the approach in \cite{huang2006distributed}. For details, see Appendix \ref{App:sufficient_condition_for_convexity}. 
\end{proof}

%What does (\ref{sufficient_condition_convex_log_powers}) imply intuitively? 
The sufficient condition (\ref{sufficient_condition_convex_log_powers}) of Prop. \ref{sufficient_condition_for_convexity} holds in the high $\SINR$ regime where $\log(1+\SINR)\approx \log(\SINR)$. However, (\ref{sufficient_condition_convex_log_powers}) no longer holds when $\SINR\gg 1$ does not hold. 
Given the sufficient condition in (\ref{sufficient_condition_convex_log_powers}), it is clear that the program (\ref{mindelay_nonlinearproblem}) is convex in terms of power measured in dB. Hence, we define the log-power variables $P$, belonging to the feasible set
\begin{align}
\mathcal{D}_P=\{P_{vu}\in\mathbb{R}%^{|V|\times (|V|-1)}
:\, \sum\limits_{u\in O_v} e^{P_{vu}} \leq \hat{s}_v,\,\,
\forall v\in V, \,\, \forall (u,v)\in E\},\nonumber
\end{align}
where $O_v = \{u\in V :(v,u)\in E\}$.

The condition of Prop. \ref{sufficient_condition_for_convexity} ensures that $D(Y,P)$ is biconvex, i.e., $D(Y,P)$ is convex in $Y$ for given $P$ and convex in $P$ for given $Y$
%convex over any of the two arguments by fixing the other one
\cite{floudas2013deterministic}. %\footnote{A bivariate function is biconvex if the function is convex over any of the two arguments by fixing the other argument \cite{floudas2013deterministic}.}. 
This paves the way for employing methods to solve RCF in (\ref{mindelay_nonlinearproblem}).  We next outline one such method.

%%%%
\subsection{Joint Optimization of RCF}
\label{joint_optimization}
In this section, we present two techniques to optimize RCF: {\bf 1)} Biconvex optimization of $D(Y,S)$ under the condition of Prop. \ref{sufficient_condition_for_convexity} on convexity in $\log$ powers, and {\bf 2)} General joint optimization where $D(Y,S)$ is not jointly convex in $Y$ and $S$. For the former, we exploit alternating optimization methods. For the latter, we prove various results on $D(Y,S)$: {\bf a)} strict quasi-convexity, {\bf b)} necessary conditions for optimality, {\bf c)} generalized necessary conditions under strict convexity of $\mathcal{D}_S$, and {\bf d)} Pareto optimality of $D(Y,S)$. %, and {\bf e)} Pareto optimality of $D^0(X',S')$.%We show that the solution of the relaxed problem is Pareto optimal. 

%%%%
\subsubsection{Alternating Optimization}

We next present a biconvex optimization technique for RCF. To that end, we exploit alternating optimization methods. 
There exist techniques to find the local optimum of biconvex minimization problems, such as block-relaxation methods \cite{gorski2007biconvex}. Furthermore, the global optimum of biconvex problems can be determined for certain classes of constraints \cite{floudas1990global}.

Provided that the convexity condition in Prop. \ref{sufficient_condition_for_convexity} holds, $D(Y,S)$ is biconvex and hence we can focus on the alternating optimization of RCF. This corresponds to alternatively updating the power variables $S$ given the caching variables $Y$, and then updating $Y$ given $S$. This iterative optimization approach can find a local optimum to the average delay minimization problem. To obtain an integral solution, the algorithm needs a rounding step before it terminates. This technique for RCF is summarized in Algorithm \ref{alternating}. An algorithm called Global OPtimization (GOP) algorithm was developed in \cite{floudas1990global} to exploit the convex substructure of constrained biconvex minimization problems by a primal-relaxed dual approach. The objective function and the constraints in RCF satisfy the necessary convexity conditions \cite[Ch. 3.1, Conditions (A)]{floudas2013deterministic} for the GOP algorithm. However, \cite[Ch. 3.1]{floudas2013deterministic}, \cite[Theorem 1, Condition (d)]{floudas1990global} require the multipliers for the primal problem to be uniformly bounded, which may not be true for RCF. Hence, employing the GOP algorithm does not guarantee termination in a finite number of steps for any $\epsilon>0$  \cite[Theorem 3.6.1]{floudas2013deterministic}, or at the global optimum of (\ref{DasFunctionofYS}) \cite[Theorem 3.6.2]{floudas2013deterministic}.

We note that the proposed alternating approach requires the condition in Prop. \ref{sufficient_condition_for_convexity}, while no optimality guarantee is  established. However, $D(Y,S)$ is not necessarily biconvex because it is nonconvex in $S$ when interference is non-negligible, i.e., at low $\SINR$. Deriving the necessary conditions for optimality will reveal the true potential of the algorithm and elucidate the effect of network's operating regime, e.g., in the high or low $\SINR$.

\algdef{SE}[DOWHILE]{Do}{doWhile}{\algorithmicdo}[1]{\algorithmicwhile\ #1}%
\begin{algorithm}[t!]\small
\caption{\fontfamily{ptm}\selectfont Alternating optimization for biconvex $D(Y,S)$.}\label{alternating}
\begin{algorithmic}[1]
\BState {\bf{Begin: $S^{0} \in \mathcal{D}_S; Y^{0} \in \mathcal{D}_Y$ ; }}\\
%$S^{(0)}=[s_{vu}^{(0)}],\,\, s_{vu}^{(0)}\in \mathbb{R}_{\geq 0}$
Let $t = 0$ ;
\Do 
\State{$Y^{t+1}=\arg\min \limits_{Y}  D(Y,S^{t})$ (convex with start point $Y^{t}$) %Fixing $S$ yields the local optimal solution in terms of $Y$. 
}
\State{$S^{t+1}=\arg\min \limits_{S}  D(Y^{t+1},S)$ (convex with start point $S^{t}$)}
\State{Let $t = t+1$;}
\doWhile{$D(Y^{t},S^{t}) - D(Y^{t-1},S^{t-1}) > \epsilon$}
% \While %{$D(Y,S^{(t)})>D(Y^{**},S^{(t)})+\epsilon$}
% {$D(Y^{(t)},S^{(t)} - D(Y^{(t-1)},S^{(t-1)}) > \epsilon$}
% \For {$t\geq 0$} \\
% {$Y^{(t+1)})=\arg\min \limits_{Y}  D(Y,S^{(t)})$ (convex optimization with start point $Y^{(t)}$)\\ %Fixing $S$ yields the local optimal solution in terms of $Y$. 
% $S^{(t+1)}=\arg\min \limits_{S}  D(Y^{(t+1)},S)$ (convex optimization with start point $S^{(t)}$)}\\
%{\bf{end}}\\
%{\bf{end}}
\State{Let $(Y^{**},S^{**}) = (Y^{t},S^{t})$;} 
% {\color{red}What is the notation for the global optimal? }\derya{$Y^*$ is the solution to $D^o$}
\State{Implement {\em b) Rounding}.}
\end{algorithmic}
\end{algorithm}

%%%%%%%%
\subsubsection{General Joint Optimization}

We next extend the approach of \cite{IoannidisYeh2016} to develop centralized algorithms for the joint power-caching optimization of RCF which is not biconvex, i.e., the sufficient condition %on convexity 
in $\log$ powers imposed by Prop. \ref{sufficient_condition_for_convexity} does not hold.

We first present a general result on the relaxed cost function $D(Y,S)$ without putting any assumptions on the $\log$ powers or the caching variables.

%%%%%%%%%%%%%%%%%%%%%%%%%%%%%%%%%%%%%%%%%%%%%%%%%%%%%%
\paragraph{Strict quasi-convexity of $D(Y,S)$}

\begin{prop}\label{quasi_convex}
The %objective 
relaxed delay-cost function $D(Y,S)$ of RCF in (\ref{mindelay_nonlinearproblem}) is strictly quasi-convex. 
\end{prop}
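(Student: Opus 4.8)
The plan is to verify strict quasi-convexity through its sublevel-set characterization: $D$ is strictly quasi-convex iff every sublevel set $\{(Y,S)\in\mathcal{D}_Y\times\mathcal{D}_S : D(Y,S)\le\alpha\}$ is convex and, in addition, the defining inequality is strict along any segment whose endpoints attain distinct objective values. I would first reduce the claim to the two separable building blocks appearing in each summand of (\ref{relaxed_delay_cost_per_request}), namely the caching factor $g_{p_k i}(Y)$ and the delay factor $f(\SINR_{p_{k+1}p_k}(S))$, establishing quasi-convexity of each before addressing their aggregation.

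For the caching factor, note that $\sum_{l=1}^k y_{p_l i}$ is affine in $Y$, so $\min\{1,\sum_{l=1}^k y_{p_l i}\}$ is concave and hence $g_{p_k i}(Y)=1-\min\{1,\sum_{l=1}^k y_{p_l i}\}$ is convex (piecewise-linear). Since every convex function $h$ satisfies $h(\theta x_1+(1-\theta)x_2)\le \theta h(x_1)+(1-\theta)h(x_2)<\max\{h(x_1),h(x_2)\}$ whenever $h(x_1)\neq h(x_2)$ and $\theta\in(0,1)$, $g_{p_k i}$ is strictly quasi-convex in $Y$; in particular, for fixed $S$ the objective $D(\cdot,S)$ is a nonnegative-weighted sum of convex functions, hence convex and strictly quasi-convex in $Y$. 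For the delay factor I would exploit the linear-fractional structure of (\ref{SINR_general_expression}): numerator and denominator of $\SINR_{p_{k+1}p_k}(S)$ are affine in $S$ with strictly positive denominator (as $N_u>0$), so $\SINR_{p_{k+1}p_k}(S)$, and therefore $1+\SINR_{p_{k+1}p_k}(S)$, is quasilinear in $S$ (its sub- and super-level sets are half-spaces intersected with $\mathcal{D}_S$). Because $f(x)=1/\log_2(1+x)$ is decreasing in $x$ as noted below (\ref{wireless_delay}), composing the quasiconcave map $S\mapsto 1+\SINR_{p_{k+1}p_k}(S)$ with the nonincreasing $x\mapsto 1/\log_2 x$ yields a quasi-convex (indeed quasilinear) function of $S$.

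The main obstacle is the aggregation step. Quasi-convexity is not preserved under addition, so the fact that each summand $f(\SINR_{p_{k+1}p_k}(S))\,g_{p_k i}(Y)$ is built from quasi-convex pieces does not by itself make the weighted sum $D(Y,S)$ in (\ref{DasFunctionofYS})--(\ref{relaxed_delay_cost_per_request}) quasi-convex; a direct termwise argument fails, since even a two-term sum of ratios of affine functions can have a nonconvex sublevel set. I would therefore argue convexity of the sublevel sets of the full sum directly, leveraging (i) the product separation $f(S)\,g(Y)$ between the power and caching blocks, (ii) the monotonicity of $f$ in $\SINR$, and (iii) the coupling of all links through the common $S$ in (\ref{SINR_general_expression}); where the pieces are not jointly convex, I would fall back on the biconvexity available under the high-$\SINR$ condition of Prop. \ref{sufficient_condition_for_convexity}.

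The strict part of the statement would additionally require handling the points where $g_{p_k i}$ is non-differentiable (the kinks of the $\min$) and the degenerate segments along which the endpoints share a common objective value. I expect this aggregation and strictness analysis, rather than the per-factor quasi-convexity, to be the genuine crux of the proof, and I would treat it with care to ensure the claimed joint property is not inadvertently reduced to the (easier) separate quasi-convexity in $Y$ and in $S$.
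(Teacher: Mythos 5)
Your per-factor analysis is sound, and your treatment of the power block is actually cleaner than the paper's: observing that $\SINR_{vu}(S)$ is a ratio of affine functions with positive denominator, hence quasilinear, and that composing with the nonincreasing $x\mapsto 1/\log_2(1+x)$ yields a quasiconvex $f(\SINR_{vu}(\cdot))$, is a nice alternative to the paper's route (the paper instead argues monotonicity of $f(\SINR_{vu}(S))$ in $s_{vu}$ under a total power constraint satisfied with equality). Your handling of $g_{p_k i}$ matches the paper exactly: affine sum, concavity of $\min\{1,\cdot\}$, convexity and hence strict quasi-convexity in $Y$.

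The genuine gap is that you stop at precisely the point where the proof has to do its work. You correctly identify that quasi-convexity is not preserved under addition, but you then only announce that you ``would argue convexity of the sublevel sets of the full sum directly,'' listing ingredients without executing the argument, and your proposed fallback to the biconvexity of Prop.~\ref{sufficient_condition_for_convexity} is not available: that condition is explicitly a high-$\SINR$ assumption that the proposition does not impose, so invoking it would prove a strictly weaker statement. The paper's resolution of the aggregation step is to fix $Y$ and show that, along the segment from $S^b$ toward the better point $S^a$, \emph{every} summand satisfies $f(\SINR_{vu}(\alpha S^a+(1-\alpha)S^b))<\max\{f(\SINR_{vu}(S^a)),f(\SINR_{vu}(S^b))\}=f(\SINR_{vu}(S^b))$ simultaneously (using the per-link ordering $f(\SINR_{vu}(S^a))<f(\SINR_{vu}(S^b))$ together with monotonicity under the power budget), so the strict inequality survives summation term by term; the $Y$ direction is handled by outright convexity of $D(\cdot,S)$. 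Note also that the paper's proof only establishes strict quasi-convexity separately in $S$ for fixed $Y$ and in $Y$ for fixed $S$, whereas your sublevel-set formulation targets the joint property on $\mathcal{D}_Y\times\mathcal{D}_S$ --- a stronger claim that your outline does not (and the paper's argument would not) deliver. To close the gap you would need either to carry out the termwise argument in the $S$ direction as the paper does, or to genuinely establish convexity of the joint sublevel sets, which does not follow from the ingredients you list.
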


\begin{proof}
See Appendix \ref{App:quasi_convex}.
\end{proof}

Note that the partial derivatives of the relaxed delay-cost function $D_{(i,p)}(Y,S)$,  $(i,p)\in\mathcal{R}: (u,v)\in p$ with respect to $s_{vu}$ and the subgradients of $D_{(i,p)}(Y,S)$ with respect to $y_{vi}$ satisfy
\begin{align}
%%%
\label{D_vs_S} 
\frac{\partial D_{(i,p)}}{\partial s_{ju}} &\overset{(a)}{\geq} 0,\,\,\,\,
\frac{\partial^2 D_{(i,p)}}{\partial s^2_{ju}} \overset{(a)}{\leq} 0, \,\, j\in I_u,\\
\frac{\partial D_{(i,p)}}{\partial s_{vu}} &\overset{(b)}{\leq} 0, \quad \frac{\partial^2 D_{(i,p)}}{\partial s^2_{vu}} \overset{(b)}{\geq} 0,\nonumber\\
%%%
\label{D_vs_Y}
d_{y_{mi}} D_{(i,p)} &\overset{(c)}{\leq} 0, \quad 
d^2_{y_{mi}} D_{(i,p)} \overset{(d)}{\geq} 0,\,\, m\in p,
\end{align}
where $(a)$ follows from that $f({\SINR}_{vu}(S))$ is a decreasing function of ${\SINR}_{vu}(S)$ which is decreasing in $s_{ju}$ for $j\in I_u$, and similarly $(b)$ from that ${\SINR}_{vu}(S)$ is linearly proportional to $s_{vu}$ and $f({\SINR}_{vu}(S))$ is inversely proportional to $\log(1+{\SINR}_{vu}(S))$ and convex in ${\SINR}_{vu}(S)$. Note that $(c)$ follows from (\ref{g_function}), and $(d)$ from the convexity of $D(Y,S)$ in $Y$.

%\derya{do we need to relate the concept of strict quasi-convexity to the subgradient algorithm?}
%{\color{red} Not necessarily.}

%%%%%%%%%%%%%%%%%%%%%%%%%%%%%%%%%%%%%%%%%%%%%%%%%%%%%%
\paragraph{Necessary conditions for optimality of $D(Y,S)$}\label{KKT}

We investigate the necessary conditions, i.e., the Karush–Kuhn–Tucker (KKT) conditions, for a solution of $D(Y,S)$ to be optimal. Assume that $D(Y,S)$ and the constraints are continuously differentiable at $(Y^{**},\,S^{**})$. If  $(Y^{**},\, S^{**})$ gives a local optimum and the optimization problem satisfies some regularity conditions \cite{bertsekas1998nonlinear}, then there exist constants $[\mu_{v,i}]_{v\in V, i\in \mathcal{C}}$, $[\nu_{v,i}]_{v\in V, i\in \mathcal{C}}$, $[\eta_{v}]_{v\in V}$, $[\beta_{v}]_{v\in V}$, $[\gamma_{e,r}]_{e=(u,v)\in E, r=(i,p)\in \mathcal{R}}$ called KKT multipliers, such that the followings hold.

\begin{theo}\label{necessary_optimal}
{\bf Necessary conditions for optimality of $D(Y,S)$.} For a feasible set of power and cache allocations $[s_{vu}]_{(u,v)\in E}$, and $[y_{vi}]_{v\in V, i\in \mathcal{C}}$ to be the solution of RCF in (\ref{mindelay_nonlinearproblem}), the following conditions are necessary.

There exist subgradients for the caching variables that satisfy 
\begin{align}
\label{caching_subgradients}
d_{y_{vi}} D&=\alpha_{vi},\,\, {\rm{if}}\,\, y_{vi}\in (0,1),\nonumber\\ 
d_{y_{vi}} D&\geq \alpha_{vi},\,\, {\rm{if}}\,\, y_{vi}=0,\nonumber\\
d_{y_{vi}} D&<\alpha_{vi},\,\, {\rm{if}}\,\, y_{vi}=1,
\end{align}
where $\alpha_{vi}$, $v\in V$, $i\in \mathcal{C}$ is some constant.%\derya{determined by KKT multipliers.}

The gradients for the power variables should satisfy 
\begin{align}
\label{s_slackness}
\frac{\partial D}{\partial s_{vu}}&\geq -\beta_v+\gamma_{e,r},\,\, {\rm{if}}\,\, s_{vu}=0,\nonumber\\ 
\frac{\partial D}{\partial s_{vu}}&= -\beta_v,\,\, {\rm{if}}\,\,  s_{vu}>0\,\,{\rm{and}}\,\, \sum\limits_{u\in O_v}  s_{vu} = \hat{s}_v,\nonumber\\ 
\frac{\partial D}{\partial s_{vu}}&= 0,\,\, {\rm{if}}\,\, s_{vu}>0\,\,{\rm{and}}\,\, \sum\limits_{u\in O_v} s_{vu} < \hat{s}_v,
\end{align}
for nonnegative constants $\beta_v$, $v\in V$, $\gamma_{e,r}$, $e=(u,v), r\in \mathcal{R}$.
\end{theo}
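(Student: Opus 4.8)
The plan is to read off the two displayed systems (\ref{caching_subgradients}) and (\ref{s_slackness}) as the stationarity and complementary-slackness parts of a (subdifferential) KKT system for the constrained program (\ref{mindelay_nonlinearproblem}), treating the smooth power block and the nonsmooth caching block separately. First I would attach multipliers to every constraint defining $\mathcal{D}_Y$ and $\mathcal{D}_S$: a free multiplier $\eta_v$ to each capacity equality $\sum_{i}y_{vi}=c_v$ (an equality in (\ref{C1relaxedfordelayproblem})), nonnegative $\mu_{vi}$ and $\nu_{vi}$ to the box constraints $y_{vi}\ge 0$ and $y_{vi}\le 1$, a nonnegative $\beta_v$ to each power budget $\sum_{u\in O_v}s_{vu}\le\hat s_v$, and nonnegative $\gamma_{e,r}$ to the nonnegativity constraints $s_{vu}\ge 0$ (organized across the requests $r$ traversing edge $e$); the designated-source entries $y_{vi}=1$, $v\in\mathcal{S}_i$, are fixed and contribute no free conditions. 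Every one of these constraints is affine, so the linear constraint qualification holds automatically, and hence at any local minimizer $(Y^{**},S^{**})$ the generalized KKT conditions are \emph{necessary}—which is all the theorem claims, since by Prop.~\ref{quasi_convex} the objective is only strictly quasi-convex rather than convex.

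For the power variables I would use that $D$ is continuously differentiable in $S$ on the relevant region (away from degenerate points where a link needed by an active request carries zero power), so ordinary gradient stationarity applies. Differentiating the Lagrangian in $s_{vu}$ gives $\partial D/\partial s_{vu}=-\beta_v+\gamma_{e,r}$ with $\beta_v,\gamma_{e,r}\ge 0$ and $\gamma_{e,r}s_{vu}=0$, and I would then obtain the three regimes of (\ref{s_slackness}) by case analysis. When $s_{vu}>0$ the constraint $s_{vu}\ge 0$ is inactive, forcing $\gamma_{e,r}=0$, and then $\beta_v$ is active only when the budget is tight: this yields $\partial D/\partial s_{vu}=-\beta_v$ if $\sum_{u\in O_v}s_{vu}=\hat s_v$ and $\partial D/\partial s_{vu}=0$ if $\sum_{u\in O_v}s_{vu}<\hat s_v$. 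When $s_{vu}=0$ the multiplier $\gamma_{e,r}\ge 0$ may be active, and since $\gamma_{e,r}\ge 0$ the stationarity relation gives the inequality $\partial D/\partial s_{vu}\ge -\beta_v+\gamma_{e,r}$. The resulting signs are consistent with the monotonicity and convexity relations $(a)$–$(b)$ in (\ref{D_vs_S}).

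For the caching variables $\nabla_Y D$ fails to exist at the kinks of the piecewise-linear $g_{p_k i}$, so I would replace gradient stationarity by its subdifferential counterpart \cite{ruszczynski2011nonlinear}, \cite[Ch.~6.3]{bertsekas1998nonlinear}: there exists a subgradient $d_{y_{vi}}D$ with $d_{y_{vi}}D+\eta_v-\mu_{vi}+\nu_{vi}=0$, $\mu_{vi},\nu_{vi}\ge 0$, and complementary slackness. Because the capacity multiplier $\eta_v$ is common to all items stored at $v$, the constant $\alpha_{vi}\eqdef-\eta_v$ is shared across $i$, which is exactly the structure of (\ref{caching_subgradients}). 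Splitting by the active box constraint then gives $y_{vi}\in(0,1)\Rightarrow\mu_{vi}=\nu_{vi}=0\Rightarrow d_{y_{vi}}D=\alpha_{vi}$; $y_{vi}=0\Rightarrow\nu_{vi}=0,\ \mu_{vi}\ge 0\Rightarrow d_{y_{vi}}D\ge\alpha_{vi}$; and $y_{vi}=1\Rightarrow\mu_{vi}=0,\ \nu_{vi}\ge 0\Rightarrow d_{y_{vi}}D\le\alpha_{vi}$. The nonstrict inequality at $y_{vi}=1$ is immediate; I would upgrade it to the strict inequality in (\ref{caching_subgradients}) by invoking the uniqueness of the minimizer implied by strict quasi-convexity (Prop.~\ref{quasi_convex}) together with $d_{y_{vi}}D\le 0$ from $(c)$ in (\ref{D_vs_Y}): if equality $d_{y_{vi}}D=\alpha_{vi}$ held at a saturated upper bound, a cost-preserving perturbation into the interior would exist, contradicting strictness, so $\nu_{vi}>0$ and $d_{y_{vi}}D<\alpha_{vi}$.

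The main obstacle is making the nonsmooth stationarity in $Y$ rigorous at the kinks of $g_{p_k i}$. I would resolve this by identifying the (Clarke) subdifferential of the finite maximum inside $g_{p_k i}$ as the convex hull of the gradients of its active affine pieces, and by applying the sum rule for subdifferentials, which is valid because every remaining factor of $D_{(i,p)}$ is smooth in $Y$; combined with the affine (hence regular) constraints that validate the constraint qualification, this certifies that the subgradient versions of stationarity and complementary slackness hold verbatim, yielding (\ref{caching_subgradients}) and (\ref{s_slackness}) and completing the proof.
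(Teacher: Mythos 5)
Your proposal is correct and follows essentially the same route as the paper's Appendix E: attach multipliers to the affine constraints of $\mathcal{D}_Y$ and $\mathcal{D}_S$, invoke the subdifferential KKT stationarity for the nonsmooth caching block and ordinary gradient stationarity for the smooth power block, and recover the three regimes in each of (\ref{caching_subgradients}) and (\ref{s_slackness}) by complementary slackness. If anything, you are slightly more careful than the paper at the point $y_{vi}=1$, where plain KKT only yields the nonstrict inequality $d_{y_{vi}}D\leq\alpha_{vi}$ and your appeal to strict quasi-convexity to force strictness fills a step the paper glosses over.
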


\begin{proof}
See Appendix \ref{App:necessary}.
\end{proof}

Note that when  $D_{(i,p)}(Y,S)$ in (\ref{relaxed_delay_cost_per_request}) is jointly convex in $(Y,S)$ (which is not true in general and requires a more restrictive condition than Prop. \ref{sufficient_condition_for_convexity} on the power control variables), the conditions in Theorem \ref{necessary_optimal} are also sufficient for optimality of $D(Y,S)$ \cite[Theorem 1]{XiYeh2008}. %{\color{red} What is $s_{vi}$ and can $D_{(i,p)}$ be jointly convex in Y and S?}\derya{fixed $s_{vi}$. We do not know if it can be.}

The following characterizes the optimality conditions for the relaxed delay-cost function $D(Y,S)$ with a general convex power allocation region $\mathcal{D}_S$ which is true from linearity of (\ref{SourcePowerConstraints})% and (\ref{SourcePowerNonnegative})
, and a general convex cache allocation region $\mathcal{D}_Y$. %Note that for KKT, we do not need to have convex cache allocation.

%%%%%%%%%%%%%%%%%%%%%%%%%%%%%%%%%%%%%%%%%%%%%%%%%%%%%%
\paragraph{\emph Generalized KKT conditions that requires strictly convex $\mathcal{D}_S$ for unique optimal solution}

\begin{prop}\label{necessary}
Assume that the cost functions $D_{(i,p)}(Y,S)$ satisfy (\ref{D_vs_S}) and (\ref{D_vs_Y}), and $\mathcal{D}_S$ is convex. Then, for a feasible set of cache and power allocations $(y_{vi})_{v\in V,\, i\in\mathcal{C}}$ and $(s_{vu})_{(v,u)\in E}$ to be a solution of (\ref{mindelay_nonlinearproblem}), the following conditions are necessary: 

For all $v\in V$, $i\in\mathcal{C}$, there exists a constant $\alpha_{vi}$ for which 
\begin{align}
\label{necessary_y_constraint}
d_{y_{vi}} D&=\alpha_{vi},\,\, {\rm{if}}\,\, y_{vi}\in (0,1),\nonumber\\
d_{y_{vi}} D&\geq \alpha_{vi},\,\, {\rm{if}}\,\, y_{vi}=0,\nonumber\\
d_{y_{vi}} D&<\alpha_{vi},\,\, {\rm{if}}\,\, y_{vi}=1.
\end{align}

%%%%%%%%%%%%%%%%%%%%%%%%%%%%%%%%%%%%%%%%
\begin{comment}
I think the following should change and the flow variables F in \cite{XiYeh2008} should be associated with capacity variables Y.
For all feasible $(\Delta y_{vi})_{v\in V,\, i\in\mathcal{C}}$ at $(y_{vi})_{v\in V,\, i\in\mathcal{C}}$
\begin{align}
\label{necessary_y_constraint_v2}
    \sum\limits_{v\in V,\,i\in\mathcal{C}} \frac{\partial D_{(i,p)}}{\partial y_{mi}}(Y,S)\cdot \Delta y_{mi} \geq 0,
\end{align}
where an incremental direction $(\Delta y_{vi})_{v\in V,\, i\in\mathcal{C}}$ at $(y_{vi})_{v\in V,\, i\in\mathcal{C}}$ is said to be feasible if there exists $\bar{\delta}>0$ such that $(y_{vi}+\delta\cdot \Delta y_{vi})_{v\in V,\, i\in\mathcal{C}}$ which is in the ``cache capacity region" (which is assumed to be a convex region) for any $\delta\in (0,\bar{\delta})$.
\end{comment}
%%%%%%%%%%%%%%%%%%%%%%%%%%%%%%%%%%%%%%%%

For all feasible $(\Delta s_{vu})_{(v,u)\in E}$ at $(s_{vu})_{(v,u)\in E}$
\begin{align}
\label{necessary_s_constraint}
    \sum\limits_{(i,p)\in\mathcal{R}} \frac{\partial D_{(i,p)}}{\partial s_{vu}}(Y,S)\cdot \Delta s_{vu} &\geq 0,\\
\label{necessary_s_interference_constraint}    
    \sum\limits_{(i,p)\in\mathcal{R}} \frac{\partial D_{(i,p)}}{\partial s_{ju}}(Y,S^{**})\cdot \Delta s_{ju}&\geq 0,\,\, j\in I_u,
\end{align}
where $\Delta s_{vu}$ at $s_{vu}$ is an incremental direction which is feasible if there exists $\bar{\delta}>0$ such that $(s_{vu}+\delta\cdot \Delta s_{vu})\in\mathcal{D}_S$ for any $\delta\in (0,\bar{\delta})$.

\end{prop}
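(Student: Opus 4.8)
The plan is to read Prop.~\ref{necessary} as the first-order (variational) necessary condition for a local minimizer of a directionally-differentiable objective over a convex feasible region, and to establish it blockwise in $S$ and $Y$. The feasible set of (\ref{mindelay_nonlinearproblem}) is the Cartesian product $\mathcal{D}_Y\times\mathcal{D}_S$: the set $\mathcal{D}_Y$ is a polytope (the linear equality $\sum_{i} y_{vi}=c_v$, the box $y_{vi}\in[0,1]$, and the fixed source coordinates), and $\mathcal{D}_S$ is convex by hypothesis, so the product is convex. Because the set is a product, any direction that is feasible in $S$ alone (holding $Y$ fixed) or in $Y$ alone (holding $S$ fixed) is feasible in the full set, which is exactly what lets me apply the feasible-direction test to each block separately. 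The governing principle I will invoke is the standard one: if $(Y^{**},S^{**})$ is a local minimum over a convex set, then the one-sided directional derivative of $D$ along every feasible direction is nonnegative.

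First I would dispatch the power block. At any finite-cost point $D$ is continuously differentiable in $S$: $f$ is smooth in the $\SINR$ (\ref{wireless_delay}) and, by (\ref{SINR_general_expression}), $\SINR_{vu}(S)$ is a smooth function of the powers, so the chain rule produces well-defined partials $\partial D_{(i,p)}/\partial s_{vu}$ with the signs recorded in (\ref{D_vs_S}). Fix a feasible increment $\Delta s_{vu}$, so $S^{**}+\delta\,\Delta s_{vu}\in\mathcal{D}_S$ for all small $\delta>0$. Requiring $\tfrac{d}{d\delta}D|_{\delta=0^+}\ge 0$ at the local minimum gives $\sum_{(i,p)\in\mathcal{R}}(\partial D_{(i,p)}/\partial s_{vu})\,\Delta s_{vu}\ge 0$, which is (\ref{necessary_s_constraint}). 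The interference inequality (\ref{necessary_s_interference_constraint}) follows from the identical perturbation applied to an interfering power $s_{ju}$, $j\in I_u$: changing $s_{ju}$ perturbs $\SINR_{vu}$, hence the delay of every request traversing link $(v,u)$, and the monotonicity/curvature signs in (\ref{D_vs_S}) guarantee the one-sided derivative exists; local optimality then forces it to be nonnegative along every feasible $\Delta s_{ju}$.

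Next I would handle the caching block, where $D$ is convex in $Y$ (convexity of $g_{p_k i}$ in (\ref{g_function})) but non-differentiable at the kinks where $\sum_{l} y_{p_l i}$ reaches $1$. I therefore replace the gradient by a subgradient: by convexity the subdifferential $\partial_Y D$ is nonempty, and the variational inequality for convex functions over the polytope $\mathcal{D}_Y$ yields a subgradient $d_{y_{vi}}D\in\partial_Y D$ with $\sum_{v,i} d_{y_{vi}}D\cdot\Delta y_{vi}\ge 0$ for every feasible $\Delta y$. At node $v$ the only active equality is $\sum_{i} y_{vi}=c_v$, so feasible directions satisfy $\sum_{i}\Delta y_{vi}=0$ subject to the box bounds; writing the stationarity and complementary-slackness conditions for this constraint, with $\alpha_{vi}$ the constant coming from the node-$v$ capacity multiplier, produces the three-case rule (\ref{necessary_y_constraint}) ($=\alpha_{vi}$ on the interior $y_{vi}\in(0,1)$, $\ge\alpha_{vi}$ where $y_{vi}=0$ is active, $<\alpha_{vi}$ where $y_{vi}=1$ is active). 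This is the caching-only argument of \cite{IoannidisYeh2016} carried over, now coexisting with the smooth power conditions.

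The main obstacle is the non-smoothness in $Y$: the partials of $g$ fail to exist on the kink set, so I must run the whole argument through subgradients and justify the relevant subdifferential calculus (the sum rule over requests in $\mathcal{R}$ and the product rule through $f(\SINR_{p_{k+1}p_k})\,g_{p_k i}(Y)$). For this I would lean on the piecewise-smooth structure of the $\min/\max$ functions, already invoked in the excerpt via \cite{scheimberg1992descent}: each piece is smooth with well-defined one-sided directional derivatives, so the directional-derivative form of optimality remains valid even though $\nabla_Y D$ does not exist globally. A secondary point is that the $Y$–$S$ coupling enters only through the products $f(\SINR_{p_{k+1}p_k})\,g_{p_k i}(Y)$; since the feasible set is a product, perturbing one block while freezing the other is legitimate, and this is precisely what decouples the two families of conditions. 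Finally I would remark that strict convexity of the set $\mathcal{D}_S$ (the hypothesis flagged in the proposition's title) is not needed for the necessity proved here; it serves only to make the power minimizer unique, and together with the joint-convexity remark following Theorem~\ref{necessary_optimal} (via \cite[Theorem~1]{XiYeh2008}) is what would upgrade these necessary conditions to sufficient ones.
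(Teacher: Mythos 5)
Your proposal is correct, and it reaches the same conditions by a route that differs from the paper's in one substantive way. For the caching conditions (\ref{necessary_y_constraint}) you and the paper do the same thing: the paper simply says these ``follow from the arguments in Theorem \ref{necessary_optimal}'', and your subgradient stationarity plus complementary slackness on the node-capacity polytope is exactly that argument. For the power conditions, however, the paper does not argue from feasible directions: it starts from the complementary-slackness conditions (\ref{s_slackness}) already established in Theorem \ref{necessary_optimal}, writes the first-order convexity lower bound $\sum_{(i,p)} D_{(i,p)}(Y,S^a)-\sum_{(i,p)} D_{(i,p)}(Y,S^{**}) \geq \sum_{(i,p)} \tfrac{\partial D_{(i,p)}}{\partial s_{vu}}(Y,S^{**})(s^a_{vu}-s^{**}_{vu})$, and concludes the right-hand side is nonnegative because $\tfrac{\partial D_{(i,p)}}{\partial s_{vu}}=0$ in the interior case of (\ref{s_slackness}); the same is repeated for the interference terms $s_{ju}$, $j\in I_u$. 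Your direct feasible-direction argument (one-sided directional derivative nonnegative at a local minimum over the convex product set) is cleaner on two counts: it does not need the convexity-in-$S$ lower bound, which the paper itself stresses fails for general SINR, and it handles all three cases of (\ref{s_slackness}) uniformly rather than only the interior case the paper's proof spells out. Your closing remark that strict convexity of $\mathcal{D}_S$ is needed only for uniqueness, not for necessity, also matches the discussion immediately following the proposition in the paper. The one thing to keep explicit is that condition (\ref{necessary_s_constraint}) is a per-coordinate statement, so the feasible directions you test must be single-link perturbations that respect the (possibly tight) budget $\sum_{u\in O_v}s_{vu}\leq\hat{s}_v$; your definition of feasible $\Delta s_{vu}$ already builds this in, so no gap results.
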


\begin{proof}
The necessary %and sufficient 
conditions follow from the arguments in Theorem \ref{necessary_optimal}. However, we still need to detail why (\ref{necessary_s_constraint}) is true. 
By the convexity of cost functions, the cost difference of two configurations $(Y,S^a)$ and $(Y,S^{**})$ for any feasible $S^a$ is
\begin{align}
    &\sum\limits_{(i,p)\in\mathcal{R}}  D_{(i,p)}(Y,S^a)-\sum\limits_{(i,p)\in\mathcal{R}}  D_{(i,p)}(Y,S^{**}) \nonumber\\
    &\geq  \sum\limits_{(i,p)\in\mathcal{R}} \frac{\partial D_{(i,p)}}{\partial s_{vu}}(Y,S^{**}) (s^a_{vu}-s^{**}_{vu}) 
    \geq 0,\nonumber
\end{align}
where the last inequality follows from the complementary slackness condition given in (\ref{s_slackness}), i.e., $\frac{\partial D_{(i,p)}}{\partial s_{vu}}(Y,S^{**})=0$ since $s^{**}_{vu}>0$, and $\sum\limits_{u\in O_v} \sum\limits_{(i,p):(u,v)\in p} s^{**}_{vu}(i,p) < \hat{s}_v$. Furthermore,
\begin{align}
    &\sum\limits_{(i,p)\in\mathcal{R}}  D_{(i,p)}(Y,S^a)-\sum\limits_{(i,p)\in\mathcal{R}}  D_{(i,p)}(Y,S^{**}) \nonumber\\
    &\geq  \sum\limits_{(i,p)\in\mathcal{R}} \frac{\partial D_{(i,p)}}{\partial s_{ju}}(Y,S^{**}) (s^a_{ju}-s^{**}_{ju})\geq 0,\,\, j\in I_u,\nonumber    
\end{align}
where the last inequality also follows from the complementary slackness condition in (\ref{s_slackness}).
\end{proof}

%Uniqueness:
If $D_{(i,p)}(Y,S)$ is jointly convex in $(Y,S)$, the above conditions are also sufficient when (\ref{necessary_y_constraint}) holds for all $v\in V$. Furthermore, the optimal $S^{**}$ is unique if $\mathcal{D}_S$ is strictly convex. Moreover, if $D_{(i,p)}(Y,S)$ is strictly convex in $Y$, then the optimal cache allocations for the relaxed cost function $Y^{**}$ are unique as well. We do not prove this statement. However, it can be proven using arguments similar to the those in \cite[Theorem 3]{XiYeh2008}.

%%%%%%%%%%%%%%%%%%%%%%%%%%%%%%%%%%%%%%%%%%%%%%%%%%%%%%
\paragraph{Pareto optimality of $D(Y,S)$}

When $f({\SINR}_{vu}(S))$ is chosen to be (\ref{wireless_delay}), we infer that the sufficiency part of Theorem \ref{necessary_optimal} does not hold since $D_{(i,p)}(Y,S)$ is in general not jointly convex in $(Y,S)$. Hence, we further need to establish the conditions for a Pareto optimal operating point for strictly quasi-convex cost functions (as shown in Prop. \ref{quasi_convex}). We next show that for a solution $(Y^{**},\, S^{**})$ that both satisfies (\ref{necessary_y_constraint}) and (\ref{necessary_s_constraint}), we have the following Pareto optimal property.

%%%PARETO OPTIMALITY DISCUSSION%%%%%%%%%%%%%%%%%%%%%%%%%%%%%%%%%%%%%%%%%%%%%%%%%%%%%%%%%%%%%%%%%%%%%%%%%%%%%%%%%%%%%%%%%%%%%%%%%%%%%%
%In \cite{XiYeh2008}, it was proved in this general setting that an operating point satisfying equilibrium conditions is Pareto optimal. 
%%%%%%

%%%%%%%%%%%%%%%%%%%%%%%%%%%%%%%%%%%%%%%%%%%%%%%%%%%%%%%%%%%%%%%%%%%%%%%%%%%%%%%%%%%%%%%%%%%%%%%%%%%%%%%%%%%%%%%%%%%%%%%%%%%%%%%%%%%%
%this is similar to Theorem 4 from \cite{XiYeh2008}
\begin{theo}
\label{pareto}{\bf Pareto optimality of $D(Y,S)$.} 
%{\color{red} strictly convexity of $\mathcal{D}_Y$ is needed?}\derya{yes}
From Prop. \ref{quasi_convex} on the strict quasi-convexity we have $f({\SINR}_{vu}(S))$ in (\ref{wireless_delay}), $g_{p_k i}(Y)$ in (\ref{g_function}), %cache capacity  region $\mathcal{D}_Y$ is strictly convex,
and the relaxed delay-cost function for RCF in (\ref{mindelay_nonlinearproblem}) are strictly quasi-convex. %power or resource constraints $\mathcal{D}_S$ lie on a strictly convex set
If a pair of feasible cache and power allocations $((y_{vi}^{**}),\, (s^{**}_{vu}))$ %$S$ is a $V\times V-1$ matrix for each $(i,p)$.
satisfies conditions (\ref{necessary_y_constraint})-(\ref{necessary_s_constraint}) \cite[Thm. 3]{XiYeh2008} simultaneously, then the vector of transmission delays $(D_{(i,p)}(Y^{**},S^{**}))_{(i,p)\in\mathcal{R}}$ is Pareto optimal, i.e., there does not exist another pair of feasible allocations $((y_{vi}^{\#}),\, (s^{\#}_{vu}))$ such that $D_{(i,p)}(Y^{\#},S^{\#})\leq D_{(i,p)}(Y^{**},S^{**})$, $\forall (i,p)\in\mathcal{R}$, with at least one inequality being strict. %For $D^o$, we need to use $(Y^*,\, S^*)$.
\end{theo}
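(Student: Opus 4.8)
The plan is to argue by contradiction at the level of directional derivatives, exploiting that the scalarization $D(Y,S)=\sum_{(i,p)\in\mathcal{R}}\lambda_{(i,p)}D_{(i,p)}(Y,S)$ uses strictly positive weights $\lambda_{(i,p)}>0$. Accordingly, I would suppose that the pair $(Y^{**},S^{**})$ satisfying (\ref{necessary_y_constraint})--(\ref{necessary_s_constraint}) is \emph{not} Pareto optimal, so that there exists a feasible pair $(Y^{\#},S^{\#})$ with $D_{(i,p)}(Y^{\#},S^{\#})\le D_{(i,p)}(Y^{**},S^{**})$ for every $(i,p)\in\mathcal{R}$ and strict inequality for at least one request $(i_0,p_0)$. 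Since $\mathcal{D}_Y$ and $\mathcal{D}_S$ are convex, the displacement $(\Delta Y,\Delta S)=(Y^{\#}-Y^{**},\,S^{\#}-S^{**})$ is a feasible direction at $(Y^{**},S^{**})$: the entire segment remains admissible, so the one-sided directional derivatives of $D_{(i,p)}$ and of $D$ along $(\Delta Y,\Delta S)$ are well defined, using subgradients in $Y$ at the nonsmooth points of $g_{p_k i}$.

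The second step translates the dominance relations into sign information on these directional derivatives. Because each $D_{(i,p)}$ is strictly quasi-convex (Prop.~\ref{quasi_convex}) and, by the curvature and monotonicity relations (\ref{D_vs_S})--(\ref{D_vs_Y}), is convex and decreasing in the caching entries $y_{mi}$ and monotone (convex in its own power, concave in each interfering power) in $S$, the inequality $D_{(i,p)}(Y^{\#},S^{\#})\le D_{(i,p)}(Y^{**},S^{**})$ forces the directional derivative of $D_{(i,p)}$ at $(Y^{**},S^{**})$ along $(\Delta Y,\Delta S)$ to be nonpositive, and for the strictly improved request $(i_0,p_0)$ to be strictly negative. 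I would compute this derivative component-wise, collecting the $Y$-subgradient contributions of the convex caching term and the $S$-gradient contributions, separating the link's own power from the interfering powers $s_{ju}$, $j\in I_u$, exactly as the two families of conditions in (\ref{necessary_s_constraint}) and (\ref{necessary_s_interference_constraint}) anticipate.

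Multiplying by the positive rates $\lambda_{(i,p)}$ and summing then yields that the directional derivative of the aggregate $D$ at $(Y^{**},S^{**})$ along $(\Delta Y,\Delta S)$ is strictly negative. This contradicts the stationarity encoded in (\ref{necessary_y_constraint}) and (\ref{necessary_s_constraint})--(\ref{necessary_s_interference_constraint}), which together imply that the directional derivative of $D$ in \emph{every} feasible direction is nonnegative (the caching part from the subgradient characterization in (\ref{necessary_y_constraint}), and the power part from $\sum_{(i,p)\in\mathcal{R}}\frac{\partial D_{(i,p)}}{\partial s_{vu}}\Delta s_{vu}\ge 0$ and its interference counterpart). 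The contradiction shows that no dominating $(Y^{\#},S^{\#})$ can exist, so $(D_{(i,p)}(Y^{**},S^{**}))_{(i,p)\in\mathcal{R}}$ is Pareto optimal.

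I expect the delicate step to be the passage from the \emph{mere} strict quasi-convexity of $D_{(i,p)}$ to a \emph{strictly} negative directional derivative for the strictly improved objective. Strict quasi-convexity on its own yields only a nonpositive directional derivative (a monotone function such as $x^{3}$ can have a vanishing derivative at a point it later strictly undercuts), so naively one would obtain only a nonpositive aggregate derivative and hence no contradiction. The way I would close this gap is to invoke the stronger per-variable structure recorded in (\ref{D_vs_S})--(\ref{D_vs_Y})---genuine convexity in $Y$ together with strict monotonicity in the power entries---which upgrades strict quasi-convexity to the pseudo-convex-type implication that $D_{(i,p)}(Y^{\#},S^{\#})<D_{(i,p)}(Y^{**},S^{**})$ forces a strictly negative directional derivative, in the manner of \cite[Thm.~3]{XiYeh2008}. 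Verifying this implication and checking its compatibility with the boundary cases $y_{vi}\in\{0,1\}$ in (\ref{necessary_y_constraint}) is the crux of the argument.
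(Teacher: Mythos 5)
Your overall strategy---contradiction via a dominating point, convexity of the feasible sets to get a feasible direction, and the stationarity conditions (\ref{necessary_y_constraint})--(\ref{necessary_s_interference_constraint}) to rule out a descent direction for the weighted aggregate---is reasonable, but it hinges on a step that you correctly flag as delicate and then do not actually close: you need the one-sided directional derivative of $D_{(i_0,p_0)}$ at $(Y^{**},S^{**})$ along $(Y^{\#}-Y^{**},\,S^{\#}-S^{**})$ to be \emph{strictly} negative. Strict quasi-convexity (Prop.~\ref{quasi_convex}) only yields that each per-request directional derivative is nonpositive, so the $\lambda$-weighted sum is $\leq 0$, which is perfectly consistent with the stationarity conditions and produces no contradiction. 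Your proposed repair---upgrading to a pseudo-convex-type implication using (\ref{D_vs_S})--(\ref{D_vs_Y})---is not justified by anything in the paper and is doubtful on its face: $D_{(i,p)}(Y,S)=\sum_k f({\SINR}_{p_{k+1}p_k}(S))\,g_{p_k i}(Y)$ is a sum of products of a non-convex function of $S$ and a piecewise-linear function of $Y$, and neither sums nor products of pseudo-convex functions are pseudo-convex in general; moreover the per-coordinate signs in (\ref{D_vs_S})--(\ref{D_vs_Y}) (convex decreasing in the own power, concave increasing in interfering powers) do not combine into a joint pseudo-convexity statement. So as written the argument stops exactly where the difficulty begins.

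The paper avoids this obstacle by never differentiating the individual $D_{(i,p)}$ toward the dominating point. Instead (following \cite[Thm.~4]{XiYeh2008}) it forms the convex combinations $Y^{\alpha}=\alpha Y^{**}+(1-\alpha)Y^{\#}$, $S^{\alpha}=\alpha S^{**}+(1-\alpha)S^{\#}$, uses strict quasi-convexity only at the level of \emph{function values} to propagate the Pareto domination along the segment and obtain the strict aggregate inequality $\sum_{(i,p)}D_{(i,p)}(Y^{\alpha},S^{\alpha})<\sum_{(i,p)}D_{(i,p)}(Y^{**},S^{**})$ for all $\alpha\in(0,1)$, and then, for $\alpha$ near $1$, combines the variational inequalities (\ref{necessary_s_constraint})--(\ref{necessary_s_interference_constraint}) with convexity of $D_{(i,p)}(Y^{\alpha},\cdot)$ in $S$ to conclude $\sum_{(i,p)}D_{(i,p)}(Y^{\alpha},S^{**})\leq\sum_{(i,p)}D_{(i,p)}(Y^{\alpha},S^{\alpha})$. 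This contradicts the fact that $Y^{**}$ minimizes $D(\cdot,S^{**})$ over the convex set $\mathcal{D}_Y$ (a consequence of convexity in $Y$ plus (\ref{necessary_y_constraint})), which is a global block-optimality statement rather than a derivative-sign statement at a single point. If you want to salvage your derivative-based route you would have to prove the pseudo-convexity claim you invoke; otherwise you should restructure along the segment argument, where quasi-convexity is used only where it is strong enough.
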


Given the relaxed delay-cost function $D(Y,S)$ of the form (\ref{DasFunctionofYS}), Theorem \ref{pareto} implies that at the Pareto optimal point, the cost of a request $(i,p)\in\mathcal{R}$ cannot be strictly reduced without increasing the cost of another request $(i',p')\in\mathcal{R}$.

\begin{proof}
See Appendix \ref{App:pareto}.
\end{proof}

%%%%%%%%%%%%%%%%%%%%%%%%%%%%%%%%%%%%%%%%%%%%%%%%%%%%%%

% \paragraph{Pareto optimality of $D^0(X',S')$} We next state the Pareto optimal property for the integral solution $X'\in\mathcal{D}_X$ constructed by rounding $Y^{**}\in \mathcal{D}_Y$, and we do not round the power variables, i.e., $S'=S^{**}$.

% \begin{cor}\label{Pareto_optimal_integral_solution}
% The integral solution that is obtained by rounding $Y^{**}$ to get back the integer variables $X'$ to $D^o(X,S)$, is Pareto optimal, i.e., there does not exist another pair of feasible allocations $(X^{\#},\, S^{\#})$ such that $D^o_{(i,p)}(X^{\#},S^{\#})\leq D^o_{(i,p)}(X',S')$, $\forall (i,p)\in\mathcal{R}$, with at least one inequality being strict.
% \end{cor}

% \begin{proof}
% See Appendix \ref{App:Pareto_optimal_integral_solution}.
% \end{proof}

We next focus on devising algorithms to attain the KKT points and demonstrate their convergence.

%%%%
\subsection{Algorithms to Find the Optimal Solution}
\label{algorithms_optimization}
For general SINR scenario, the sufficient condition for the convexity given in (\ref{sufficient_condition_convex_log_powers}) is not necessarily satisfied, and hence the objective function is not necessarily convex. 
Due to the non-differentiability and non-convexity of $D(Y,S)$ in general SINR condition, we adopt a subgradient projection method solving for the local minima. This is the Pareto optimal solution for $D(Y,S)$ provided that the conditions in Prop. \ref{necessary} hold. In that case, from Theorem \ref{optimalityguarantees}, at each rounding step, the subgradients will guarantee a constant factor approximation for any given $S\in\mathcal{D}_S$.

\paragraph{Algorithm overview} Let $\boldsymbol{y}$ to denote the vectorized caching variable $Y$, namely $\boldsymbol{y} \in [0,1]^{|V||\mathcal{C}| \times 1}$ with $y_{vi} = \boldsymbol{y}_{(i-1)|V|+v} , \forall v \in V, i \in \mathcal{C}$. 

For the $t$-th iteration, the subgradient projection method can be summarized by the following:
\begin{equation}
\begin{aligned}
    & S^{t+1} = S^t + \xi_S^t(\bar{S}^t - S^t),\quad
     \bar{S}^t = [S^t - w_S^t d_S^t]^+_{\mathcal{D}_S}, \\
    & \boldsymbol{y}^{t+1} = \boldsymbol{y}^t + \xi^t_{\boldsymbol{y}}(\boldsymbol{\bar{y}}^t - \boldsymbol{y}^t),
   \quad \boldsymbol{\bar{y}}^t = [\boldsymbol{y}^t - w_Y^t d^t_{\boldsymbol{y}}]^+_{\mathcal{D}_{\boldsymbol{y}}},
\end{aligned}
\label{subgradient_update}
\end{equation}
where $\xi^t_S$, $\xi^t_{\boldsymbol{y}}\in (0,1]$ are step sizes respectively corresponding to $S$ and $\boldsymbol{y}$ , $w^t_S$ and $w^t_Y$ are positive scalars, $[x]^+_A$ denotes projection of vector $x$ on a convex constraint set $A$, and  
\begin{align}
    d_S^t = \nabla_S D(Y^t,S^t), \quad d^t_{\boldsymbol{y}} \in \partial_{\boldsymbol{y}}D(Y^t,S^t)\ ,  \label{subgradient_def}
\end{align}
where $d_S^t$ and $d^t_{\boldsymbol{y}}$ are the subgradients at iteration $t$  with respect to $S$ and $\boldsymbol{y}$, respectively. $\partial_{\boldsymbol{y}}D(Y^t,S^t)$ is the subdifferential with respect to $\boldsymbol{y}$.

\paragraph{Subgradient} 
Note that since $D(Y,S)$ is continuously differentiable in $S$ over set $\mathcal{D}_S$, the subdifferential of $D(Y,S)$ with respect to $S$ will only contain the gradient. Meanwhile, $\partial_{\boldsymbol{y}}D(Y^t,S^t)$ could be explicitly calculated by % {\color{blue}Jinkun: need correct newline format}
evaluating $\partial_{y_{vi}} g_{p_k i}$'s inside the term (\ref{relaxed_delay_cost_per_request}) and using (\ref{DasFunctionofYS}), %\begin{align*}
%    \partial_{y_{vi}}D(Y,S) = & \sum_{p:(i,p)\in\mathcal{R}}\lambda_{(i,p)} \sum_{k = 1}^{|p| - 1} f({\SINR}_{p_{k+1}p_k}(S)) 
%    \partial_{y_{vi}} g_{p_k i}\ ,
%\end{align*}
where\begin{align*}
    \partial_{y_{vi}} g_{p_k i} =  \begin{cases}
    \{1\}, \quad &\text{if }\sum_{l = 1}^k y_{p_l i} < 1\ ,
    \\ \{0\},  \quad &\text{if } \sum_{l = 1}^k y_{p_l i} > 1\ ,
    \\ [0,1], \quad &\text{if }\sum_{l = 1}^k y_{p_l i} = 1\ .
    \end{cases}
\end{align*}

\paragraph{Step size}
The gradient/subgradient magnitudes might be significantly different for $Y$ and $S$, and therefore we calculate their step sizes separately. Note that %in terms of the power variable, although the constraint set $\mathcal{D}_S$ is finite, $F(S)$ and $\nabla_{S}F(S)$ \derya{what is capital F?} are not upper bounded by any finite number since $s_{vu}$ can be arbitrarily close to $0$. Thus the objective function is not
$D(Y,S)$ is not \emph{Lipschitz continuous} in $S$ \cite[Sect. 1.2.2]{bertsekas1998nonlinear} and a constant step size will not provide a convergence guarantee.

We instead use a modified Polyak’s step size %{\color{red}[B. Polyak. Introduction to Optimization. Optimization Software, Inc., 1987]} 
\cite{polyak1987introduction}. Let $D^t = D(\boldsymbol{y}^t,S^t)$, then
\begin{align}
    \xi^t_{\boldsymbol{y}} = \frac{D^t - \hat{D}^t}{\norm{d_{\boldsymbol{y}}^t}^2},\quad
    \xi_S^t = \frac{D^t - \hat{D}^t}{\norm{d_S^t}^2} \label{stepsize}
\end{align}
where 
$\hat{D}^t = \min_{j = 0,\cdots,t} D(\boldsymbol{y}^t,S^t) - \delta_t$ is an estimate of the local minima, $\{\delta_t\}_{t\geq 0}$ is a sequence of positive scalars %\derya{real-valued?} 
satisfying $\lim_{t \to \infty}\delta_t = 0$ and $\lim_{t\to \infty}\sum_{m = 0}^t \delta_m = \infty$. Then the subgradient algorithm is guaranteed to converge to a local minima $D^*$.

\paragraph{Convergence rate}

Using the modified Polyak’s step size \cite{polyak1987introduction} in (\ref{stepsize}), the subgradient projection algorithm is guaranteed to converge to a local minima $D^*_{sub}$, which we provide next.%{\color{red}now we changed the lemma}

\begin{lem}
\label{LemmaConvergence}
Let $(\boldsymbol{y}^t,S^t)$ be generated by the subgradient projection algorithm with modified Polyak's step size (\ref{stepsize}). Then, the algorithm converges to a local minima $D^*_{sub}$, i.e.,
 \begin{align}
     \liminf_{t \to \infty} D^t = D^*_{sub}\ .
 \end{align}
\end{lem}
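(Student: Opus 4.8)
The plan is to recognize iteration (\ref{subgradient_update}) with the step sizes (\ref{stepsize}) as a projected subgradient method under the dynamic (target-level) Polyak rule of \cite{polyak1987introduction}, \cite[Sect.~6.3]{bertsekas1998nonlinear}, and to adapt the corresponding convergence argument to our non-convex, non-differentiable objective. First I would stack the two blocks into a single iterate $z^t=(\boldsymbol{y}^t,S^t)$ over the compact convex feasible set $\mathcal{D}=\mathcal{D}_{\boldsymbol{y}}\times\mathcal{D}_S$, writing $d^t=(d^t_{\boldsymbol{y}},d^t_S)$ for the stacked subgradient from (\ref{subgradient_def}). Compactness of $\mathcal{D}_{\boldsymbol{y}}$ and $\mathcal{D}_S$ gives a uniform bound $\norm{d^t}\le G$ for all $t$, and the projections $[\cdot]^+_{\mathcal{D}_{\boldsymbol{y}}}$, $[\cdot]^+_{\mathcal{D}_S}$ are non-expansive. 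Fixing a local minimizer $z^\star$ in the basin the iterates eventually enter, with value $D^*_{sub}=D(z^\star)$, and using non-expansiveness together with the (quasi-)subgradient inequality $\langle d^t, z^t-z^\star\rangle \ge D^t-D^*_{sub}$, the central estimate I would establish is the Fej\'er-type recursion
\begin{align}
\norm{z^{t+1}-z^\star}^2 \le \norm{z^t-z^\star}^2 - 2\eta^t\big(D^t-D^*_{sub}\big) + (\eta^t)^2\norm{d^t}^2, \nonumber
\end{align}
where $\eta^t$ denotes the effective Polyak step whose common numerator is $D^t-\hat{D}^t$; the per-block split of (\ref{stepsize}) into $\xi^t_{\boldsymbol{y}}$ and $\xi^t_S$ is handled by summing the two analogous contraction terms, since both share this numerator.

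Next I would run the target-level contradiction argument. Let $D_{rec}^t=\min_{j\le t}D^j$ so that $D^t-\hat{D}^t = D^t-D_{rec}^t+\delta_t \ge \delta_t$, and suppose for contradiction that $\liminf_{t\to\infty}D^t=\bar D>D^*_{sub}$. Substituting $\eta^t=(D^t-\hat{D}^t)/\norm{d^t}^2$ into the recursion and simplifying the bracket $-2\eta^t(D^t-D^*_{sub})+(\eta^t)^2\norm{d^t}^2 = -\eta^t\big(D^t-2D^*_{sub}+\hat{D}^t\big)$, I would show that for all large $t$ the factor $D^t-2D^*_{sub}+\hat{D}^t$ is bounded below by a positive constant $c$ proportional to $\bar D-D^*_{sub}$, while $\eta^t\ge \delta_t/G^2$. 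This yields the per-iteration descent
\begin{align}
\norm{z^{t+1}-z^\star}^2 \le \norm{z^t-z^\star}^2 - \frac{c\,\delta_t}{G^2}. \nonumber
\end{align}
Summing over $t$ and invoking $\sum_t\delta_t=\infty$ forces $\norm{z^t-z^\star}^2\to-\infty$, which is impossible; this is precisely where the hypothesis $\sum_t\delta_t=\infty$ is used. Hence $\liminf_t D^t\le D^*_{sub}$, and since the iterates remain in the basin where $D\ge D^*_{sub}$, the two bounds give $\liminf_t D^t=D^*_{sub}$, with $\delta_t\to 0$ ensuring the target level tightens to the local-minimum value.

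The main obstacle will be justifying the subgradient inequality $\langle d^t, z^t-z^\star\rangle \ge D^t-D^*_{sub}$, which is automatic under global convexity but fails here because $D(Y,S)$ is nonconvex in $S$. I would resolve this using the strict quasi-convexity of $D(Y,S)$ established in Prop.~\ref{quasi_convex}: for a differentiable quasi-convex function, $D(z^\star)\le D(z^t)$ implies $\langle \nabla_S D(z^t), z^t-z^\star\rangle\ge 0$, and the explicit $\boldsymbol{y}$-subdifferential displayed before the Lemma supplies the analogous inequality for the piecewise-linear caching block, so within the basin of $z^\star$ the gradient in $S$ and the chosen subgradient in $\boldsymbol{y}$ are ascent directions relative to $z^\star$. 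The delicate point is guaranteeing that the iterates do not leave this basin, so that quasi-convexity remains applicable throughout; here the damping factors $\xi^t\in(0,1]$ together with the diminishing $\delta_t$ keep the effective steps small near convergence, confining $z^t$ to the region where the one-sided subgradient inequality holds. Making this confinement rigorous, rather than the routine telescoping algebra, is what I expect to require the most care.
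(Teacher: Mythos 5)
Your proposal follows essentially the same route as the paper's proof: the classical target-level Polyak argument by contradiction, built on the Fej\'er-type one-step recursion, the lower bound $D^t-\hat{D}^t\geq\delta_t$, telescoping, and the divergence of $\sum_t\delta_t$ to contradict boundedness of $\norm{z^t-z^\star}^2$. Your algebraic simplification of the bracket to $-\eta^t\bigl(D^t-2D^*_{sub}+\hat{D}^t\bigr)$ is a slightly cleaner packaging of the same estimate the paper obtains by comparing $D^l-D(\boldsymbol{\hat{y}},\hat{S})$ with $D^l-\hat{D}^l$ along a subsequence.

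One caution on the part you flag as the main obstacle. You correctly identify that the recursion needs $\langle d^t,z^t-z^\star\rangle\geq D^t-D^*_{sub}$, which is not automatic for a nonconvex $D$; but the remedy you propose does not deliver it. Strict quasi-convexity (Prop.~\ref{quasi_convex}) only yields the \emph{zero right-hand side} inequality $\langle\nabla_S D(z^t),z^t-z^\star\rangle\geq 0$ when $D(z^\star)\leq D(z^t)$, and with only $\geq 0$ in place of $\geq D^t-D^*_{sub}$ the Fej\'er recursion loses its descent term entirely, so the telescoping argument produces no contradiction. Be aware that the paper's own proof does not resolve this either: it simply asserts the standard convex-case inequality ``for a subgradient projection method with any stepsize'' and applies it to the nonconvex objective, restricting attention to a neighborhood of the local minimizer. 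So your instinct that this is the delicate step is right, and your plan is no weaker than the published argument on this point, but a genuinely rigorous closure would require something stronger than quasi-convexity --- e.g., local convexity of $D$ in $S$ on the basin (as in the high-SINR condition of Prop.~\ref{sufficient_condition_for_convexity}) or a star-convexity/error-bound type condition at $z^\star$.
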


\begin{proof}
See Appendix \ref{App:LemmaConvergence}.
\end{proof}

%Note that one can make $\delta$ to be $\delta_t$ that varies with each step, and the algorithm is guaranteed to converge strictly into $D^*$ if $\lim_{t \to \infty}\delta_k = 0$ and $\lim_{t\to \infty}\sum_{i = 0}^t \delta_i = \infty$, the proof would be similar.

\begin{algorithm}[b!]\small
\caption{\fontfamily{ptm}\selectfont Projected Subgradient Method}
\label{Projected_Subgradient_Algo}
\begin{algorithmic}[1]
\State{Choose $S^{0}$, $\boldsymbol{y}^{0}$, small scalar $\epsilon > 0$ and let $t=0$ }
\Do
    \State{Compute subgradient $d^t_S, d^t_{\boldsymbol{y}}$ by (\ref{subgradient_def})}
    % \begin{align*}
    %     d^t_S = \nabla_S D(Y^t,S^t), \quad
    %     d^t_{\boldsymbol{y}} \in \partial_{\boldsymbol{y}} D(Y^t,S^t)
    % \end{align*}
    \State{Determine step sizes $\xi^t_{\boldsymbol{y}}$, $\xi_S^t $ according to (\ref{stepsize})}
    \State{Compute projected variables $\boldsymbol{\bar{y}}^t$ and $\bar{S}^t $ by (\ref{subgradient_update})}
    % \begin{align*}
    %     \Tilde{S}^t = S^t - \xi^t_S d^t_S,\quad
    %     \mathrm{vec}(\Tilde{Y})^t = \boldsymbol{y}^t - \xi^t_{\boldsymbol{y}}d^t_{\boldsymbol{y}}
    % \end{align*}
    %\State{Solve for the projection onto the feasible set}
    % \begin{align*}
    %     &\bar{S}^t = [\Tilde{S}^t]^+ = \argmin\limits_{\bar{S}^t \in D_S} ||\bar{S}^t - \Tilde{S}^t||,\quad
    %     \\&\boldsymbol{\bar{y}}^t = [\mathrm{vec}(\Tilde{Y})^t]^+ = \argmin\limits_{\boldsymbol{\bar{y}}^t \in \mathcal{D}_{\boldsymbol{y}}} ||\boldsymbol{\bar{y}}^t - \mathrm{vec}(\Tilde{Y})^t||
    % \end{align*}
    \State{Update $S^{t+1}$ and $\boldsymbol{y}^{t+1}$ by (\ref{subgradient_update})}
    \State{Let $t = t+1$}
\doWhile{$D^t - D^{t-1} > \epsilon$}
\State{Let $(\boldsymbol{y}^{*}_{sub},S^{*}_{sub}) = (\boldsymbol{y}^{t},S^{t})$}
\State{Implement {\em b) Rounding}.}
\end{algorithmic}
\end{algorithm}

The subgradient projection algorithm converges %at best 
linearly. To see this, define $(\boldsymbol{y}^*_{sub},S^*_{sub})$ to be the set of $\boldsymbol{y}$ and $S$ that attains the local minima $D^*_{sub}$.
%then Lemma \ref{LemmaConvergenceRate} holds.
%\textcolor{red}{To see this}, one would first observe that 
Given the objective $D(\boldsymbol{y},S)$ and its subgradients are bounded near $(\boldsymbol{y}^*_{sub},S^*_{sub})$, we say $D(\boldsymbol{y},S)$ has a \emph{sharp set of minima} near and inside $(\boldsymbol{y}^*_{sub},S^*_{sub})$ (see Proof of Lemma \ref{LemmaConvergenceRate}), namely there exists $\mu > 0$ such that for any $S\in \mathcal{D}_S$ and $\boldsymbol{y} \in \mathcal{D}_{Y}$,
\begin{equation}
\label{SharpSetofMinima}
    D(\boldsymbol{y},S) - D^*_{sub} \geq \mu L(\boldsymbol{y},S)\ ,
\end{equation}
where  
\begin{align}
\label{L_function}
L(\boldsymbol{y},S) = \min_{\substack{y \in \boldsymbol{y}^{*}_{sub} \\ s \in S^*_{sub}}}{\sqrt{\norm{\boldsymbol{y} - y}^2 + \norm{S - s}^2}}
\end{align}
denotes the distance from $(\boldsymbol{y},S)$ to the set $(\boldsymbol{y}^*_{sub},S^*_{sub})$. %we denote by $Y^*$ and $Y^{**}$ the optimal solutions that minimize $D^o(Y,S)$ and $D(Y,S)$ respectively. 

%We next demonstrate that the subgradient projection algorithm converges to the local minima at a linear rate.
The existence of \emph{sharp set of minima} further leads to a bound of improvement by each step, i.e.,
\begin{align*}
    \left(L^{t+1}\right)^2 &\leq \left(L^{t}\right)^2
    - \frac{D^t - D^*_{sub}}{U^2}.
\end{align*}
where $L^t = L(\boldsymbol{y}^{t},S^{t})$. Lemma \ref{LemmaConvergenceRate} below then follows by aggregating the bounds for iteration $0$ through $t-1$.

\begin{lem}
\label{LemmaConvergenceRate}
Let $(\boldsymbol{y}^{t},S^t)$ be generated by the subgradient projection algorithm with modified Polyak's step size in (\ref{stepsize}). Then, $L$ linearly converges according to 
\begin{equation}
    L(\boldsymbol{y}^{t},S^t) \leq \left(1 - \frac{\mu^2}{U^2}\right)^\frac{t}{2} L(\boldsymbol{y}^0,S^0) \label{convergelinearly}
\end{equation}
where $\mu$ is the finite positive scalar in (\ref{SharpSetofMinima}), and $U$ is a finite positive scalar with $\norm{d^t_{\boldsymbol{y}}}^2 + \norm{d^t_S}^2 \leq U$ for any $t$.
\end{lem}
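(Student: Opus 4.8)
The plan is to prove (\ref{convergelinearly}) by establishing a one-step geometric contraction of the squared distance $L^2$ to the local-optimum set, and then unrolling the recursion. The backbone is the standard one-step analysis of a projected subgradient method, adapted to the two separately scaled step sizes in (\ref{stepsize}) and to the fact that $D(\boldsymbol{y},S)$ is only strictly quasi-convex (Prop. \ref{quasi_convex}) rather than convex.

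First I would fix an iterate $(\boldsymbol{y}^t,S^t)$ and let $(y,s)$ be the point of $(\boldsymbol{y}^*_{sub},S^*_{sub})$ attaining the minimum in (\ref{L_function}), so that $(L^t)^2=\norm{\boldsymbol{y}^t-y}^2+\norm{S^t-s}^2$. Since $\mathcal{D}_S$ and $\mathcal{D}_{\boldsymbol{y}}$ are convex and the local optima lie inside them, the Euclidean projections $[\cdot]^+_{\mathcal{D}_S}$ and $[\cdot]^+_{\mathcal{D}_{\boldsymbol{y}}}$ are non-expansive. Applying this to $\bar S^t$ and $\boldsymbol{\bar y}^t$, using the convexity of $\norm{\cdot}^2$ to absorb the relaxation factors $\xi^t_S,\xi^t_{\boldsymbol{y}}$ from (\ref{subgradient_update}), and expanding the squared norms yields a block-wise bound of the form
\begin{align*}
\norm{\bar{S}^t - s}^2 \leq \norm{S^t - s}^2 - 2 w_S^t \langle d_S^t, S^t - s\rangle + (w_S^t)^2 \norm{d_S^t}^2,
\end{align*}
together with the analogous inequality for the $\boldsymbol{y}$-block. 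Summing the two blocks and substituting the step-size choices from (\ref{stepsize}) so that the linear (cross) terms aggregate into a multiple of $(D^t-D^*_{sub})$ gives the per-step decrease $(L^{t+1})^2 \leq (L^t)^2 - (D^t-D^*_{sub})^2/U^2$, where $U$ bounds $\norm{d^t_{\boldsymbol{y}}}^2+\norm{d^t_S}^2$.

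Next I would invoke the sharp-set-of-minima property (\ref{SharpSetofMinima}), $D(\boldsymbol{y},S)-D^*_{sub} \geq \mu L(\boldsymbol{y},S)$, whose existence I would justify from the boundedness of the subgradients together with strict quasi-convexity: near the local optimum the sublevel sets grow at least linearly in the distance to the minimizing set, which furnishes a positive $\mu$. Substituting $D^t-D^*_{sub}\geq \mu L^t$ into the one-step bound collapses it to the contraction $(L^{t+1})^2 \leq (1-\mu^2/U^2)(L^t)^2$, and iterating this inequality from $0$ through $t$ and taking square roots gives (\ref{convergelinearly}).

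I expect the main obstacle to be the descent inequality in the middle step, which normally relies on the global subgradient inequality $\langle d^t, z^t-z^*\rangle \geq D^t-D^*_{sub}$ and hence on convexity. Because $D$ is only strictly quasi-convex, I would restrict the analysis to a neighborhood of $(\boldsymbol{y}^*_{sub},S^*_{sub})$ — guaranteed to contain the tail of the sequence by Lemma \ref{LemmaConvergence} — and establish there a local error-bound/sharp-minima estimate, taking care that the two separate step sizes $\xi^t_S$ and $\xi^t_{\boldsymbol{y}}$ combine coherently into a single decrease of the joint distance $L$, and that $\hat D^t \to D^*_{sub}$ through the vanishing sequence $\delta_t$. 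Verifying that $\mu>0$ is uniform over this neighborhood and that the subgradient bound $U$ holds there is the delicate part.
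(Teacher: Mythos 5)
Your proposal follows essentially the same route as the paper: establish the sharp-minima bound $D(\boldsymbol{y},S) - D^*_{sub} \geq \mu L(\boldsymbol{y},S)$ on a neighborhood containing the tail of the iterates, derive the one-step Polyak decrease of the squared distance to the minima set, and combine the two to obtain the geometric contraction $(L^{t+1})^2 \leq (1-\mu^2/U^2)(L^t)^2$. The only notable differences are that the paper justifies $\mu>0$ via a subgradient mean value theorem along the shortest ascending path (lower-bounding the minimal subgradient norm off the minima set) where you appeal more loosely to linear growth of the sublevel sets, and that your per-step decrement $(D^t-D^*_{sub})^2/U^2$ carries the square that the final substitution of the sharp-minima inequality actually requires (the paper's displayed per-step bound omits it).
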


\begin{proof}
See Appendix \ref{App:LemmaConvergenceRate}.
\end{proof}

We summarize the subgradient projection method that achieves the local minima in Algorithm \ref{Projected_Subgradient_Algo}.

%\paragraph{Convergence rate result}
%\input{SIV-Simulations} For clarity I just left this as it is and created a separate .tex file for section IV
%%%
\begin{figure}[t!]
    \centering
    \includegraphics[width=0.9\linewidth]{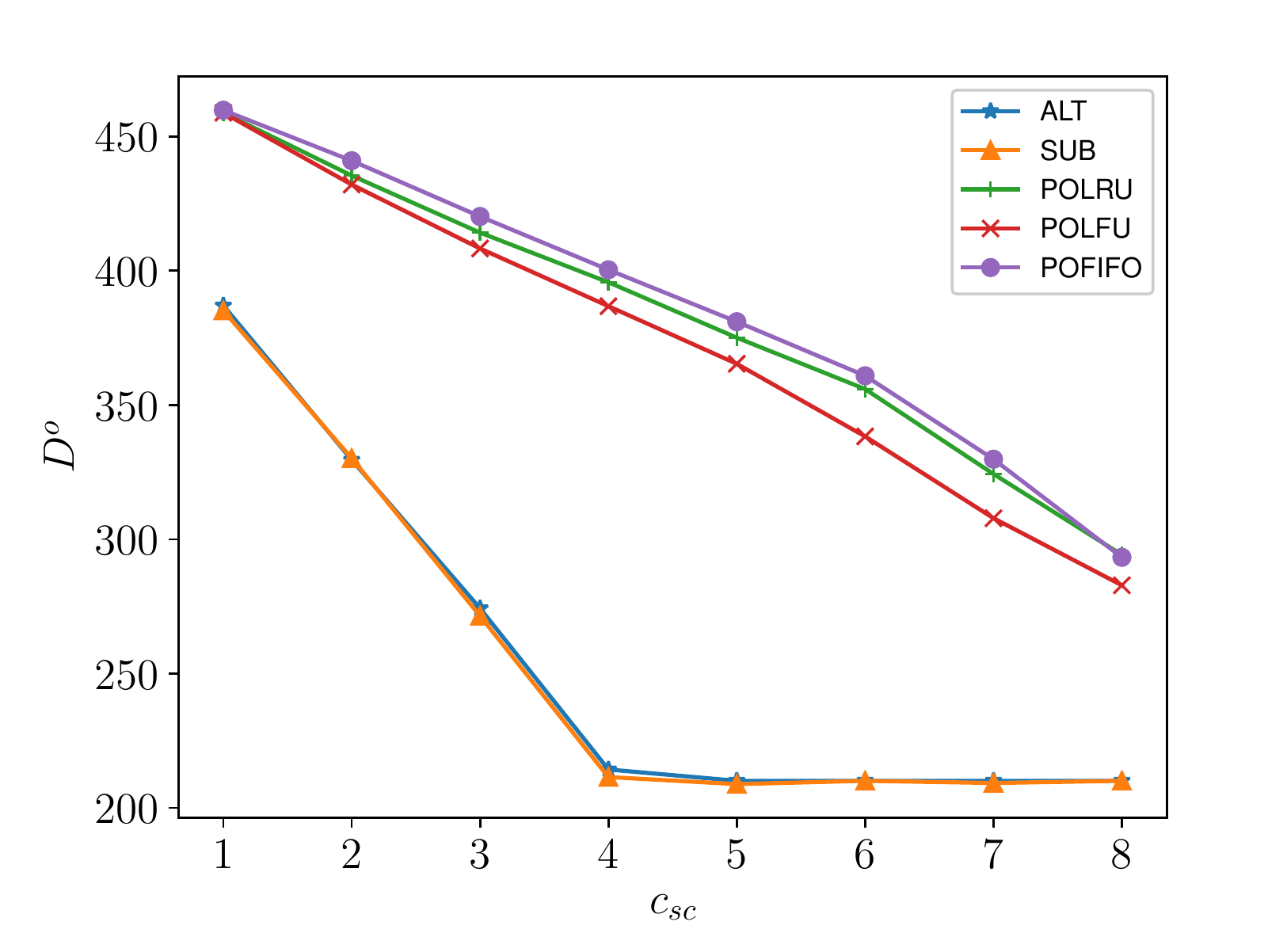}
    \vspace{-0.25cm}
    %\caption{$D^o$ versus increasing cache capacity. All SC cache capacities are equal to $c_{sc}$, MC cache capacity is $c_{mc} = \min(2c_{sc},8)$, and $\hat{s} = 400$.}
    \caption{$D^o$ versus increasing cache capacity. All SC cache capacities are equal to $c_{sc}$, MC cache capacity is $c_{mc} = \min(2c_{sc},8)$, $\gamma = 0.25$ and $\hat{s}_v = 100$.}
    \label{fig:s1cache}
\end{figure}

%%%%%%%%%%%%%%%%%%%%%%%%%%%%%%%%
\section{Numerical Results}
\label{numericalresults}
In this section, we present numerical results obtained from several simulation scenarios. We simulate a network in accordance with the model in Sect. \ref{systemmodel} and compare the performance of Algorithms \ref{Projected_Subgradient_Algo} (SUB) and \ref{alternating} (ALT) 
to the LRU, LFU and FIFO cache replacement policies. We pair these policies with power optimization to have a fair comparison. To make this distinction clear, we name these power optimal (PO) policies POLRU, POLFU and POFIFO when reporting results. 

{\em Simulation setup.} We simulate a network with 30 users, 4 SCs and a single MC. Users are distributed uniformly, while SCs are distributed using Lloyd's algorithm \cite{lloyd1982least}, inside the coverage area of the MC. The users do not cache items, and each one requests a single item at a given time, from a catalog of 10 items, based on a Zipf distribution with parameter $\gamma$ which can be interpreted as the popularity distribution of content items. The backhaul is the source for all items while the MC and SCs are not designated sources for any item. When a request for an item arrives at the MC or an SC, if the item is not already cached there, it is retrieved from an uplink node that caches the item or from the backhaul and then cached. We calculate gains using pathloss exponent $n = 3.7$ and we set noise power to $N_u = 1$ for all $u \in V$. For Algorithms 
\ref{Projected_Subgradient_Algo} and \ref{alternating}, 
we set the initial points $S^0$ and $Y^0$ so that $s^0_{vu} = \hat{s}_v / |O_v|$ and $y^0_{vi} = 0$ for all $v, u \in V$ and $i \in \mathcal{C}$. While the algorithms we propose can optimize a snapshot of the network, LRU, LFU and FIFO policies assume a cache history. Therefore, we simulate these policies in a time-slotted fashion and compare their average results to our algorithms. We now discuss our observations from three distinct simulation settings. We include any other necessary parameters and details in these discussions.

\begin{figure}[t!]
    \centering
    \includegraphics[width=0.9\linewidth]{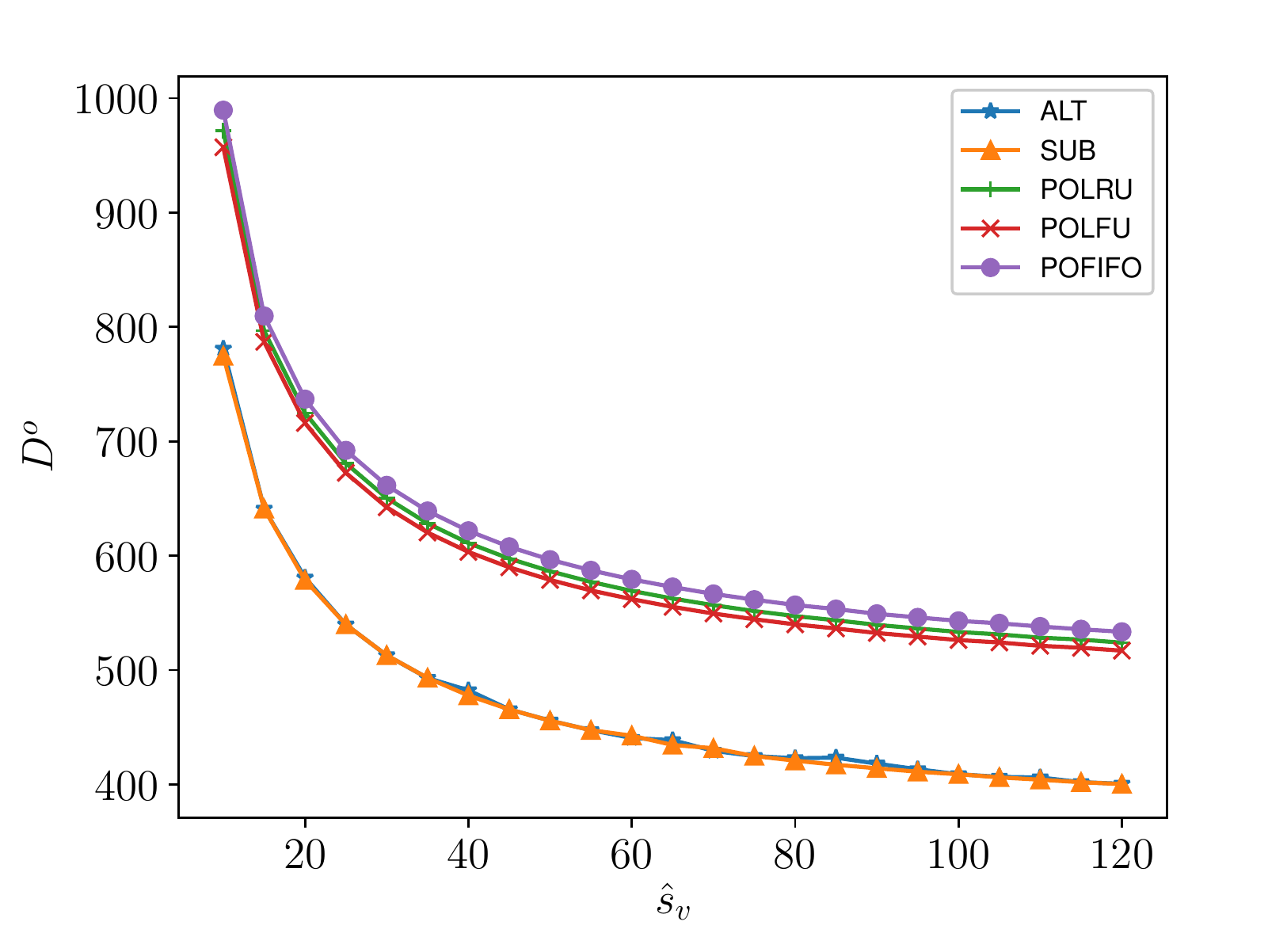}
    \vspace{-0.25cm}
    %\caption{$D^o$ versus $\hat{s}$. All SC cache capacities are equal to $c_{sc} = 2$, and $c_{mc} = 4$. $\gamma = 0.25$.}
    \caption{$D^o$ versus $\hat{s}_v$, which is the same for all $v \in V$. All SC cache capacities are equal to $c_{sc} = 2$, and $c_{mc} = 4$. $\gamma = 0.25$.}
    \label{fig:s2power}
\end{figure}

{\em Effect of cache capacity constraints.} We present the results of this setting in Fig. \ref{fig:s1cache}. We see that, with increasing cache capacities, our joint optimization algorithms reduce delay at a much faster rate %\derya{can we say exponentially or polynomially faster?} \volkan{That's what it used to say but Edmund pointed out that has a very specific mathematical meaning so we avoid using it now} 
compared to traditional replacement algorithms. SUB and ALT 
algorithms 
also achieve a point of minimum delay given large enough caches, while traditional algorithms do not converge to such a point and perform worse than SUB and ALT 
with all values of the cache capacity constraint. Numerically, SUB and ALT methods achieve at least 15\% less delay, with up to 50\% less delay at $c_{sc} = 4$, under the given parameters.

{\em Effect of power constraints.} We present the results of this setting in Fig. \ref{fig:s2power}. We observe that traditional methods and our algorithms show a similar decreasing trend in delay when the total power budget is increased. However, we can still observe the benefit of jointly optimizing power with caching: our algorithms achieve 25\% less delay compared to the best performing traditional method, POLFU.

{\em Convergence of SUB and ALT.} We present the results of this setting in Fig. \ref{fig:s3conv}. We observe that while both algorithms reach the minimum in similar times, ALT has a much steeper initial decrease in the relaxed delay-cost $D$. This is because the number of power variables is significantly smaller than the number of caching variables. ALT optimizes them separately which results in the initial steep decrease where power variables are being optimized whereas SUB optimizes them jointly leading to longer durations for each iteration.

\begin{figure}[t!]
    \centering
    \includegraphics[width=0.9\linewidth]{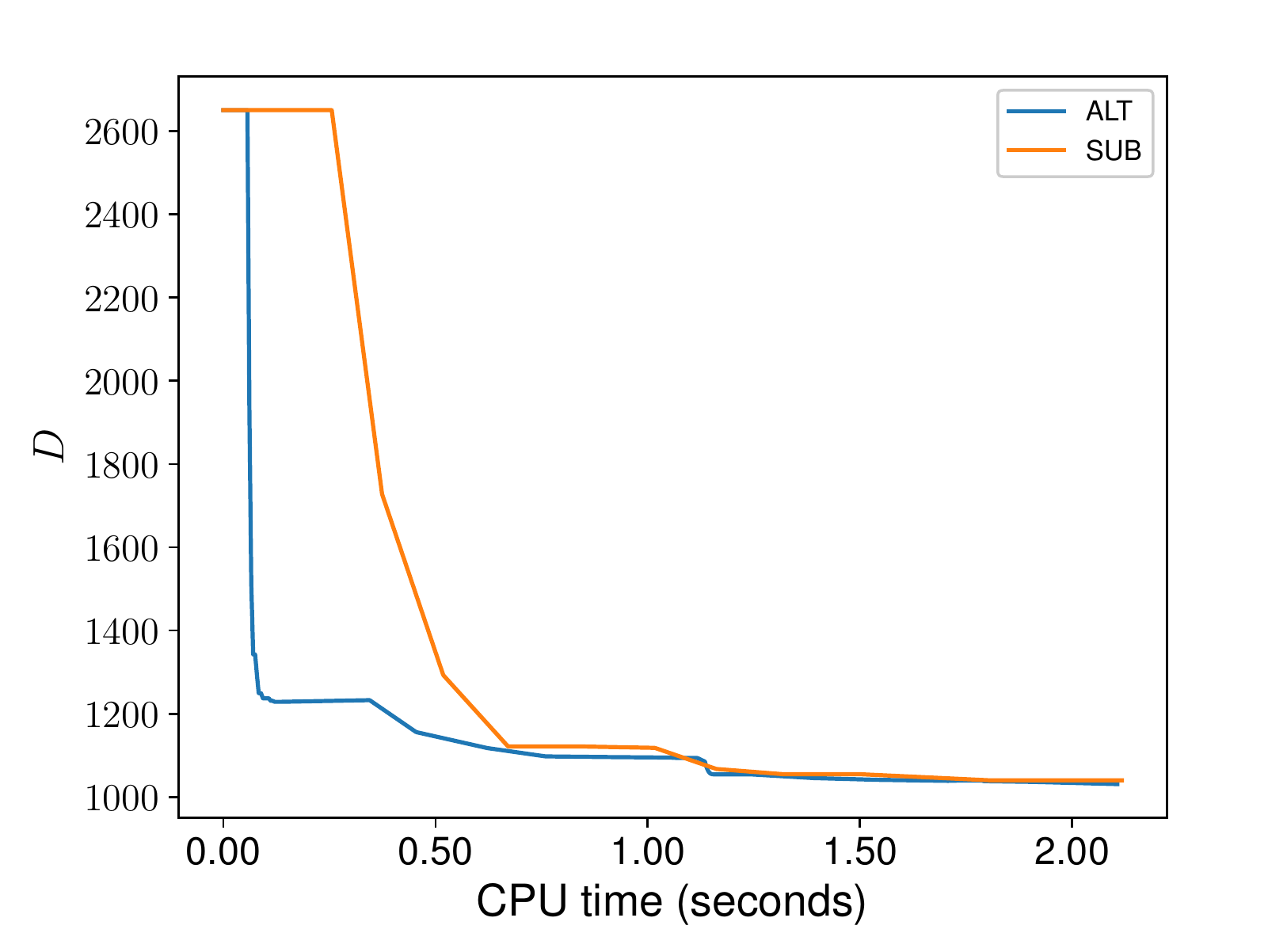}
    \vspace{-0.2cm}
    \caption{Convergence of SUB and ALT algorithms as described by $D$ with respect to time. $c_{sc} = 2$, $c_{mc} = 4$, $\hat{s}_v = 100$ and $\gamma = 0.25$.}
    \label{fig:s3conv}
\end{figure}

%%%
%%%
\section{Conclusion}
\label{conc} 
We considered the problem of joint power and caching optimization to minimize the transmission delay for a stationary request process in wireless HetNets. Because this problem is NP-complete, we studied several approximation methods that rely on convex relaxation and rounding of caching variables to construct an integral solution. 
More specifically, we provided necessary and sufficient conditions for the optimality of RCF to the joint optimization problem. We demonstrated Pareto optimality of the solution to RCF, and devised two solution techniques: alternating optimization technique when RCF satisfies biconvexity, and subgradient projection algoritm for general non-convex RCF. 
The results of our approach can enable the wireless HetNets to optimally exploit the resources to minimize the use of the backhaul connection, and hence to minimize the transmission delays in both mobile devices and the infrastructure, and to support latency-sensitive applications. They also quantify the potential cost savings from the deployment of SCs. More generally, optimal caching and power control algorithms represent a key enabling technology for realizing the potential of mobile edge computing and fog computing. 

Possible extensions of this work include devising decentralized techniques, and designing both uncoded and coded caching  via distributed adaptive stochastic descent algorithms. Furthermore, these results will inform the design of edge cloud architectures, by clarifying the relative benefits of centralized and distributed implementation.  
Querying for content can be seen as a simplified case of querying for a result of a computation or service. Thus, caching and routing algorithms are essential ingredients of an edge computing infrastructure which optimally schedules processing and job flows. Therefore, quantifying the potential cost savings from the deployment of SCs with caching capabilities via optimal routing algorithms is critical. Extensions also include a more detailed analysis of backhaul costs to effectively route the requests and control the traffic load on SCs and to overcome the transmission delay incurred in the backhaul due to limited bandwidth and dynamic channel conditions. 

\begin{comment}
Furthermore, device-to-device (D2D) communication is a promising approach that allows decentralized opportunistic short-range communication within SCs.

This task (minimizing the transmission delay costs) is exacerbated by the joint routing and power control decisions to be taken to be able to control or reduce the interference and congestion in wireless HetNets and optimize resource usage.

Primary use case for caching is nonstreaming data. 

    {\bf Towards distributed and adaptive caching.} While our emphasis is on developing centralized and scalable algorithms, we believe that our approach will pave the way for devising resource effective distributed and adaptive caching techniques in HetNets, and for studying the relative advantages and disadvantages of centralization and partial centralization, as enabled by Cloud-RAN and edge cloud architectures.
\end{comment}

\appendices

%%%
\section{Proof of Theorem \ref{optimalityguarantees}}\label{App:optimalityguarantees}
The proof follows from relaxing and bounding techniques. By Goemans and Williamson \cite{AgeSvi2004,GoeWil1994}, we have %the following inequality:
\begin{align}
\label{AgeevApprox}
\prod\limits_{l=1}^k (1-y_{p_l i})&\leq 1-(1-(1-1/k)^k)\min\Big\{1,\sum\limits_{l=1}^k y_{p_l i}\Big\}\nonumber\\
&\leq 1-(1-1/e)\min\Big\{1,\sum\limits_{l=1}^k y_{p_l i}\Big\},
\end{align}
as $(1-1/k)^k\leq 1/e$. On the other hand, we have $D^o(Y,S)=\mathbb{E}_{\nu}[D^o(X,S)]$. Using (\ref{AgeevApprox}), the %average delay (cost) approximation 
relaxed cost function 
for serving a request (given that each request $(i, p)\in\mathcal{R}$ is well-routed) can be written concisely in terms of the allocation: 
\begin{align}
\label{Dlowerbound}
D(Y,S)%&=\sum\limits_{(i,p)\in\mathcal{R}}\lambda_{(i,p)}\sum\limits_{k=1}^{|p|-1}f({\SINR}_{p_{k+1}p_k}(S))\cdot \Big[1-(1-(1-1/k)^k)\min\Big\{1,\sum\limits_{l=1}^k \mathbb{E}_{\nu}\left[x_{p_l i}\right]\Big\}\Big] \nonumber\\
&\overset{(a)}{\geq}  \sum\limits_{(i,p)\in\mathcal{R}}{\lambda_{(i,p)}{\sum\limits_{k=1}^{|p|-1}f({\SINR}_{p_{k+1}p_k}(S)) \prod\limits_{l=1}^k (1-y_{p_l i}) }}\nonumber\\
&=D^o(Y,S),
\end{align}
where $(a)$ is due to (\ref{AgeevApprox}). We also upper bound $D(Y,S)$ as
\begin{align}
\label{Dupperbound}
D(Y,S)%=\sum\limits_{(i,p)\in\mathcal{R}}\lambda_{(i,p)}\sum\limits_{k=1}^{|p|-1}f({\SINR}_{p_{k+1}p_k}(S))\cdot\nonumber\\
%&\left[1-(1-(1-1/k)^k)\min\Big\{1,\sum\limits_{l=1}^k \mathbb{E}_{\nu}\left[x_{p_l i}\right]\Big\}\right] \nonumber\\
%&=\sum\limits_{(i,p)\in\mathcal{R}}\lambda_{(i,p)}\sum\limits_{k=1}^{|p|-1}f({\SINR}_{p_{k+1}p_k}(S)) \cdot\Big[1-(1-(1-1/k)^k)\min\Big\{1,\sum\limits_{l=1}^k \mathbb{E}_{\nu}\left[x_{p_l i}\right]\Big\}\Big] \nonumber\\
&\overset{(b)}{\leq} \sum\limits_{(i,p)\in\mathcal{R}}{\lambda_{(i,p)}{\sum\limits_{k=1}^{|p|-1}f({\SINR}_{p_{k+1}p_k}(S)) \mathbb{E}_{\nu}\left[g_{p_k i}(X)\right] }},
\end{align}
where $(b)$ follows from the concavity of the $\min$ operator and employing (\ref{g_function}), and 
\begin{align}
&\mathbb{E}_{\nu}\left[g_{p_k i}(X)\right]\overset{(c)}{\leq} 1-(1-1/e)\mathbb{E}_{\nu}\Big[\min\Big\{1,\sum\limits_{l=1}^k x_{p_l i}\Big\}\Big] \nonumber\\
&=1-(1-1/e)\mathbb{E}_{\nu}\Big[1-\prod\limits_{l=1}^k x_{p_l i}\Big] \nonumber\\
&=1-(1-1/e)\Big(1-\prod\limits_{l=1}^k y_{p_l i}\Big), \nonumber
\end{align}
where $(c)$ is due to $(1-1/k)^k\leq 1/e$. Hence, $D^o(Y,S)\geq (D(Y,S)-D^{\rm{ub}}/e)/(1-1/e)$. Hence,
\begin{align}
D(Y,S)\leq
%D^{\rm{ub}}-(1-1/e)(D^{\rm{ub}}-D^o(Y,S))\nonumber\\
D^{\rm{ub}}/e+(1-1/e)D^o(Y,S),
\end{align}

From (\ref{Dlowerbound}) and (\ref{Dupperbound}), we have 
\begin{align}
\label{relaxedsolutionbounds}
\!\!\!\! D(Y,S) \geq D^o(Y,S)\geq (D(Y,S)-D^{\rm{ub}}/e)/(1-1/e).   
\end{align}

Because $Y^*\in\mathcal{D}_Y$ is optimal,  $D^o(Y^*,S)\leq D^o(Y^{**},S)$. From (\ref{relaxedsolutionbounds}) and the optimality of $Y^{**}$, 
$D^o(Y^{**},S)\leq D(Y^{**},S)\leq D(Y^{*},S)\leq D^{\rm{ub}}/e+(1-1/e)D^o(Y^{*},S)$.

\section{Proof of Proposition \ref{sufficient_condition_for_convexity}}\label{App:sufficient_condition_for_convexity}
Note that the objective function $D(S,Y)$ in (\ref{DasFunctionofYS}) is convex in caching variables $Y$. It is convex in $S$ if every $f({\SINR}_{p_{k+1}p_k}(S))$ is convex in $S$ where ${\SINR}_{p_{k+1}p_k}(S)$ is concave in $S$ for all $k$. However, given that ${\SINR}_{p_{k+1}p_k}$ is strictly increasing, $\nabla^2 {\SINR}_{p_{k+1}p_k}(S)$ cannot be negative definite. Letting $C=f^{-1}$, based on the observations \cite{huang2006distributed}, if
\begin{align}
\label{concavity_condition}
C^{''}(x)\cdot x+C^{'}(x)\leq 0,\quad \forall x\geq 0,    
\end{align} 
then $C$ is concave in $P\eqdef (\log(s_{vu}(i,p)))_{(v,u)\in E,\, (i,p):(u,v)\in p}$ (power measured in dB). From above relation, since $C^{''}(x)=\frac{2f^{'}(x)^2-f(x)f^{''}(x)}{f(x)^3}\leq 0$, and $f(x)\geq 0$, $\forall x\geq 0$, we have $2f^{'}(x)^2-f(x)f^{''}(x)\leq 0$, yielding $0\leq \frac{2f^{'}(x)^2}{f(x)}\leq f^{''}(x)$. Hence, $f({\SINR}_{p_{k+1}p_k})$ is convex in $P$.

The condition in (\ref{concavity_condition}) equivalently yields the following for $f^{-1}({\SINR}_{p_{k+1}p_k})$ to be concave in $P$:
\begin{align}
\label{concave_P_condition}
\frac{2f^{'}(x)^2-f(x)f^{''}(x)}{f(x)^3}\cdot x-\frac{f^{'}(x)}{f(x)^2}\leq 0.
\end{align}
Reordering the terms in (\ref{concave_P_condition}) we get the desired result. 

We note that in addition to the condition in (\ref{concave_P_condition}) on convexity in $\log$ power variables, $f({\SINR}_{vu}(S)$ might be convex in $s_{vu}$ when the nodes have a total power constraint as given by (\ref{SourcePowerConstraints}) which is satisfied with equality. We next explain the requirement under which the convexity in powers holds. Assuming that the total transmit power is fixed and equal to $\hat{s}$, since the routings are predetermined, a user's the total received power lumped with the noise power, coined $\bar{s}$, will be fixed given the allocation of the transmit power. Hence, we have $f({\SINR}_{vu}(S)=\frac{1}{\log_2(1+\frac{s_{vu}}{\bar{s}-s_{vu}})}$ for $(v,u)\in E$. Using this relation and computing the second order derivative of $f({\SINR}_{vu}(S)$ with respect to $s_{vu}$, it can be shown via algebraic manipulation that $f({\SINR}_{vu}(S)$ is convex in $s_{vu}$ provided that $s_{vu}\leq \frac{3}{4}\bar{s}$. We also emphasize that this condition is not restricted to the high $\SINR$ regime and is valid under general $\SINR$ values, provided that the total transmission power constraint is satisfied with equality.

%%%%
\section{Proof of Proposition \ref{quasi_convex}}\label{App:quasi_convex}
With no loss of generality, assume that $S^a\neq S^b$ such that $D(Y,S^a)<D(Y,S^b)$. Then, a function $D$ defined on a convex subset $\mathcal{D}_Y\times \mathcal{D}_S$ of a real vector space is strictly quasi-convex in $S$ given $Y$, if for all $S^a\neq S^b$ and $\alpha \in (0,1)$ we have the following condition: 
\begin{align}
\label{strict_convex_D_in_S}
D(Y,\alpha S^a+(1-\alpha)S^b)<D(Y,S^b).    
\end{align}
Since $D$ is linear in $f$, it is easily verified that a sufficient condition for $D(Y,S)$ to be strictly quasi-convex in $S$ is when $f$ is strictly quasi-convex.
%{\color{red} Quasiconvexity is not preserved by sum.}
While the sum of quasiconvex functions defined on the same domain need not be quasiconvex, we will detail why quasi-convexity will be preserved in our setting.

For $f({\SINR}_{vu}(S^a))<f({\SINR}_{vu}(S^b))$ we have
\begin{align}
\label{f_quasi_convex}
&f({\SINR}_{vu}(\alpha S^a+(1-\alpha)S^b))\nonumber\\
&=\frac{1}{\log_2(1+{\SINR}_{vu}(\alpha S^a+(1-\alpha)S^b))}\nonumber\\
&<\frac{1}{\log_2(1+{\SINR}_{vu}(S^b))}
= f({\SINR}_{vu}(S^b)),
\end{align}
where we used (\ref{wireless_delay}). Under a total power constraint it is easy to note that $f({\SINR}_{vu}(S))$ decreases in $s_{vu}$. Since $f({\SINR}_{vu}(\alpha S^a+(1-\alpha)S^b))<\max\Big\{f({\SINR}_{vu}(S^a)),f({\SINR}_{vu}(S^b))\big\}=f({\SINR}_{vu}(S^b))$ for all $(v,u)\in E$ and $\max$ is Schur-convex, and by ordering the summands of $\sum\limits_{k=1}^{|p|-1}f({\SINR}_{p_{k+1}p_k}(S))$ in a decreasing manner, we have that
\begin{align}
\max\Big\{\sum\limits_{k=1}^{|p|-1}f({\SINR}_{p_{k+1}p_k}(S^a)),\sum\limits_{k=1}^{|p|-1}f({\SINR}_{p_{k+1}p_k}(S^b))\Big\}\nonumber\\
=\sum\limits_{k=1}^{|p|-1}f({\SINR}_{p_{k+1}p_k}(S^b)).\nonumber    
\end{align} 
Hence, we infer from (\ref{f_quasi_convex}) and the order-preserving mapping that (\ref{strict_convex_D_in_S}) holds which implies $D(Y,S^a)<D(Y,S^b)$.

We can also observe that
\begin{align}
&f(\alpha {\SINR}_{vu}(S^a)+(1-\alpha ){\SINR}_{vu}(S^b))\nonumber\\
&=\frac{1}{\log_2(1+\alpha {\SINR}_{vu}(S^a)+(1-\alpha ){\SINR}_{vu}(S^b))}\nonumber\\
&< \frac{1}{\log_2(1+{\SINR}_{vu}(S^b))}=f({\SINR}_{vu}(S^b)),
\end{align}
where inequality follows from that ${\SINR}_{vu}(S^a)> {\SINR}_{vu}(S^b)$ which implies $\alpha {\SINR}_{vu}(S^a)+(1-\alpha ){\SINR}_{vu}(S^b)> {\SINR}_{vu}(S^b)$. Furthermore, $f({\SINR}_{vu}(S))$ is a monotonically decreasing function of ${\SINR}_{vu}(S)$. Hence, $f({\SINR}_{vu}(S))$ is strictly quasi-convex in ${\SINR}_{vu}(S)$.

%%%
Note that $D(Y,S)$ is convex with respect to set $\mathcal{D}_Y$, which is due to (\ref{g_function}). 
Note also that $D$ is strictly quasi-convex in $Y$. This can be shown using the condition that if 
\begin{align}
D(Y^a,S)<D(Y^b,S),    
\end{align}
then for any $\lambda\in (0,1)$ it holds that
\begin{align}
D(\lambda Y^a+(1-\lambda)Y^b,S)<D(Y^b,S).    
\end{align}
To verify $D$ is strictly quasi-convex, it is necessary that $g_{p_k i}(Y)$'s given in (\ref{g_function}) are strictly quasi-convex. Assume $g_{p_k i}(Y^a)<g_{p_k i}(Y^b)$. Then for all $Y^a\neq Y^b$ and $\lambda\in (0,1)$:
\begin{align}
&g_{p_k i}(\lambda Y^a+(1-\lambda)Y^b)\nonumber\\
&=1-a_k\min\Big\{1,\sum\limits_{l=1}^k \lambda y_{p_l i}^a+(1-\lambda)y_{p_l i}^b\Big\} \nonumber\\
&\overset{(a)}{\leq} 1-a_k\Big[\lambda\min\Big\{1,\sum\limits_{l=1}^k  y_{p_l i}^a\Big\}
+(1-\lambda)\min\Big\{1,\sum\limits_{l=1}^k y_{p_l i}^b\Big\}\Big]\nonumber\\
&=\lambda g_{p_k i}(Y^a)+(1-\lambda)g_{p_k i}(Y^b)
<g_{p_k i}(Y^b),
\end{align}
where $(a)$ is due to the concavity of the $\min$ function. This verifies that $D$ is strictly quasi-convex in $Y$. %Hence, this completes the proof.

%%%%
\section{Proof of Theorem \ref{necessary_optimal}}\label{App:necessary}

%\derya{simplified kkt conditions for simplex constraint.} we do not have time for this.

If the optimization problem satisfies some regularity conditions \cite{bertsekas1998nonlinear}, the necessary conditions for a solution of the nonlinear RCF in (\ref{mindelay_nonlinearproblem}) are given by the KKT conditions. Then, the following four groups of conditions hold:

(i) For stationary, the local solution $(Y^{**},\,S^{**})$ needs to satisfy the following subgradients with respect to $Y=(y_{vi})$ and the gradients with respect to $S=(s_{vu})$ values:
\begin{align}
d_{y_{vi}} D(Y^{**},S)+ \sum\limits_{v\in V,\, i\in\mathcal{C}} \mu_{vi}-\sum_{v\in V,\, i\in\mathcal{C}} &\nu_{vi} \nonumber\\
+ \sum_{v\in V}\eta_v &=0, \,\, v\in V,\, i\in\mathcal{C},\nonumber\\
\frac{\partial D(Y,S^{**})}{\partial s_{vu}}+ \sum\limits_{v\in V} \beta_v-\sum_{e\in E,\, r\in\mathcal{R}} \gamma_{er}&=0, \nonumber\\
(u,v)\in V,\,(i,p)\in\mathcal{R}.\nonumber
\end{align}

(ii) For primal feasibility, we require that 
\begin{align}
\label{Primal1}
y^{**}_{vi}-1&\leq 0,\quad v\in V,\,\, i\in \mathcal{C},\\
\label{Primal2}
-y^{**}_{vi}&\leq 0,\quad v\in V,\,\, i\in \mathcal{C},\\
\label{Primal3}
\sum\limits_{i\in \mathcal{C}}{y_{vi}} &\leq c_v,\,\,\, v\in V,\\
\label{Primal4}
%\sum\limits_{u\in O_v} \sum\limits_{(i,p):(u,v)\in p} s^{**}_{vu}(i,p)-\hat{s}_v&\leq 0,\quad v\in V,\,\, i\in \mathcal{C},\\
\sum\limits_{u\in O_v}  s^{**}_{vu}-\hat{s}_v&\leq 0,\quad v\in V,\\
\label{Primal5}
-s^{**}_{vu}&\leq 0,\quad v,\,u\in V.
\end{align}
For dual feasibility, we require that
\begin{align}
    \mu_{vi},\,\, \nu_{vi},\,\, \eta_{v},\,\, \beta_v&\geq 0, \,\, \quad v\in V,\,\ i\in\mathcal{C},\\ 
    \gamma_{e,r}&\geq 0,\quad e=(u,v)\in E, \,\, r=(i,p)\in\mathcal{R}.\nonumber
\end{align}

(iii) The complementary slackness conditions are given as
\begin{align}
\label{slackness}
\mu_{vi}(y^{**}_{vi}-1)&=0,\quad v\in V,\,\, i\in \mathcal{C},\nonumber\\
\nu_{vi}\cdot y^{**}_{vi}&=0,\quad v\in V,\,\, i\in \mathcal{C},\nonumber\\
\eta_v\Big(\sum\limits_{i\in \mathcal{C}}{y_{vi}} - c_v\Big)&=0,\quad v\in V,\nonumber\\
%\beta_v\Big(\sum\limits_{u\in O_v} \sum\limits_{(i,p):(u,v)\in p} s^{**}_{vu}(i,p)-\hat{s}_v\Big)=0,\quad v\in V,\nonumber\\
\beta_v\Big(\sum\limits_{u\in O_v}  s^{**}_{vu}-\hat{s}_v\Big)&=0,\quad v\in V,\nonumber\\
\gamma_{e,r}\cdot s^{**}_{vu}&=0,\quad e=(u,v)\in E,\,\, r\in\mathcal{R}.
\end{align}

(iv) The subgradients with respect to $y_{vi}$ should satisfy
\begin{align}
\label{gradients_caching}
&d_{y_{vi}} D(Y^{**},\, S)\nonumber\\
&=\sum\limits_{u\in O_v} \sum\limits_{(i,p):(u,v)\in p}  \!\!\!\!\lambda_{(i,p)} \sum\limits_{k=1}^{|p|-1} \!\!f({\SINR}_{p_{k+1}p_k}(S)) d_{y_{vi}} g_{p_k i}(Y^{**})\nonumber\\
&=-\mu_{vi}+\nu_{vi}+\eta_v=\alpha_{vi},\quad v\in V,\,\, i\in\mathcal{C}. 
\end{align}
When $y_{vi}=0$, constraint (\ref{Primal1}) is eliminated, and $d_{y_{vi}} D(Y^{**},\, S)\geq \alpha_{vi}$. Similarly, if $y_{vi}=1$, constraint (\ref{Primal2}) is eliminated, and $d_{y_{vi}} D(Y^{**},\, S)< \alpha_{vi}$. This verifies (\ref{caching_subgradients}).

(v) The gradients with respect to the power variables are 
\begin{align}
\label{gradients_power}
%&\frac{\partial D(Y,\, S^{**})}{\partial s_{vu}(i,p)}\nonumber\\
%&=\sum\limits_{u\in O_v} \sum\limits_{(i,p):(u,v)\in p} \lambda_{(i,p)} \frac{\partial f({\SINR}_{p_v p_u}(S^{**}))}{\partial s_{vu}(i,p)} g_{ui}(Y)\nonumber\\
&\frac{\partial D(Y,\, S^{**})}{\partial s_{vu}}\nonumber\\
&=\sum\limits_{u\in O_v} \sum\limits_{(i,p):(u,v)\in p} \lambda_{(i,p)} \frac{\partial f({\SINR}_{v u}(S^{**}))}{\partial s_{vu}} g_{ui}(Y)\nonumber\\
&=-\beta_v+\gamma_{e,r},\quad e=(u,v)\in E, \,\, r=(i,p)\in\mathcal{R}.
\end{align}
Solving the gradients (\ref{gradients_caching}) and (\ref{gradients_power}), along with the complementary slackness conditions in (\ref{slackness}), we obtain the necessary conditions, which concludes the proof.

%%%
\section{Proof of Theorem \ref{pareto}}\label{App:pareto}
The proof follows from employing similar techniques as in \cite[Sect. VI-C, Theorem 4]{XiYeh2008}.

We initially assume that the joint power and cache allocation problem is strictly quasi-convex. With this assumption, for a fixed cache allocation $(y_{vi}^{**})_{v\in V,\, i\in\mathcal{C}}$,  the relaxed delay-cost is a convex function of $S$. Therefore, any feasible power allocation $S^*$ satisfying (\ref{necessary_s_constraint}) satisfies that
\begin{align}
\label{feasibleS}
D(Y,S^{**})=\underset{S\in\mathcal{D}_S}{\min} \sum\limits_{(i,p)\in\mathcal{R}} D_{(i,p)}(Y,S).    
\end{align}
Given any feasible power allocation, if condition (\ref{necessary_y_constraint}) holds at cache allocation $(y_{vi}^{**})_{v\in V,\, i\in\mathcal{C}}$, then
\begin{align}
\label{feasibleY}
D(Y^{**},S)=\underset{Y\in\mathcal{D}_Y}{\min} \sum\limits_{(i,p)\in\mathcal{R}} D_{(i,p)}(Y,S).    
\end{align}

In this case, any initial power and cache allocation configuration can be driven to a limiting $(Y^{**},\,S^{**})$ such that the condition (\ref{necessary_s_constraint}) is satisfied at $S^{**}$ given $Y^{**}$, and $Y^{**}$ satisfies (\ref{necessary_y_constraint}) given $S^{**}$. We now suppose that under the more general %convex-capacity-region model
convex power allocation region model, 
there are algorithms that also can drive the power and cache configuration to a limit $(Y^{**},\,S^{**})$ such that the conditions (\ref{necessary_s_constraint}) and (\ref{necessary_y_constraint}) hold simultaneously. Although global optimality cannot be guaranteed, the Pareto optimality can be shown.

%%%%
Suppose that $D(Y^{\#},\, S^{\#})$ Pareto dominates $D(Y^{**},S^{**})$. Without loss of generality, we can assume
\begin{align}
D_{(m,r)}(Y^{\#},\, S^{\#}) < D_{(m,r)}(Y^{**},S^{**}).
\end{align}
Because both $Y^{\#}$ and $Y^{**}$ belong to $\mathcal{D}_Y$, and $\mathcal{D}_Y$ is strictly convex, $Y^{\beta}=\beta Y^{**}+(1-\beta)Y^{\#}$ is achievable for all $\beta\in(0,1)$. Moreover, $Y^{\#}\neq Y^{**}$ because otherwise, $S^{\#}\neq S^{**}$, and from Pareto domination we would have
\begin{align}
\sum\limits_{(i,p)\in\mathcal{R}} D_{(i,p)}(Y^{**},\, S^{\#}) < \sum\limits_{(i,p)\in\mathcal{R}} D_{(i,p)}(Y^{**},S^{**}).
\end{align}
However, this contradicts (\ref{feasibleS}). Therefore, $Y^{\beta}$ is in the interior of $\mathcal{D}_Y$ for any $\beta\in (0,1)$. 

From the same reasoning, $S^{\#}\neq S^{**}$ and $S^{\alpha}=\alpha S^{**}+(1-\alpha)S^{\#}$ is feasible for any $\alpha\in [0,1]$ simply by linearity of feasible power allocations. %(or the total power budget constraint).

Since $D_{(i,p)}$ is strictly quasi-convex, $D(Y^{\alpha},S^{\alpha})$ Pareto dominates $D(Y^{**},S^{**})$ as well for any $\alpha\in (0,1)$, since $D_{(m,r)}(Y^{\alpha},S^{\alpha})<D_{(m,r)}(Y^{**},S^{**})$, and $D_{(i,p)}(Y^{\alpha},S^{\alpha})\leq D_{(i,p)}(Y^{**},S^{**})$ for $(i,p)\neq (m,r)$. Summing up all the terms on LHS and RHS, we have for any $\alpha\in (0,1)$
\begin{align}
\label{D_alpha_vs_D_star}
    \sum\limits_{(i,p)\in\mathcal{R}} D_{(i,p)}(Y^{\alpha},S^{\alpha})<\sum\limits_{(i,p)\in\mathcal{R}} D_{(i,p)}(Y^{**},S^{**}).
\end{align}

%%%%%
By optimality condition (\ref{necessary_s_constraint}) and the fact that $Y^{\alpha}$ is in the interior of $\mathcal{D}_Y$ for any $\alpha\in (0,1)$, we have for any $\alpha\in (0,1)$ and $(v,u)\in E$,
\begin{align}
%&\sum\limits_{(i,p)\in\mathcal{R}}\frac{\partial D_{(i,p)}}{\partial s_{vu}(i,p)}(Y^{**},S^{**}) (S^{\#}-S^{**})\nonumber\\
%&\overset{(a)}{=}\frac{1}{1-\alpha}\sum\limits_{(i,p)\in\mathcal{R}}\frac{\partial D_{(i,p)}}{\partial s_{vu}(i,p)}(Y^{**},S^{**}) (S^{\alpha}-S^{**})\nonumber\\
%&>\frac{1}{1-\alpha}\sum\limits_{(i,p)\in\mathcal{R}}\frac{\partial D_{(i,p)}}{\partial s_{vu}(i,p)}(Y^{**},S^{**}) (\bar{S}^{\alpha}-S^{**})\geq 0,\nonumber
&\sum\limits_{(i,p)\in\mathcal{R}}\frac{\partial D_{(i,p)}}{\partial s_{vu}}(Y^{**},S^{**}) (S^{\#}-S^{**})\nonumber\\
&\overset{(a)}{=}\frac{1}{1-\alpha}\sum\limits_{(i,p)\in\mathcal{R}}\frac{\partial D_{(i,p)}}{\partial s_{vu}}(Y^{**},S^{**}) (S^{\alpha}-S^{**})\nonumber\\
&>\frac{1}{1-\alpha}\sum\limits_{(i,p)\in\mathcal{R}}\frac{\partial D_{(i,p)}}{\partial s_{vu}}(Y^{**},S^{**}) (\bar{S}^{\alpha}-S^{**})\geq 0,\nonumber
\end{align}
where $(a)$ follows from $S^{\alpha}=\alpha S^{**} + (1-\alpha) S^{\#}$, and $\bar{S}^{\alpha}$ is some power matrix strictly dominating $S^{\alpha}$. Following similar steps, from (\ref{necessary_s_interference_constraint}) for $j\in I_u,\,\, (i',p'): (u,j)\in p'$   
\begin{align}
%&\sum\limits_{(i,p)\in\mathcal{R}} \frac{\partial D_{(i,p)}}{\partial s_{ju}(i',p')}(Y^{**},S^{**})(S^{\#}-S^{**})\nonumber\\
%&>\frac{1}{1-\alpha}\sum\limits_{(i,p)\in\mathcal{R}}\frac{\partial D_{(i,p)}}{\partial s_{ju}(i',p')}(Y^{**},S^{**}) (\bar{S}^{\alpha}-S^{**})\geq 0.\nonumber
&\sum\limits_{(i,p)\in\mathcal{R}} \frac{\partial D_{(i,p)}}{\partial s_{ju}}(Y^{**},S^{**})(S^{\#}-S^{**})\nonumber\\
&>\frac{1}{1-\alpha}\sum\limits_{(i,p)\in\mathcal{R}}\frac{\partial D_{(i,p)}}{\partial s_{ju}}(Y^{**},S^{**}) (\bar{S}^{\alpha}-S^{**})\geq 0.\nonumber
\end{align}

Since $D_{(i,p)}$ is twice continuously differentiable on $S$, there exists $\epsilon>0$ such that for all $\alpha\in [1-\epsilon,1)$
\begin{align}
%\label{S_alpha_inequality}
%\sum\limits_{(i,p)\in\mathcal{R}}\frac{\partial D_{(i,p)}}{\partial s_{vu}(i,p)}(Y^{\alpha},S^{**}) (S^{\alpha}-S^{**}) &\geq 0, \\
%\label{S_alpha_interference_inequality}
%\sum\limits_{(i,p)\in\mathcal{R}} \frac{\partial D_{(i,p)}}{\partial s_{ju}(i',p')}(Y^{\alpha},S^{**})(S^{\alpha}-S^{**})&\geq 0,\,\, \nonumber\\
%j\in I_u,\,\, (i',p'): (u,j)\in p'.
\label{S_alpha_inequality}
\sum\limits_{(i,p)\in\mathcal{R}}\frac{\partial D_{(i,p)}}{\partial s_{vu}}(Y^{\alpha},S^{**}) (S^{\alpha}-S^{**}) &\geq 0, \\
\label{S_alpha_interference_inequality}
\sum\limits_{(i,p)\in\mathcal{R}} \frac{\partial D_{(i,p)}}{\partial s_{ju}}(Y^{\alpha},S^{**})(S^{\alpha}-S^{**})&\geq 0,\,\, \nonumber\\
j\in I_u,\,\, (i',p'): (u,j)\in p'.
\end{align}
Combining (\ref{S_alpha_inequality}) with the convexity of $D_{(i,p)}(Y^{\alpha},\cdot)$ implies
\begin{align}
    \sum\limits_{(i,p)\in\mathcal{R}} D_{(i,p)}(Y^{\alpha}, S^{**})&\leq \sum\limits_{(i,p)\in\mathcal{R}} D_{(i,p)}(Y^{\alpha}, S^{\alpha}) \nonumber\\
    &<\sum\limits_{(i,p)\in\mathcal{R}} D_{(i,p)}(Y^{**}, S^{**})
\end{align}
where the second inequality comes from (\ref{D_alpha_vs_D_star}). However, this result contradicts (\ref{feasibleY}). Hence, there does not exist another pair of feasible allocations $((y_{vi}^{\#}),\, (s^{\#}_{vu}))$ such that $D_{(i,p)}(Y^{\#},S^{\#})\leq D_{(i,p)}(Y^{**},S^{**})$, $\forall (i,p)\in\mathcal{R}$, with at least one inequality being strict.

\section{Proof of Lemma \ref{LemmaConvergence}} \label{App:LemmaConvergence}%\derya{notation conflict. k vs t}
For convenience, we just prove for the case where the stepsize is jointly computed as
\begin{align*}
    \xi^t = \frac{D(\boldsymbol{y}^{t},S^t) - \hat{D}^t}{\norm{d_S^t}^2 + \norm{d_{\boldsymbol{y}}^t}^2} 
\end{align*}

Let $D^{*}$ to be the local minima attained with $(\boldsymbol{y}^*_{sub},S^*_{sub})$ that satisfies the KKT condition. To obtain a contradiction, we assume that there exist $\epsilon > 0$ such that
\begin{equation*}
    \liminf_{t \to \infty} D(\boldsymbol{y}^{t},S^t) > D^*_{sub} + \epsilon. %+ \delta
\end{equation*}
Then by the continuity of $D(\boldsymbol{y},S)$, there exist $(\boldsymbol{\hat{y}},\hat{S})$ near the local minima such that $D(\boldsymbol{\hat{y}},\hat{S}) = D^*_{sub} + \epsilon$. Therefore, there exist a subsequence $\{l\}$ of $\{t\}$ such that $D(\boldsymbol{y}^l,S^l) \geq D(\boldsymbol{\hat{y}},\hat{S}) + \delta_l$. Thus, we obtain
\begin{equation*}
    \hat{D}^l = \min_{0 \leq j \leq l} - \delta_l \geq D(\boldsymbol{\hat{y}},\hat{S})
\end{equation*}
for any $l \geq l_0$ where $l_0$ is some positive integer, i.e., it holds that
\begin{equation*}
    D(\boldsymbol{y}^l,S^l) - D(\boldsymbol{\hat{y}},\hat{S}) \geq D(\boldsymbol{y}^l,S^l) - \hat{D}^l.
\end{equation*}

Furthermore, for a subgradient projection method with any stepsize, we have
\begin{equation*}
    \norm{x^{t+1} - \hat{x}}^2 \leq \norm{x^t - \hat{x}}^2 - 2\xi^t\left(f(x^t) - f(\hat{x})\right) + (\xi^t)^2\norm{d^t}^2,
\end{equation*}
where $\hat{x}$ is any available variable. Therefore, it holds that
\begin{align*}
    &(\norm{\boldsymbol{y}^{l+1} - \boldsymbol{\hat{y}}}^2 + \norm{S^{l+1} - \hat{S}}^2) 
    \\ &\leq (\norm{\boldsymbol{y}^{l} - \boldsymbol{\hat{y}}}^2 + \norm{S^{l} - \hat{S}}^2) 
    \\ &- 2\xi^l\left(D(\boldsymbol{y}^l,S^l) - \hat{D}^l\right) + (\xi^l)^2\left(\norm{d^l_{\boldsymbol{y}}}^2 + \norm{d^l_S}^2\right).
\end{align*}

Since we employ the modified Polyak's stepsize, we have
\begin{align*}
    (\norm{\boldsymbol{y}^{l+1} - \boldsymbol{\hat{y}}}^2 + \norm{S^{l+1} - \hat{S}}^2) 
    &\leq 
     (\norm{\boldsymbol{y}^{l} - \boldsymbol{\hat{y}}}^2 + \norm{S^{l} - \hat{S}}^2) \\&
    - \frac{D(\boldsymbol{y}^l,S^l) - \hat{D}^l}{\norm{d^l_{\boldsymbol{y}}}^2 + \norm{d^l_S}^2}.
\end{align*}
Using a telescoping sum, the previous bound yields the following inequality:
\begin{align*}
    &(\norm{\boldsymbol{y}^{l+1} - \boldsymbol{y}^{*}_{sub}}^2 + \norm{S^{l+1} - S^{*}_{sub}}^2) 
    \\&\leq
     (\norm{\boldsymbol{y}^{0} - \boldsymbol{y}^{*}_{sub}}^2 + \norm{S^{0} - S^{*}_{sub}}^2) 
     \\ & - \sum_{m = 0}^{l}\frac{D(\boldsymbol{y}^m,S^m) 
     - \hat{D}^m}{\norm{d^m_{\boldsymbol{y}}}^2 + \norm{d^m_S}^2}.
\end{align*}
%where $(Y^*,S^*)$ is where the minima $D^*$ is attained. 
As a result, in the limit as $l$ goes to infinity, it holds that %\derya{we use $i$ to indicate request pairs.}
\begin{align}\label{acumsum}
    \lim_{t \to \infty} \sum_{m = 0}^{t}\frac{D(\boldsymbol{y}^m,S^m) - \hat{D}^m}{\norm{d^m_{\boldsymbol{y}}}^2 + \norm{d^m_S}^2}  
    &\leq (\norm{\boldsymbol{y}^{0} - \boldsymbol{y}^{*}_{sub}}^2 
    \nonumber\\&+ \norm{S^{0} - S^{*}_{sub}}^2) < \infty.
\end{align}

On the other hand, assume that the subgradient norm $\norm{d^m_{\boldsymbol{y}}}^2 + \norm{d^m_S}^2$ is upper bounded by some positive scalar $U$ in $m$-th step for any $m$. Then we obtain
\begin{align*}
    \lim_{t \to \infty} \sum_{m = 0}^{t}\frac{D(\boldsymbol{y}^m,S^m) - \hat{D}^m}{\norm{d^m_{\boldsymbol{y}}}^2 + \norm{d^m_S}^2} \geq \sum_{m = 0}^t \frac{\delta}{U^2} = \infty,
\end{align*}
which contradicts with (\ref{acumsum}). Thus the subgradient projection algorithm with modified Polyak's stepsize converges to $D^*_{sub}$.

\section{Proof of Lemma \ref{LemmaConvergenceRate}} \label{App:LemmaConvergenceRate}%\derya{notation conflict. k vs t}
We first show that (\ref{SharpSetofMinima}) holds in our case. 

By Lemma \ref{LemmaConvergence}, with diminishing $\delta_t$, the algorithm is guaranteed to converge to local minima $D^*_{sub}$. Also, since the constraint set is finite, there exist a positive integer $t_C$, such that $(\boldsymbol{y}^t,S^t), \forall t \geq t_C$ are in a subset $(\mathcal{D}^C_{Y},\mathcal{D}^C_S) \subseteq (\mathcal{D}_{Y},\mathcal{D}_S)$ where the objective function $D(\boldsymbol{y},S)$ is minimized on $(\mathcal{D}^C_{Y},\mathcal{D}^C_S)$ by any variable in the set $\left(\boldsymbol{y}^*_{sub},S^*_{sub}\right)$, i.e. $D^*_{sub} = \min_{(\boldsymbol{y},S) \in (\mathcal{D}^C_{Y},\mathcal{D}^C_S)} D(\boldsymbol{y},S)$, with the corresponding minima set $\left(\boldsymbol{y}^*_{sub},S^*_{sub}\right)$ be compact and simply connected.

In (\ref{SharpSetofMinima}), if $D(\boldsymbol{y},S) = D^*_{sub}$, then $\left(\boldsymbol{y},S\right) \in \left(\boldsymbol{y}^*_{sub},S^*_{sub}\right)$, obviously (\ref{SharpSetofMinima}) holds with both side equal to $0$. If $D(\boldsymbol{y},S) > D^*_{sub}$, then consider the shortest ascending path from any point in the set $(\boldsymbol{y}^*_{sub},S^*_{sub})$ to $(\boldsymbol{y},S)$, by the generalization of mean value theorem to the subgradient case 
 (see Theorem 4.2 in \cite{studniarski1985mean}) we have
\begin{equation*}
    D(\boldsymbol{y},S) - D^*_{sub} \geq \norm{d_{min}} L(\boldsymbol{y},S)
\end{equation*}
where $L$ is given in (\ref{L_function}) and
\begin{align}
d_{min} = \min_{\substack{ (\boldsymbol{y}^\prime, S^\prime) \in \\ (\mathcal{D}^C_{Y},\mathcal{D}^C_S) \setminus (\boldsymbol{y}^*_{sub},S^*_{sub})}} \sqrt{\norm{d_{\boldsymbol{y}}}^2 + \norm{d_S}^2}    \nonumber
\end{align}
is the minimum subgradient that is not in the minima set.

Meanwhile, since the set $(\mathcal{D}^C_{Y},\mathcal{D}^C_S)$ can be chosen finite and compact with the gradient on variable $S$ not being arbitrarily close to $\textbf{0}$ or to infinity. Let $U \geq \norm{d_S} \geq \mu$ with some positive scalar $U$ and $\mu$, then $d_{min} \geq \mu$, thus we have (\ref{SharpSetofMinima}) holds within the set $(\mathcal{D}^C_{Y},\mathcal{D}^C_S)$.

Then we show (\ref{convergelinearly}). As in the proof of Lemma \ref{LemmaConvergence}, we have
\begin{align*}
    &\norm{\boldsymbol{y}^{t+1} - y}^2 + \norm{S^{t+1} - s}^2 
    \\&\leq 
     \norm{\boldsymbol{y}^{t} - y}^2 + \norm{S^{t} - s}^2 - \frac{D(\boldsymbol{y}^t,S^t) - D^*_{sub}}{\norm{d^t_{\boldsymbol{y}}}^2 + \norm{d^t_S}^2}
\end{align*}
for any $(y,s) \in (\boldsymbol{y}^*_{sub},S^*_{sub})$ and any $t \geq t_C$, which leads to
\begin{align*}
    L(\boldsymbol{y}^{t+1},S^{t+1})^2 \leq L(\boldsymbol{y}^t,S^t)^2 
    - \frac{D(\boldsymbol{y}^t,S^t) - D^*_{sub}}{U^2}.
\end{align*}

Since (\ref{SharpSetofMinima}) holds, we further have
\begin{align*}
    L(\boldsymbol{y}^{t+1},S^{t+1})^2 &\leq L(\boldsymbol{y}^t,S^t)^2 
    - \frac{\mu^2}{U^2}L(\boldsymbol{y}^t,S^t)^2
    \\& = \left(1 - \frac{\mu^2}{U^2}\right)L(\boldsymbol{y}^t,S^t)^2.
\end{align*}
Thus, (\ref{convergelinearly}) holds, the algorithm converges linearly after the $t_C$-th step where $t_C$ is a finite number. 

%%%%%%%%%%%%%%%%%%
%\section*{Acknowledgment}
%We would like to thank X.

%\begin{spacing}{0.95}
\bibliographystyle{IEEEtran}
\bibliography{JointRoutingCachingPowerOptimizationReferences}

% Generated by IEEEtran.bst, version: 1.14 (2015/08/26)
\begin{thebibliography}{10}
\providecommand{\url}[1]{#1}
\csname url@samestyle\endcsname
\providecommand{\newblock}{\relax}
\providecommand{\bibinfo}[2]{#2}
\providecommand{\BIBentrySTDinterwordspacing}{\spaceskip=0pt\relax}
\providecommand{\BIBentryALTinterwordstretchfactor}{4}
\providecommand{\BIBentryALTinterwordspacing}{\spaceskip=\fontdimen2\font plus
\BIBentryALTinterwordstretchfactor\fontdimen3\font minus
  \fontdimen4\font\relax}
\providecommand{\BIBforeignlanguage}[2]{{%
\expandafter\ifx\csname l@#1\endcsname\relax
\typeout{** WARNING: IEEEtran.bst: No hyphenation pattern has been}%
\typeout{** loaded for the language `#1'. Using the pattern for}%
\typeout{** the default language instead.}%
\else
\language=\csname l@#1\endcsname
\fi
#2}}
\providecommand{\BIBdecl}{\relax}
\BIBdecl

\bibitem{Andrews2014}
J.~G. Andrews, S.~Buzzi, W.~Choi, S.~Hanly, A.~Lozano, A.~C.~K. Soong, and
  J.~C. Zhang, ``What will {5G} be?'' \emph{IEEE J. Sel. Areas Commun.},
  vol.~32, no.~6, pp. 1065--1082, Jun. 2014.

\bibitem{Shanmugam2013}
K.~Shanmugam, N.~Golrezaei, A.~G. Dimakis, A.~F. Molisch, and G.~Caire,
  ``{FemtoCaching}: {Wireless} content delivery through distributed caching
  helpers,'' \emph{IEEE Trans. Inf. Theory}, vol.~59, no.~12, pp. 8402--13,
  Sep. 2013.

\bibitem{Che2002}
H.~Che, Y.~Tung, and Z.~Wang, ``Hierarchical web caching systems: {Modeling},
  design and experimental results,'' \emph{IEEE J. Sel. Areas Commun.},
  vol.~20, no.~7, pp. 1305--14, Sep. 2002.

\bibitem{breslau1999web}
L.~Breslau, P.~Cao, L.~Fan, G.~Phillips, S.~Shenker \emph{et~al.}, ``Web
  caching and {Zipf}-like distributions: Evidence and implications,'' in
  \emph{Proc., IEEE Infocom}, vol.~1, no.~1, Mar. 1999, pp. 126--134.

\bibitem{Martina2014}
M.~Garetto, E.~Leonardi, and V.~Martina, ``A unified approach to the
  performance analysis of caching systems,'' \emph{ACM Trans. Modeling and
  Perf. Eval. Comp. Systems}, vol.~1, no.~3, p.~12, May 2016.

\bibitem{CohShe2002}
E.~Cohen and S.~Shenker, ``Replication strategies in unstructured peer-to-peer
  networks,'' in \emph{Proc., ACM Sigcomm}, Aug. 2002.

\bibitem{dan1990approximate}
A.~Dan and D.~Towsley, ``An approximate analysis of the {LRU and FIFO} buffer
  replacement schemes,'' in \emph{Proc., ACM Sigmetrics Conf. Measurement and
  Modeling of Comp. Systems}, Apr. 1990, pp. 143--152.

\bibitem{berger2014exact}
D.~S. Berger, P.~Gland, S.~Singla, and F.~Ciucu, ``Exact analysis of {TTL}
  cache networks,'' \emph{Performance Evaluation}, vol.~79, pp. 2--23, Sep.
  2014.

\bibitem{leconte2016placing}
M.~Leconte, G.~Paschos, L.~Gkatzikis, M.~Draief, S.~Vassilaras, and
  S.~Chouvardas, ``Placing dynamic content in caches with small population,''
  in \emph{Proc., IEEE Infocom}, Apr. 2016, pp. 1--9.

\bibitem{traverso2013temporal}
S.~Traverso, M.~Ahmed, M.~Garetto, P.~Giaccone, E.~Leonardi, and S.~Niccolini,
  ``Temporal locality in today's content caching: why it matters and how to
  model it,'' \emph{ACM Sigcomm Computer Communication Review}, vol.~43, no.~5,
  pp. 5--12, Nov. 2013.

\bibitem{leonardi2015least}
E.~Leonardi and G.~Luca, ``Least recently used caches under the shot noise
  model,'' in \emph{Proc., IEEE Infocom}, Apr. 2015, pp. 2281--2289.

\bibitem{fricker2012versatile}
C.~Fricker, P.~Robert, and J.~Roberts, ``A versatile and accurate approximation
  for {LRU} cache performance,'' in \emph{Proc., IEEE ITC}, Sep. 2012.

\bibitem{IoannidisYeh2016}
S.~Ioannidis and E.~M. Yeh, ``Adaptive caching networks with optimality
  guarantees,'' in \emph{Proc., ACM Sigmetrics}, Jun. 2016, pp. 113 -- 124.

\bibitem{MaddahAli2013Journal}
M.~A. Maddah-Ali and U.~Niesen, ``Fundamental limits of caching,'' \emph{IEEE
  Trans. Inf. Theory}, vol.~60, no.~5, pp. 2856--67, May 2014.

\bibitem{Blaszczyszyn2014}
B.~B\l{}aszczyszyn and A.~Giovanidis, ``Optimal geographic caching in cellular
  networks,'' in \emph{Proc., IEEE ICC}, UK, Jun. 2015, pp. 3358--3363.

\bibitem{Golrezaei2014TWC}
N.~Golrezaei, P.~Mansourifard, A.~F. Molisch, and A.~G. Dimakis, ``Base-station
  assisted device-to-device communications for high-throughput wireless video
  networks,'' \emph{IEEE Trans. Wireless Comm.}, vol.~13, no.~7, pp.
  3665--3676, Jul. 2014.

\bibitem{ji2015fundamental}
M.~Ji, G.~Caire, and A.~F. Molisch, ``Fundamental limits of caching in wireless
  {D2D} networks,'' \emph{IEEE Trans. Inf. Theory}, vol.~62, no.~2, pp.
  849--869, Dec. 2015.

\bibitem{Malak2016twc}
D.~Malak, M.~Al-Shalash, and J.~G. Andrews, ``Spatially correlated content
  caching for device-to-device communications,'' \emph{IEEE Trans. Wireless
  Commun.}, vol.~17, no.~1, pp. 56--70, Jan. 2018.

\bibitem{Ji2014}
M.~Ji, G.~Caire, and A.~F. Molisch, ``Fundamental limits of caching in wireless
  {D2D} networks,'' \emph{IEEE Trans. Inf. Theory}, vol.~62, no.~2, pp.
  849--869, Feb. 2016.

\bibitem{niesen2012caching}
U.~Niesen, D.~Shah, and G.~W. Wornell, ``Caching in wireless networks,''
  \emph{IEEE Trans. Inf. Theory}, vol.~58, no.~10, pp. 6524--6540, Jun. 2012.

\bibitem{ji2015throughput}
M.~Ji, G.~Caire, and A.~F. Molisch, ``The throughput-outage tradeoff of
  wireless one-hop caching networks,'' \emph{IEEE Trans. Inf. Theory}, vol.~61,
  no.~12, pp. 6833--6859, Oct. 2015.

\bibitem{zhang2016efficient}
L.~Zhang, M.~Xiao, G.~Wu, and S.~Li, ``Efficient scheduling and power
  allocation for {D2D}-assisted wireless caching networks,'' \emph{IEEE Trans.
  Commun.}, vol.~64, no.~6, pp. 2438--2452, Apr. 2016.

\bibitem{Malak2016_D2DCaching}
D.~Malak, M.~Al-Shalash, and J.~G. Andrews, ``Optimizing content caching to
  maximize the density of successful receptions in device-to-device
  networking,'' \emph{IEEE Trans. Commun.}, vol.~64, no.~10, pp. 4365--4380,
  Oct. 2016.

\bibitem{jeon2015caching}
S.-W. Jeon, S.-N. Hong, M.~Ji, and G.~Caire, ``Caching in wireless multihop
  device-to-device networks,'' in \emph{Proc., IEEE ICC}, Jun. 2015, pp.
  6732--6737.

\bibitem{IoannidisYeh2017}
S.~Ioannidis and E.~M. Yeh, ``Jointly optimal routing and caching for arbitrary
  network topologies,'' in \emph{Proc., ACM ICN}, Sep. 2017.

\bibitem{dehghan2019utility}
M.~Dehghan, L.~Massoulie, D.~Towsley, D.~S. Menasche, and Y.~C. Tay, ``A
  utility optimization approach to network cache design,'' \emph{IEEE/ACM
  Trans. Netw.}, vol.~27, no.~3, pp. 1013--1027, May 2019.

\bibitem{YehHoCuiBuLiuLeo2014}
E.~M. Yeh, T.~Ho, Y.~Cui, M.~Burd, R.~Liu, and D.~Leong, ``{Vip: A framework
  for joint dynamic forwarding and caching in named data networks},'' in
  \emph{Proc., ACM ICN}, Sep. 2014.

\bibitem{mahdian2017mindelay}
M.~Mahdian and E.~Yeh, ``Mindelay: {Low-latency} forwarding and caching
  algorithms for information-centric networks,'' \emph{arXiv preprint
  arXiv:1710.05130}, Oct. 2017.

\bibitem{BaeRajSwa2008}
I.~Baev, R.~Rajaraman, and C.~Swamy, ``Approximation algorithms for data
  placement problems,'' \emph{SIAM J. Comput.}, vol.~38, no.~4, pp. 1411--29,
  Aug. 2008.

\bibitem{li2018dr}
J.~Li, T.~K. Phan, W.~K. Chai, D.~Tuncer, G.~Pavlou, D.~Griffin, and M.~Rio,
  ``Dr-cache: {Distributed} resilient caching with latency guarantees,'' in
  \emph{Proc., IEEE Infocom}, Apr. 2018, pp. 441--449.

\bibitem{dehghan2015complexity}
M.~Dehghan, A.~Seetharam, B.~Jiang, T.~He, T.~Salonidis, J.~Kurose, D.~Towsley,
  and R.~Sitaraman, ``On the complexity of optimal routing and content caching
  in heterogeneous networks,'' in \emph{Proc., IEEE Infocom}, Apr. 2015, pp.
  936--944.

\bibitem{abedini2013content}
N.~Abedini and S.~Shakkottai, ``Content caching and scheduling in wireless
  networks with elastic and inelastic traffic,'' \emph{IEEE/ACM Trans. Netw.},
  vol.~22, no.~3, pp. 864--874, May 2013.

\bibitem{ChenPapKoun2016}
Z.~Chen, N.~Pappas, and M.~Kountouris, ``Probabilistic caching in wireless
  {D2D} networks: Hit optimal vs. throughput optimal,'' \emph{IEEE Commun.
  Letters}, vol.~21, no.~3, pp. 584--587, Mar. 2017.

\bibitem{KeeBlaMuh2016}
B.~B\l{}aszczyszyn, P.~Keeler, and P.~Muhlethaler, ``Optimizing spatial
  throughput in device-to-device networks,'' in \emph{Proc., IEEE Int. Symp.
  Modeling and Optimization in Mobile, Ad Hoc, and Wireless Netw. (WiOpt)}, May
  2017.

\bibitem{xiao2003utility}
M.~Xiao, N.~B. Shroff, and E.~K. Chong, ``A utility-based power-control scheme
  in wireless cellular systems,'' \emph{IEEE/ACM Trans. Netw.}, vol.~11, no.~2,
  pp. 210--221, Apr. 2003.

\bibitem{gupta2019optimization}
H.~Gupta, N.~He, and R.~Srikant, ``Optimization and learning algorithms for
  stochastic and adversarial power control,'' in \emph{Proc., IEEE WiOpt}, Jun.
  2019.

\bibitem{hanly1995algorithm}
S.~V. Hanly, ``An algorithm for combined cell-site selection and power control
  to maximize cellular spread spectrum capacity,'' \emph{IEEE J. Sel. Areas
  Commun.}, vol.~13, no.~7, pp. 1332--1340, Sep. 1995.

\bibitem{hanly1999power}
S.~V. Hanly and D.-N. Tse, ``Power control and capacity of spread spectrum
  wireless networks,'' \emph{Automatica}, vol.~35, no.~12, pp. 1987--2012, Dec.
  1999.

\bibitem{hanly1996capacity}
S.~V. Hanly, ``Capacity and power control in spread spectrum macrodiversity
  radio networks,'' \emph{IEEE Trans. Commun.}, vol.~44, no.~2, pp. 247--256,
  Feb. 1996.

\bibitem{LiaDimTas2017}
C.~Liaskos, X.~Dimitropoulos, and L.~Tassiulas, ``Backpressure on the backbone:
  A lightweight, non-intrusive traffic engineering approach,'' \emph{IEEE
  Trans. Netw. and Serv. Manag.}, vol.~14, no.~1, pp. 176--190, Nov. 2016.

\bibitem{DviCar2009}
A.~Dvir and N.~Carlsson, ``Power-aware recovery for geographic routing,'' in
  \emph{Proc., IEEE WCNC}, vol.~5, Apr. 2009, pp. 2851--2856.

\bibitem{ChaTas2000}
J.~H. Chang and L.~Tassiulas, ``Energy conserving routing in wireless ad-hoc
  networks,'' in \emph{Proc., IEEE Infocom}, Mar. 2000, pp. 22--31.

\bibitem{XiYeh2008}
Y.~Xi and E.~M. Yeh, ``Node-based optimal power control, routing, and
  congestion control in wireless networks,'' \emph{IEEE Trans. Inf. Theory},
  vol.~54, no.~9, pp. 4081--4106, Sep. 2008.

\bibitem{ChiHeXinCheWanLuStaAbd2006}
O.~Chipara, Z.~He, G.~Xing, Q.~Chen, X.~Wang, C.~Lu, J.~Stankovic, and
  T.~Abdelzaher, ``Real-time power-aware routing in sensor networks,'' in
  \emph{Proc., IEEE Int. Workshop Quality of Service}, Jun. 2006, pp. 83--92.

\bibitem{BarSarLor2013}
S.~Barbarossa, S.~Sardellitti, and P.~D. Lorenzo, ``Joint allocation of
  computation and communication resources in multiuser mobile cloud
  computing,'' in \emph{Proc., IEEE Workshop on Signal Process. Advances in
  Wireless Commu. (SPAWC)}, Jun. 2013, pp. 26--30.

\bibitem{SarScuBar2015}
S.~Sardellitti, G.~Scutari, and S.~Barbarossa, ``Joint optimization of radio
  and computational resources for multicell mobile-edge computing,'' \emph{IEEE
  Trans. Signal and Inf. Process. over Netw.}, vol.~1, no.~2, pp. 89--103, Jun.
  2015.

\bibitem{LiuMaoZhaLet2016}
J.~Liu, Y.~Mao, J.~Zhang, and K.~B. Letaief, ``Delay-optimal computation task
  scheduling for mobile-edge computing systems,'' in \emph{Proc., IEEE ISIT},
  Jul. 2016, pp. 1451--1455.

\bibitem{Gallager1977}
R.~Gallager, ``A minimum delay routing algorithm using distributed
  computation,'' \emph{IEEE Trans. Commun.}, vol.~25, no.~1, pp. 73--85, Jan.
  1977.

\bibitem{OueCalStrBar2015}
J.~Oueis, E.~C. Strinati, and S.~Barbarossa, ``The fog balancing: Load
  distribution for small cell cloud computing,'' in \emph{Proc., IEEE VTC}, May
  2015.

\bibitem{yemini2019fog}
M.~Yemini and A.~J. Goldsmith, ``Fog optimization via virtual cells in cellular
  network resource allocation,'' \emph{arXiv preprint arXiv:1901.06669}, Jan.
  2019.

\bibitem{BorGupWal2010}
S.~Borst, V.~Gupta, and A.~Walid, ``Distributed caching algorithms for content
  distribution networks,'' in \emph{Proc., IEEE Infocom}, Mar. 2010.

\bibitem{fleischer2006tight}
L.~Fleischer, M.~X. Goemans, V.~S. Mirrokni, and M.~Sviridenko, ``Tight
  approximation algorithms for maximum general assignment problems,'' in
  \emph{Proc., ACM-SIAM Symp. Discrete Algorithms}.\hskip 1em plus 0.5em minus
  0.4em\relax Society for Industrial and Applied Mathematics, Jan. 2006, pp.
  611--620.

\bibitem{IoanYeh2018tnet}
S.~Ioannidis and E.~Yeh, ``Adaptive caching networks with optimality
  guarantees,'' \emph{IEEE/ACM Trans. Netw.}, vol.~26, no.~2, pp. 737--750,
  Feb. 2018.

\bibitem{fedor2007problem}
S.~Fedor and M.~Collier, ``On the problem of energy efficiency of multi-hop vs
  one-hop routing in wireless sensor networks,'' in \emph{Proc., Int. Conf.
  Advanced Inf. Netw. and Applications Workshops}, vol.~2, May 2007, pp.
  380--385.

\bibitem{pevsovic2010single}
U.~M. Pe{\v{s}}ovi{\'c}, J.~J. Mohorko, K.~Benki{\v{c}}, and {\v{Z}}.~F.
  {\v{C}}u{\v{c}}ej, ``Single-hop vs. multi-hop--energy efficiency analysis in
  wireless sensor networks,'' in \emph{Proc., Telecommunications Forum}, Nov.
  2010.

\bibitem{haenggi2005routing}
M.~Haenggi and D.~Puccinelli, ``Routing in ad hoc networks: a case for long
  hops,'' \emph{IEEE Commun. Mag.}, vol.~43, no.~10, pp. 93--101, Oct. 2005.

\bibitem{dhillon2012modeling}
H.~S. Dhillon, R.~K. Ganti, F.~Baccelli, and J.~G. Andrews, ``Modeling and
  analysis of {K}-tier downlink heterogeneous cellular networks,'' \emph{IEEE
  J. Sel. Areas Commun.}, vol.~30, no.~3, pp. 550--560, Mar. 2012.

\bibitem{rappaport1996wireless}
T.~S. Rappaport, \emph{Wireless Communications: Principles and Practice},
  2nd~ed.\hskip 1em plus 0.5em minus 0.4em\relax Prentice Hall PTR, New Jersey,
  1996.

\bibitem{Goldsmith2005}
A.~Goldsmith, \emph{Wireless communications}.\hskip 1em plus 0.5em minus
  0.4em\relax Cambridge University Press, 2005.

\bibitem{GoeWil1994}
M.~X. Goemans and D.~P. Williamson, ``New 3/4- approximation algorithms for the
  maximum satisfiability problem,'' \emph{SIAM J. Discrete Math.}, vol.~7,
  no.~4, pp. 656--666, Nov. 1994.

\bibitem{beck2017first}
A.~Beck, \emph{First-order methods in optimization}.\hskip 1em plus 0.5em minus
  0.4em\relax SIAM, 2017, vol.~25.

\bibitem{ruszczynski2011nonlinear}
A.~Ruszczynski, \emph{Nonlinear optimization}.\hskip 1em plus 0.5em minus
  0.4em\relax Princeton University Press, 2011.

\bibitem{bertsekas1998nonlinear}
D.~P. Bertsekas, W.~Hager, and O.~Mangasarian, \emph{Nonlinear
  Programming}.\hskip 1em plus 0.5em minus 0.4em\relax Athena Scientific
  Belmont, MA, 1998.

\bibitem{mahloo2014cost}
M.~Mahloo, P.~Monti, J.~Chen, and L.~Wosinska, ``Cost modeling of backhaul for
  mobile networks,'' in \emph{Proc., IEEE ICC Workshops}, Jun. 2014, pp.
  397--402.

\bibitem{AgeSvi2004}
A.~A. Ageev and M.~I. Sviridenko, ``{Pipage rounding: A new method of
  constructing algorithms with proven performance guarantee},'' \emph{J. Comb.
  Optim.}, vol.~8, no.~3, pp. 307--328, Sep. 2004.

\bibitem{scheimberg1992descent}
S.~Scheimberg and P.~Oliveira, ``Descent algorithm for a class of convex
  nondifferentiable functions,'' \emph{Journal of Optimization Theory and
  Applications}, vol.~72, no.~2, pp. 269--297, Feb. 1992.

\bibitem{Boyd2009}
S.~Boyd and L.~Vandenberghe, \emph{{Convex Optimization}}.\hskip 1em plus 0.5em
  minus 0.4em\relax Cambridge university press, 2009.

\bibitem{huang2006distributed}
J.~Huang, R.~A. Berry, and M.~L. Honig, ``Distributed interference compensation
  for wireless networks,'' \emph{IEEE J. Sel. Areas Commun.}, vol.~24, no.~5,
  pp. 1074--1084, May 2006.

\bibitem{floudas2013deterministic}
C.~A. Floudas, \emph{Deterministic Global Optimization: Theory, Methods and
  Applications}.\hskip 1em plus 0.5em minus 0.4em\relax Springer Science \&
  Business Media, 2013, vol.~37.

\bibitem{gorski2007biconvex}
J.~Gorski, F.~Pfeuffer, and K.~Klamroth, ``Biconvex sets and optimization with
  biconvex functions: a survey and extensions,'' \emph{Mathematical methods of
  operations research}, vol.~66, no.~3, pp. 373--407, Dec. 2007.

\bibitem{floudas1990global}
C.~A. Floudas and V.~Visweswaran, ``A global optimization algorithm {(GOP)} for
  certain classes of nonconvex {NLPs—I. Theory},'' \emph{Computers \&
  Chemical Engineering}, vol.~14, no.~12, pp. 1397--1417, Dec. 1990.

\bibitem{polyak1987introduction}
B.~T. Polyak, ``Introduction to optimization. optimization software,''
  \emph{Inc., Publications Division, New York}, vol.~1, 1987.

\bibitem{lloyd1982least}
S.~Lloyd, ``Least squares quantization in {PCM},'' \emph{IEEE Trans. Inf.
  Theory}, vol.~28, no.~2, pp. 129--137, Mar. 1982.

\bibitem{studniarski1985mean}
M.~Studniarski, ``Mean value theorems and sufficient optimality conditions for
  nonsmooth functions,'' \emph{J. Math. Anal. Appl.}, vol. 111, no.~2, pp.
  313--326, Nov. 1985.

\end{thebibliography}
%\end{spacing}

\end{document}